\newif\ifsubmission
\newif\ifnotes
\newcommand{\authnote}[3]{\textcolor{#3}{[{\footnotesize {\bf #1:} { {#2}}}]}}
\newcommand{\james}[1]{\ifnotes \authnote{J}{#1}{blue} \fi}
\newcommand{\dakshita}[1]{\ifnotes \authnote{D}{#1}{orange} \fi}
\newcommand{\akshay}[1]{\ifnotes \authnote{A}{#1}{red} \fi}
\newif\ifforlater
\newtheorem{axiom}[theorem]{Axiom}
\newtheorem{physicsaxiom}[theorem]{Physics Axiom}
\newtheorem{importedtheorem}[theorem]{Imported Theorem}
\newtheorem{importedlemma}[theorem]{Imported Lemma}
\newtheorem{informaltheorem}[theorem]{Informal Theorem}
\newtheorem{physicstheorem}[theorem]{Physical Theorem}
\newtheorem{claim}[theorem]{Claim}
\newtheorem{subclaim}[theorem]{SubClaim}
\newtheorem{fact}[theorem]{Fact}
\newtheorem{construction}[theorem]{Construction}
\Crefname{importedtheorem}{Imported Theorem}{Imported Theorems}
\Crefname{importedlemma}{Imported Lemma}{Imported Lemma}
\Crefname{theorem}{Theorem}{Theorems}
\Crefname{proposition}{Proposition}{Propositions}
\Crefname{claim}{Claim}{Claims}
\Crefname{subclaim}{SubClaim}{SubClaims}
\Crefname{subsubclaim}{SubSubClaim}{SubSubClaims}
\Crefname{lemma}{Lemma}{Lemmas}
\Crefname{conjecture}{Conjecture}{Conjectures}
\Crefname{corollary}{Corollary}{Corollaries}
\Crefname{construction}{Construction}{Constructions}
\Crefname{property}{Property}{Properties}
\theoremstyle{definition}
\newtheorem{assumption}[theorem]{Assumption}
\newtheorem{notation}[theorem]{Notation}
\Crefname{definition}{Definition}{Definitions}
\Crefname{assumption}{Assumption}{Assumptions}
\Crefname{notation}{Notation}{Notations}
\theoremstyle{remark}
\newtheorem{comment}[theorem]{Comment}
\Crefname{question}{Question}{Questions}
\Crefname{remark}{Remark}{Remarks}
\Crefname{comment}{Comment}{Comments}
\Crefname{fact}{Fact}{Facts}
\Crefname{step}{Step}{Steps}
\newtheorem{theorem}{Theorem}[section]
\newtheorem{importedtheorem}[theorem]{Imported Theorem}
\newtheorem{informaltheorem}[theorem]{Informal Theorem}
\newtheorem{claim}[theorem]{Claim}
\newtheorem{lemma}[theorem]{Lemma}
\newtheorem{corollary}[theorem]{Corollary}
\newtheorem{definition}[theorem]{Definition}
\newtheorem{remark}[theorem]{Remark}
\Crefname{importedtheorem}{Imported Theorem}{Imported Theorems}
\Crefname{importedlemma}{Imported Lemma}{Imported Lemmas}
\Crefname{theorem}{Theorem}{Theorems}
\Crefname{proposition}{Proposition}{Propositions}
\Crefname{claim}{Claim}{Claims}
\Crefname{subclaim}{SubClaim}{SubClaims}
\Crefname{subsubclaim}{SubSubClaim}{SubSubClaims}
\Crefname{lemma}{Lemma}{Lemmas}
\Crefname{conjecture}{Conjecture}{Conjectures}
\Crefname{corollary}{Corollary}{Corollaries}
\Crefname{construction}{Construction}{Constructions}
\Crefname{property}{Property}{Properties}
\theoremstyle{definition}
\Crefname{definition}{Definition}{Definitions}
\Crefname{assumption}{Assumption}{Assumptions}
\Crefname{notation}{Notation}{Notations}
\theoremstyle{remark}
\Crefname{question}{Question}{Questions}
\Crefname{comment}{Comment}{Comments}
\Crefname{fact}{Fact}{Facts}
\Crefname{step}{Step}{Steps}
\newcommand{\secp}{\lambda}
\def\cA{{\cal A}}
\def\cB{{\cal B}}
\def\cD{{\cal D}}
\def\cF{{\cal F}}
\def\cH{{\cal H}}
\def\cL{{\cal L}}
\def\cP{{\cal P}}
\def\cR{{\cal R}}
\def\cS{{\cal S}}
\def\cV{{\cal V}}
\def\cX{{\cal X}}
\def\cY{{\cal Y}}
\def\regA{{\cal A}}
\def\regD{{\cal D}}
\def\regK{{\cal K}}
\def\regP{{\cal P}}
\def\regR{{\cal R}}
\def\regS{{\cal S}}
\def\regX{{\cal X}}
\def\regY{{\cal Y}}
\def\bbC{{\mathbb C}}
\def\bbI{{\mathbb I}}
\def\bbN{{\mathbb N}}
\def\poly{{\rm poly}}
\def\negl{{\rm negl}}
\newcommand{\Setup}{\mathsf{Setup}}
\newcommand{\Com}{\mathsf{Com}}
\newcommand{\Sim}{\mathsf{Sim}}
\newcommand{\Verify}{\mathsf{Verify}}
\DeclareMathOperator*{\expectation}{\mathbb{E}}
\newcommand{\E}{\expectation}
\newcommand{\intersect}{\cap}
\newcommand{\ek}{\mathsf{ek}}
\newenvironment{boxfig}[2]{\begin{figure}[#1]\fbox{
    \begin{minipage}{\linewidth}
    \vspace{0.2em}\makebox[0.025\linewidth]{}    \begin{minipage}{0.95\linewidth}{{#2 }}
    \end{minipage}\vspace{0.2em}\end{minipage}}}{\end{figure}}
\newcommand{\pprotocol}[4]{
\begin{boxfig}{h!}{
\begin{center}
\textbf{#1}
\end{center}
    {\footnotesize#4}
\vspace{0.2em} } \caption{\label{#3} #2}
\end{boxfig}
}
\newcommand{\protocol}[4]{
\pprotocol{#1}{#2}{#3}{#4} }
\newcommand{\crs}{\mathsf{crs}}
\newcommand{\state}{\mathsf{st}}
\renewcommand{\partial}{\mathsf{partial}}
\newcommand{\abort}{\mathsf{abort}}
\newcommand{\Ext}{\mathsf{Ext}}
\newcommand{\TD}{\mathsf{TD}}
\newcommand{\Exp}{\mathsf{Exp}}
\newcommand{\Adv}{\mathsf{Adv}}
\newcommand{\OT}{\mathsf{OT}}
\newcommand{\maj}{\mathsf{maj}}
\newcommand{\sA}{\mathsf{A}}
\newcommand{\sB}{\mathsf{B}}
\newcommand{\sD}{\mathsf{D}}
\newcommand{\msg}{\mathsf{msg}}
\newcommand{\ctl}{\mathsf{ctl}}
\newcommand{\NIZK}{\mathsf{NIZK}}
\newcommand{\Prove}{\mathsf{Prove}}
\newcommand{\Ver}{\mathsf{Ver}}
\newcommand{\PRG}{\mathsf{PRG}}
\newcommand{\hk}{\mathsf{hk}}
\newcommand{\cm}{\mathsf{cm}}
\newcommand{\ck}{\mathsf{ck}}
\newcommand{\ExtGen}{\mathsf{ExtGen}}
\newcommand{\Samp}{\mathsf{Samp}}
\newcommand{\ROT}{\mathsf{ROT}}
\newcommand{\CI}{\mathsf{CI}}
\newcommand{\CL}{\mathsf{CL}}
\newcommand{\QU}{\mathsf{QU}}
\newcommand{\hw}{\mathsf{hw}}
\newcommand{\nonnegl}{\mathsf{non}\text{-}\mathsf{negl}}
\newcommand{\ots}{\mathsf{ots}}
\newcommand{\C}{\mathsf{C}}
\newcommand{\NC}{\mathsf{NC}}
\newcommand{\EQ}{\mathsf{EQ}}
\newcommand{\QMA}{\mathsf{QMA}}
\newcommand{\pre}{\mathsf{pre}}
\newcommand{\chck}{\mathsf{check}}
\newcommand{\post}{\mathsf{post}}
\newcommand{\NAND}{\mathsf{NAND}}
\begin{document}
\setstcolor{red}
\ifsubmission
\title{Secure Computation with Shared EPR Pairs\\ \Large{(Or: How to Teleport in Zero-Knowledge)}}
\author{}

\institute{}
\else
\title{Secure Computation with Shared EPR Pairs\\ \Large{(Or: How to Teleport in Zero-Knowledge)}}
\author{James Bartusek\thanks{UC Berkeley. Email: \texttt{bartusek.james@gmail.com}} \
\and Dakshita Khurana\thanks{UIUC. Email: \texttt{dakshita@illinois.edu}} \ \and Akshayaram Srinivasan\thanks{Tata Institute of Fundamental Research. Email: \texttt{akshayaram.srinivasan@tifr.res.in}}}
\date{}
\fi
\maketitle


\begin{abstract}
    
    Can a sender non-interactively transmit one of two strings to a receiver without knowing which string was received? Does there exist minimally-interactive secure multiparty computation that only makes (black-box) use of symmetric-key primitives? We provide affirmative answers to these questions in a model where parties have access to shared EPR pairs, thus demonstrating the cryptographic power of this resource.

    
    
    \begin{itemize}
    
    \item First, we construct a one-shot (i.e., single message) string oblivious transfer (OT) protocol with random receiver bit in the shared EPR pairs model, assuming the (sub-exponential) hardness of LWE. 
    
    Building on this, we show that {\em secure teleportation through quantum channels} is possible. Specifically, given the description of any quantum operation $Q$, a sender with (quantum) input $\rho$ can send a single classical message that securely transmits $Q(\rho)$ to a receiver. That is, we realize an ideal quantum channel that takes input $\rho$ from the sender and provably delivers $Q(\rho)$ to the receiver without revealing any other information. 

    This immediately gives a number of applications in the shared EPR pairs model: (1) non-interactive secure computation of unidirectional \emph{classical} randomized functionalities, (2) NIZK for QMA from standard (sub-exponential) hardness assumptions, and (3) a non-interactive \emph{zero-knowledge} state synthesis protocol.
    

    \item Next, we construct a two-round (round-optimal) secure multiparty computation protocol for classical functionalities in the shared EPR pairs model that is \emph{unconditionally-secure} in the (quantum-accessible) random oracle model. 
    
    \end{itemize}
    
    Classically, both of these results cannot be obtained without some form of correlated randomness shared between the parties, and the only known approach is to have a trusted dealer set up random (string) OT correlations. In the quantum world, we show that shared EPR pairs (which are simple and can be deterministically generated) are sufficient. At the heart of our work are novel techniques for making use of entangling operations to generate string OT correlations, and for instantiating the Fiat-Shamir transform using correlation-intractability in the quantum setting.

    
\end{abstract}

\ifsubmission
\else
\fi
\newpage
\section{Introduction}

Understanding the nature of shared entanglement is one of the most prominent goals of quantum information science, and its study has repeatedly unearthed surprisingly strong properties. A remarkable example of this is the quantum teleportation protocol of \cite{PhysRevLett.70.1895}, which demonstrated that shared EPR pairs \cite{PhysRev.47.777}, the most basic entangled resource, are ``complete'' for quantum communication using classical channels. That is, if Alice and Bob share EPR pairs a priori, then Alice can communicate an arbitrary state $\rho$ to Bob by sending just a single classical message. In particular, this result positions shared EPR pairs at the center of proposals for building a quantum internet. 

\subsection{Our contributions}

In this work, we investigate the \emph{cryptographic} power of shared EPR pairs. 

\paragraph{Secure Teleportation through a Quantum Channel.} First, we revisit the setting of quantum teleportation, which shows that shared EPR pairs and one-way classical communication give rise to a quantum channel implementing the identity map $\rho \to \rho$. We ask: what if Alice would instead like to send her state $\rho$ to Bob through some arbitrary quantum map $\rho \rightarrow Q(\rho)$?\footnote{We will also allow for preserving entanglement that $\rho$ may have with its environment, so technically we consider $Q$ to map a state on Alice's input register $\cA$ to a state on Bob's output register $\cB$.} 

Note that this is trivial given quantum teleportation if we allow either Alice or Bob to compute the map $\rho \to Q(\rho)$ for themselves. However, we are interested in guaranteeing that the effect of the protocol would be (computationally) ``no different'' than the effect of Alice inputting $\rho$ to an ``ideal'' channel $Q$, and Bob receiving $Q(\rho)$ on the other side, \emph{even if} Alice or Bob attempt to save extra information from or deviate from the protocol. In particular, we require each of the following three properties to hold against arbitrarily malicious adversaries: (1) Alice would not learn any side information created during the computation of $Q(\rho)$, (2) Bob would learn nothing about $\rho$ beyond $Q(\rho)$, and (3) Bob would be convinced that the state he received was actually computed as the output of the map $Q$ (on some input $\rho$). We show that this is possible under the sub-exponential hardness of learning with errors (LWE), a  standard post-quantum security assumption.

\begin{informaltheorem}
For any efficient quantum map $Q$, there exists a protocol for ``secure teleportation through $Q$'' in the shared EPR pairs model assuming the sub-exponential hardness of LWE. That is, there exists a one-shot\footnote{We use one-shot, one-message, and non-interactive interchangeably to refer to a protocol that consists of a single message from a sender to a receiver.} protocol in the shared EPR pairs model that computes the ideal functionality $\rho \to Q(\rho)$.
\end{informaltheorem}

\paragraph{Building Block: One-shot String OT.} The main building block for this protocol, and the key technical contribution of this paper, is a one-shot protocol for (random receiver bit) \emph{string oblivious transfer} (OT) in the shared EPR pairs model, which realizes an ideal funtionality that takes two strings $m_0,m_1$ from a sender Alice, and delivers $(b,m_b)$ to Bob for a uniformly random bit $b$.\footnote{Note that it is impossible to obtain a one-shot protocol for fixed receiver bit OT, since Bob does not send any message.}

\begin{informaltheorem}
\label{infthm:stringOT}
Assuming the sub-exponential hardness of LWE, there exists a simulation-secure one-shot protocol for (random receiver bit) string OT in the shared EPR pairs model. 
\end{informaltheorem}

Given such an OT protocol, we rely on two key previous results to obtain our final implication to secure teleportation through quantum channels: (1) \cite{C:GIKOS15} showed how to construct a one-message protocol for secure computation of any unidirectional \emph{classical} randomized functionality $f$ that maps $x \to f(x;r)$ given a one-message protocol for string OT, and (2) \cite{BCKM} (building on the work of \cite{10.1145/3519935.3520073}) showed (implicitly) how to construct a one-message protocol for secure computation of any unidirectional \emph{quantum} functionality given a one-message protocol for unidirectional classical functionalities.

\paragraph{Correlation Interactability.} There have been many recent works that show how to instantiate random oracles with a concrete hash function family and base the security of (classical) primitives such as NIZKs and SNARGs on standard cryptographic assumptions~\cite{CCHLRRW,C:PeiShi19,BKM20,JJ21,JKKZ21,HLR,CJJ21b,KalaiVZ21,CJJ21,EC:HJKS22,EC:part1,EC:part2}. These works proceed by constructing a special hash function family that satisfies the cryptographic notion of \emph{correlation-intractability}~\cite{JACM:CanGolHal04}. Ours is the first to apply correlation-intractability to a setting that involves \emph{quantum communication}, addressing technical barriers along the way. In fact, we obtain our one-message string OT protocol (refer to Informal Theorem~\ref{infthm:stringOT}) by utilizing correletion-intractability, which we discuss further in \cref{sec:tech-overview}.

\paragraph{The Multiparty Setting.} Next, we consider the \emph{multiparty} setting, where all pairs of parties have access to shared EPR pairs. If each party has their own private input $x_i$, and their goal is to compute $C(x_1,\dots,x_n)$ for some (classical) circuit $C$, they will have to use at least two rounds of interaction as single round protocols are susceptible to resetting attacks~\cite{C:HalLinPin11}. 

Classically, two rounds are known to suffice for secure multiparty computation, under the (minimal) assumption that two-round (chosen-input) oblivious transfer \cite{EC:GarSri18a,EC:BenLin18} protocols exist.\footnote{In chosen-input OT, the receiver specifies their input bit $b$, and they receive the message $m_b$. We contrast this with the notion of OT discussed above, where the receiver's bit $b$ is chosen uniformly at random.} In the classical setting, OT is a ``public-key-style'' primitive that provably cannot be built from ``minicrypt-style'' primitives, including hash functions modeled as a random oracle \cite{C:ImpRud88}. On the other hand, a line of work beginning with \cite{FOCS:CreKil88} and culminating with \cite{10.1007/978-3-030-84242-0_17,10.1007/978-3-030-77886-6_18} established that with quantum communication, it is possible to obtain oblivious transfer, and thus multiparty computation, from one-way functions or potentially even weaker assumptions~\cite{C:JiLiuSon18,AQY22,MY}. However, these protocols require many rounds, and the possibility of achieving \emph{round-optimal} (two-round) secure computation without public-key primitives was left open.

In this work, we show that round-optimal secure computation that makes black-box use of symmetric-key primitives (specifically, a random oracle) can be obtained in the shared EPR pairs model. 

\begin{informaltheorem}
There exists a two-round secure multiparty computation protocol in the shared EPR pairs model with either of the following properties.
\begin{itemize}
    \item Unconditional security in the quantum-accesible random oracle model (QROM).
    \item Computational security assuming (the black-box use of) non-interactive extractable commitments and hash functions that are correlation-intractable for efficient functions.
\end{itemize}
\end{informaltheorem}



\paragraph{Discussion: Towards Weaker Correlated Randomness.} In the classical world, it can be shown that without any form of correlated randomness shared between the parties, it is impossible to obtain either one-shot OT or two-round MPC (even with public-key primitives). Furthermore, we show in \cref{subsec:impossibility} that one-shot (random receiver bit) string OT is impossible in the classical common reference string model, even when parties can compute and communicate quantumly.
On the other hand, we remark that both our results can be obtained (even in the classical world) with an ``OT correlations'' setup, which assumes that a trusted dealer has sampled random strings $x_0,x_1$ and bit $b$ and delivered $x_0,x_1$ to the sender and $b,x_b$ to the receiver. For the case of string OT, this consequence is immediate and for the case of two-round MPC, this result follows from the work of Garg et al.~\cite{TCC:GarIshSri18}. 

Our results state that in the quantum world, shared EPR pairs are sufficient to obtain (i) one-shot (random receiver bit) string OT and (ii) two-round MPC from symmetric-key primitives. As noted in \cite{ABKK}, shared EPR pairs are a fundamentally different resource than OT correlations. Indeed, OT correlations are \emph{specific} to OT, while, as indicated above, shared EPR pairs are known to be broadly useful and have been widely studied independent of OT.  Moreover, an OT correlations setup requires private (hidden) randomness, while generating EPR pairs is a fully deterministic quantum process.\footnote{In particular, any (even semi-honest) dealer that sets up OT correlations can learn the parties’ private inputs by observing the resulting transcript of communication, while this is not necesarily true of an EPR pair setup, by monogamy of entanglement. We also remark that obtaining OT correlations from \emph{any} deterministically generated shared quantum state is non-trivial. In particular, if the parties shared a (deterministically generated) superposition over classical OT correlations, the receiver could simply decide not to measure the register holding their choice bit, and obtain a superposition over the sender's strings, which violates the security of OT. } Our work can thus be viewed as a step towards realizing secure computation protocols using weaker forms of correlated randomness. Finally, we remark that, unlike the case of one-shot OT, it may be possible to achieve two-round MPC from symmetric-key primitives in the classical common reference string model (i.e., without shared EPR pairs), and we leave this as an intriguing open question for future study.

\subsection{Applications} We now discuss several applications of our one-shot string OT construction and secure teleportation through quantum channel protocol. 

\paragraph{Non-Interactive Computation of Unidirectional Functionalities.} The study of non-interactive protocols for unidirectional classical functionalities was initiated by \cite{C:GIKOS15}. Such functionalities are defined by a classical circuit $f$, take an input $x$ from the sender, (potentially) sample some random coins $r$, and deliver $f(x;r)$ to the receiver. They showed the possibility (or impossibility) of achieving them in a model where the sender and the receiver have access to an \emph{one-way communication channel}. In particular, they showed that ideal string OT channel suffices to build non-interactive secure computation of unidirectional classical functionalities. On the other hand, the work of Agrawal et al.~\cite{AC:AIKNPPR20} showed that bit OT channels provably do \emph{not} suffice for non-interactive secure computation.\footnote{A followup work of \cite{C:AIKNPPR21} showed that, assuming \emph{ideal obfuscation}, there exists a protocol over a bit OT channels with (non-standard) $1/\poly(\secp)$ security.}

Using our one-shot string OT construction, we can instantiate the results of Garg et al.~\cite{C:GIKOS15} and obtain non-interactive secure computation of unidirectional functionalities in the shared EPR pairs model, assuming sub-exponential LWE. 


The works of \cite{C:GIKOS15,C:AIKNPPR21} also discuss several applications of non-interactive secure computation of unidirectional classical functionalities, and we mention one intriguing application here. The modern internet relies on a public-key infrastructure, where certificate authorities validate public keys by signing them under their own signing key.\footnote{Note that despite the existence of quantum key distribution \cite{BenBra84}, public-key infrastructure would still likely be required for the quantum internet, since QKD requires authenticated classical channels.} A single message protocol for unidirectional classical functionalities would enable key authorities to \emph{non-interactively} generate and send freshly sampled and signed public key secret key pairs to clients, \emph{without} learning the client's secret key. Moreover, the client would not learn the secret signing key of the authority who sent their fresh pair. Thus, we show that there is a truly non-interactive solution to this widespread key certification functionality in a world where nodes are connected by shared EPR pairs.\footnote{We do stress that our model assumes the EPR pairs are generated honestly, for example by an honest network administrator. Otherwise, such secure one-message protocols would be impossible to achieve.}

\paragraph{NIZKs for QMA.} Our secure telportation through quantum channels immediately gives a non-interactive zero-knowlede (NIZK) for QMA in the shared EPR pairs model, by letting the channel $Q$ compute a QMA verification circuit and output the resulting bit to the receiver. The only previous NIZK for QMA in the shared EPR pairs model is due to \cite{MY22}, who argued security in the quantum random oracle model.\footnote{We also remark that \cite{BM22} achieve NIZK for QMA in the (incomparable) common reference string model, but they argue security using classical oracles, or alternatively assuming indistinguishability obfuscation and the non-black-box use of a hash function modeled as a random oracle.} Thus, we obtain the first such protocol from a standard (sub-exponential) hardness assumption.

\paragraph{Non-Interactive Zero-Knowledge State Synthesis.} Many recent works consider the problem of quantum \emph{state synthesis} \cite{https://doi.org/10.48550/arxiv.1607.05256,Rosenthal2022InteractivePF,10.4230/LIPIcs.CCC.2022.5}, which studies the efficiency of preparing a complex quantum state with the help of an oracle or untrusted powerful prover. That is, given the implicit description of a quantum circuit $Q$, can a verifier prepare $\ket{\psi} = Q\ket{0^n}$ with the help of a prover, and be convinced that they end up with the correct state? 

In fact, \cite{Rosenthal2022InteractivePF} asked whether there is any meaningful notion of \emph{zero-knowledge} state synthesis. In this work, we propose one way to define zero-knowledge state synthesis. Roughly, we consider any \emph{family} of circuits $\{Q_w\}_w$ parameterized by a potentially secret witness $w$, and require that a prover help the verifier prepare $\ket{\psi_w} = Q_w\ket{0^n}$ without leaking the witness $w$. We formalize our definition in \cref{subsec:applications} and show that our secure teleportation protocol immediately gives a \emph{one-message} solution to this task in the shared EPR pairs model. We stress that there may be other meaningful ways to define zero-knowledge state synthesis, and we leave a more thorough exploration of definitions and applications of zero-knowledge state synthesis to future work.

\paragraph{Non-Interactive Quantum Cryptography.} Finally, we observe that the full power of non-interactive secure computation of unidirectional quantum functionalities gives rise to quantum analogues of the classical applications mentioned above. For example, a certificate authority could non-interactively prepare and send signed key pairs for encryption schemes with \emph{uncloneable} or \emph{revocable} decryption keys \cite{cryptoeprint:2020/877,10.1007/978-3-030-84242-0_20,AKNYY,cryptoeprint:2023/265,cryptoeprint:2023/325}, where decryption keys are quantum states that can either provably not be distributed or verifiably be destroyed. The novel guarantee is that even the certificiate authority itself will not learn the (description of) the decrpytion key.\footnote{In this setting, \emph{publicly-verifiable} revocation \cite{cryptoeprint:2023/265} seems crucial to ensure that no one need know the classical description of the secret key.} As another example, a bank could non-interactively distribute signed \emph{quantum money} states (technically, the serial number would be signed), without ever learning the classical description of the state. In particular, while valid money states could be provably generated and distributed non-interactively, no one (not even the bank) would ever learn a classical description that would enable cloning. 

\subsection{Related Works}

This work continues a long line of research that studies the power of shared entanglement as a resource. We show that shared EPR pairs, which already have a long history of study in communication \cite{PhysRevLett.69.2881,PhysRevLett.70.1895}, cryptography \cite{PhysRevLett.67.661,10.1007/978-3-540-24587-2_20,https://doi.org/10.48550/arxiv.2204.02265,ABKK}, and error-correction \cite{BDH}, can be leveraged to obtain perhaps surprisingly powerful secure computation tasks.

We also compare our results with the prior work of \cite{ABKK}, which also studies oblivious transfer in the shared EPR pairs model. They achieve a one-message protocol for \emph{bit} OT, where the sender's inputs are one bit each, and explicitly leave open the problem of building \emph{string} OT, which we address in this work. We note that bit OT is not known to be complete for one-message secure computation~\cite{C:GIKOS15,AC:AIKNPPR20}. Moreover, security of the protocols in \cite{ABKK} are all argued in the quantum random oracle model, while we argue security \emph{without} random oracles, and based on concrete properies of hash functions instead. 

\paragraph{Concurrent Work.} Finally, we mention a concurrent and independent work~\cite{Leoetal} that was posted recently to the arXiv. Their results and techniques are orthogonal to ours: in particular, they obtain two-message OT in the CRS model assuming NIZK (and an assumption on hash functions), whereas we obtain one-message OT from sub-exponential LWE, as well as unconditional two-round MPC in the QROM, both in the shared EPR pairs model. We do not believe that (a simple modification of) either work's results or techniques immediately subsumes or improves results in the other. We also remark that both our work and~\cite{Leoetal} leave open the intriguing question of obtaining minimally-interactive (two-round) MPC in the CRS model without the use of public-key primitives. 
\section{Technical Overview}\label{sec:tech-overview}
We give an overview of the key techniques used to obtain our results.
\subsection{One-shot string OT}

In this subsection, we focus on our key technical contribution, which is a construction of one-shot string OT in the shared EPR pairs model. Throughout this section, we define \emph{one-shot string OT} as a one-message protocol that takes two strings $m_0,m_1$ from the sender, and delivers $m_b$ to the receiver for a random bit $b \gets \{0,1\}$. For more discussion on our applications, we refer the reader to \cref{subsec:applications}. 


\paragraph{A string OT skeleton.} As mentioned earlier, \cite{ABKK} constructed a one-shot \emph{bit} OT protocol in the shared EPR pairs model (where the sender's inputs are one bit each). However, their techniques don't appear to extend easily to the setting of one-shot \emph{string} OT, for arbitrary length strings. In fact, \cite{C:GIKOS15,AC:AIKNPPR20} showed that in the non-interactive setting, it is impossible to obtain string OT from bit OT. We additionally observe that prior quantum OT templates \cite{FOCS:CreKil88,ABKK} only obtain ``bitwise'' correlations by sending unentangled BB84 states or by immediately measuring each EPR pair independently. 

To get around this barrier, our idea is to directly obtain string correlations from shared entanglement. This can be done by first \emph{entangling} the separate EPR pairs in a special way before performing measurements. 

\protocol{}{An (insecure) skeleton for one-shot string OT}{fig:skeleton}{
\textbf{Setup}: An EPR pair on registers $(\cS^\ctl,\cR^\ctl)$ and $\secp$ EPR pairs on registers $(\cS^\msg,\cR^\msg)$.\\

\textbf{Sender's message}: 
\begin{itemize}
    \item Sample $x \gets \{0,1\}^\secp$ and for each $i \in [\secp]$ such that $x_i = 1$, apply a CNOT gate from $\cS^\ctl$ to $\cS_i^\msg$.
    \item Measure $\cS^\msg$ in the standard basis to obtain $v \in \{0,1\}^\secp$, and measure $\cS^\ctl$ in the Hadamard basis to delete the control bit.
    \item Given input $(m_0,m_1)$, send $\widetilde{m}_0 = m_0 \oplus v, \widetilde{m}_1 = m_1 \oplus v \oplus x$.
\end{itemize}

\textbf{Receiver's computation}:
\begin{itemize}
    \item Measure $\cR^\ctl,\cR^\msg$ in the standard basis to obtain $b,v'$, and output $(b,m_b = \widetilde{m}_b \oplus v')$.
\end{itemize}
}

Our approach is illustrated in \cref{fig:skeleton}. Note that after the sender applies the random CNOT gates and measures $\cS^\msg$ to obtain $v$, the remaining state of the system is 

\[\frac{1}{\sqrt{2}}\ket{0}_{\cS^\ctl}\ket{0}_{\cR^\ctl}\ket{v}_{\cR^\msg} + \frac{1}{\sqrt{2}}\ket{1}_{\cS^\ctl}\ket{1}_{\cR^\ctl}\ket{v \oplus x}_{\cR^\msg}.\] Thus, tracing out $\cS^\ctl$, we see that the receiver has a uniform mixture over $\ket{0,v}$ and $\ket{1,v \oplus x}$, where $v,v\oplus x$ are uniformly random strings from their perspective, exactly as desired. Unfortunately, since the sender's control register is entangled with the receiver's, the sender could know exactly which bit $b$ the receiver obtains by measuring $\cS^\ctl$ in the standard basis. Thus, we instead ask that the sender ``delete'' their control bit by measuring it in the \emph{Hadamard} basis. Of course, a malicious (or even specious) sender may not follow these instructions, rendering this protocol insecure. However, this protocol serves as the foundation for our eventual secure realization of one-shot string OT.

\paragraph{Measurement check.} Next, we add a mechansim for ``forcing'' the sender to delete their control bit. We build on the commitment-based cut-and-choose approach~\cite{FOCS:CreKil88,C:BouFeh10,ABKK} as follows. Suppose the sender really did behave honestly, and measured $\cS^\ctl$ in the Hadamard basis to obtain a bit $h$. Then, the state on the receiver's side will be \[\ket{\psi_{v,x,h}} \coloneqq \frac{1}{\sqrt{2}}\left(\ket{0,v} + (-1)^h\ket{1,v \oplus x}\right).\] So if the receiver was given $(v,x,h)$, they could measure $(\cR^\ctl,\cR^{\msg})$ in the \[\left\{\dyad{\psi_{v,x,h}}{\psi_{v,x,h}}, \bbI - \dyad{\psi_{v,x,h}}{\psi_{v,x,h}}\right\}\] basis and accept if the first outcome is observed. Of course, sending $(v,x,h)$ to the receiver would render the protocol insecure because the receiver could now obtain both $v$ and $v \oplus x$. Instead, we apply a variant of the Fiat-Shamir-based non-interactive measurement check subprotocol of \cite{ABKK}, using a non-interactive commitment scheme $\Com$ and a hash function $H$:
\begin{itemize}
    \item Repeat the skeleton protocol $\ell$ times in parallel, and have the sender commit to all descriptions $\cm_1 = \Com(v_1,x_1,h_1),\dots,
    \cm_\ell = \Com(v_\ell,x_\ell,h_\ell)$.
    \item Hash $T = H(\cm_1,\dots,\cm_\ell)$ to obtain a subset $T \subset [\ell]$ of commitments.
    \item The sender sends $(\cm_1,\dots,\cm_\ell)$ along with openings to $\{\cm_i\}_{i \in T}$.
    \item For each $i \in T$, the receiver measures registers $\cR_i^\ctl,\cR_i^\msg$ in basis \[\left\{\dyad{\psi_{v_i,x_i,h_i}}{\psi_{v_i,x_i,h_i}}, \bbI - \dyad{\psi_{v_i,x_i,h_i}}{\psi_{v_i,x_i,h_i}}\right\}\] and aborts if any of these measurements reject. Otherwise, the parties continue the protocol using indices $i \in \overline{T}$.
\end{itemize}

Now, assuming $H$ behaves as a random oracle, we should be able to claim that conditioned on the receiver not aborting, their states on registers $\{\cR_i^\ctl,\cR_i^\msg\}_{i \in \overline{T}}$ should be ``close'' to the honest states $\{\ket{\psi_{v_i,x_i,h_i}}\}_{i \in \overline{T}}$. We can make this precise by arguing that after an appropriate change of basis, the states $\{\cR_i^\ctl\}_{i \in \overline{T}}$ are in a superposition of Hadamard basis states that are close in Hamming distance to the honest state $H^{\otimes|\overline{T}|}\ket{h_{\overline{T}}}$, where $h_{\overline{T}}$ are the bits $\{h_i\}_{i \in \overline{T}}$. If this is the case, then by the ``XOR extractor'' lemma of \cite{ABKK}, measuring these bits in the standard basis and XORing the results together would produce a bit $b$ that is truly uniformly random and independent of the sender's view. Thus, we should be able to extract a perfectly random receiver's bit by \emph{combining} correlations obtained from multiple instances $i \in \overline{T}$ of the skeleton protocol.

\paragraph{Defining two sender strings.} Unfortunately, if we XOR together the correlations from all $i \in \overline{T}$, it is no longer clear how to define the two sender strings. Indeed, the receiver will obtain one out of two of each pair $\{(v_i,v_i \oplus x_i)\}_{i \in \overline{T}}$, which means one out of $2^{|T|}$ possible sets of strings! However, note that if the sender had used the same offset $x$ for each repetition, then if the receiver XORs together one out of two of each $\{(v_i,v_i \oplus x)\}_{i \in \overline{T}}$, they obtain either $\bigoplus_{i \in \overline{T}}v_i$ or $x \oplus \bigoplus_{i \in \overline{T}}v_i$ depending on the parity of their choice bits. Of course, since we are opening the commitments on indices $i \in T$, the receiver would learn $x$, rendering this approach insecure. 

Our solution is to make use of this ``common offset'' approach in a less direct manner. In addition to the $\ell$ repetitions of the skeleton protocol described above, the sender will sample an independent collection of strings $t_1,\dots,t_\ell,\Delta$ and include commitments \[\widehat{\cm}_{1,0} = \Com(t_1),\widehat{\cm}_{1,1} = \Com(t_1 \oplus \Delta),\dots,\widehat{\cm}_{\ell,0} = \Com(t_\ell),\widehat{\cm}_{\ell,1} = \Com(t_\ell \oplus \Delta)\] in their message. Then, the sender will use the random strings $(v_1,v_1 \oplus x_1),\dots,(v_\ell,v_\ell \oplus x_\ell)$ to mask the \emph{openings} for the commitments $(\widehat{\cm}_{1,0},\widehat{\cm}_{1,1}),\dots,(\widehat{\cm}_{\ell,0},\widehat{\cm}_{\ell,1})$. The effect of this is that the receiver will be able to open one out of two of each pair of commitments $\{\widehat{\cm}_{i,0},\widehat{\cm}_{i,1}\}_{i \in \overline{T}}$, obtaining either $\bigoplus_{i \in \overline{T}} t_i$ or $\Delta \oplus \bigoplus_{i \in \overline{T}} t_i$.

Finally, to maintain security, we require that the sender computes a non-interactive zero-knowledge (NIZK) argument that they sampled $\{\widehat{\cm}_{i,b}\}_{i \in [\ell],b \in \{0,1\}}$ as commitments to pairs of strings that all share the same offset $\Delta$.

\paragraph{Using correlation-intractability.} This nearly completes the description of our protocol. Turning to the security proof, our goal is to reduce to a standard cryptographic assumption. Fortunately, the flavors of commitments and zero-knowledge we require are known from LWE. However, we also need some security from the Fiat-Shamir hash function $H$. In \cite{ABKK} this hash was modeled as a random oracle, and it was left open whether one could obtain security in the plain model.

Classically, a recent exciting line of work has shown how to securely instantiate the Fiat-Shamir transform from standard cryptographic assumptions in many settings \cite{CCHLRRW,C:PeiShi19,BKM20,JJ21,JKKZ21,HLR,CJJ21b,KalaiVZ21,CJJ21,EC:HJKS22,EC:part1,EC:part2}. These works rely on the notion of \emph{correlation-intractability} (CI), which is a property of the hash function $H$ requiring that for some relation $R$ over inputs and outputs, the adversary can't find any input $x$ such that $(x,H(x)) \in R$. In particular, it is known how to obtain CI for efficiently computable \emph{functions} from LWE \cite{CCHLRRW,C:PeiShi19}. Moreover, \cite{HLR} showed to extend this result to CI for efficiently verifiable product relations $R$, where the range of $H$ is the $t$-wise cartesian product of a set $Y$, and each input $x$ is associated with sets $S_{x,1},\dots,S_{x,t} \subset Y$ such that $(x,(y_1,\dots,y_t)) \in R$ iff each $y_i \in S_{x,i}$. The property of efficient verifiability states that there is an efficient (classical) algorithm that, given $(x,i,y_i)$, determines whether $y_i \in S_{x,i}$.

Recall that in our protocol, we apply $H$ to a set of $\ell$ commitments in order to obtain the description of a subset $T \subset [\ell]$ of commitments to open. Intuitively, we want it to be difficult for the sender to find a set of commitments $(\cm_1,\dots,\cm_\ell)$ to strings $(v_1,x_1,h_1),\dots,(v_\ell,x_\ell,h_\ell)$ such that $T = H(\cm_1,\dots,\cm_\ell)$ is a ``bad'' set, meaning that the receiver's registers $\{(\cR_i^\ctl,\cR_i^\msg)\}_{i \in T}$ are ``close'' to the states $\{\ket{\psi_{v_i,x_i,h_i}}\}_{i \in T}$ (so the receiver won't abort) but the registers $\{(\cR_i^\ctl,\cR_i^\msg)\}_{i \in \overline{T}}$ are ``far'' from the states $\{\ket{\psi_{v_i,x_i,h_i}}\}_{i \in \overline{T}}$. Thus, given an input $(\cm_1,\dots,\cm_\ell)$, it appears that determining whether or not a potential output $T$ is ``bad'' requires (at least) applying some \emph{quantum measurement} to the receiver's registers. Unfortunately, all prior work has used CI in a purely classical setting, and extending the notion of efficiently verifiable relation to handle \emph{quantum} verification algorithms appears to be beyond the reach of current techniques (though this may be an interesting direction for future research).

Instead, we take a different approach. Suppose that the sender's choices of $x_1,\dots,x_\ell$ were fixed before the protocol begins. Then, we could pre-measure the receiver's registers even before initializing the malicious sender to obtain $(v_1,h_1),\dots,(v_\ell,h_\ell)$. That is, we could first apply CNOTs from $\cR_i^\ctl$ to each of the qubits in $\cR_i^\msg$ controlled on $x_i$, and then measure $\cR_i^\ctl$ in the Hadamard basis to obtain $h_i$ and measure $\cR_i^\msg$ in the standard basis to obtain $v_i$. Then given just this classical data, we can distinguish between honest commitments $\cm_i$ to $(v_i,x_i,h_i)$ and dishonest commitments $\cm_i$ to some other string (as long as the commitment is efficiently extractable). If we split $\ell$ into $t$ disjoint groups and parse $T$ as $t$ different subsets of $[\ell / t]$, then we can formulate a classically efficiently verifiable product relation $R$ where $((\cm_1,\dots,\cm_\ell),T) \in R$ iff all $\{\cm_i\}_{i \in T}$ are honest and ``many'' $\{\cm_i\}_{i \in \overline{T}}$ are dishonest. 

Now, while we cannot guarantee that a malicious sender will sample any fixed $(x_1,\dots,x_\ell)$, we can \emph{guess} beforehand which $x_1,\dots,x_\ell$ they will use, and simply give up on reducing to CI if the guess is wrong. Using complexity leveraging (and setting the security parameter of the CI hash function large enough), we can hope that this is enough to still break \emph{sub-exponentially-secure} CI. It turns out that this strategy can only be made to work if our guessing loss depends only on the security parameter $\secp$, and \emph{not} on the number of repetitions $\ell$ (which must depend on the level of security required from the CI hash). Thus, we make one final tweak to the protocol. The sender will be required to sample $x_1,\dots,x_\ell$ as the output of a pseudorandom generator with seed $s$ of length $\{0,1\}^\secp$, and prove using the NIZK that they have done so honestly. Then, in the reduction to CI, it suffices to guess a $\secp$-bit string $s$ rather than a $\secp\ell$-bit string $(x_1,\dots,x_\ell)$. This allows us to eventually reduce security to the sub-exponential hardness of LWE.

\paragraph{Unconditional Protocols in the QROM.}
We remark that it appears plausible to obtain more efficient and unconditionally secure variants of our non-interactive protocols in the (quantum) random oracle model. In particular, following \cite{ABKK}, we expect that the measure-and-reprogram technique~\cite{C:DFMS19} in the quantum random oracle model can be used in place of correlation intractability, which would remove the need for sampling $x_1, \ldots, x_\ell$ as the output of a PRG, and remove complexity leveraging in the approach outlined above. It also may be possible to rely on {\em black-box} commit-and-prove sigma protocols (e.g., variants of the protocol in~\cite{TCC:KhuOstSri18}) to prove that commitments to pairs of strings share a common offset, thereby making our protocol black-box and unconditionally secure in the QROM. We leave a formalization and detailed analysis of this approach, and more generally an exploration of one-message protocols in the QROM, to future work.

\subsection{Two-round MPC}

In this section, we give a brief overview of our approach to building two-round MPC in the shared EPR model, which is presented in \cref{sec:MPC}. Our starting point is a three-round chosen-input string OT protocol from \cite{ABKK}, which can be viewed as a two-round protocol in the shared EPR model. In order to use this protocol to build two-round MPC, we take the following steps.

\begin{enumerate}
    \item Show that the protocol is ``black-box friendly''. That is, we split the protocol into an \emph{input-independent} phase that uses both quantum measurements and cryptographic operations, and an \emph{input-dependent} phase that is fully classical and information-theoretic.
    \item Appeal to existing compilers (e.g. \cite{C:CreVanTap95,C:IshPraSah08}) to obtain a ``black-box friendly'' MPC protocol in the shared EPR pair model. Again, we have (1) an input-independent phase at the beginning where every party performs a measurement on their halves of EPR pairs, broadcasts a message, and performs some crytographic checks, and (2) an input-dependent multi-round phase that is entirely classical and information-theoretic.
    \item Use the \cite{EC:GarSri18a} round-compressing compiler and two-round OT in the shared EPR pair model to compress this black-box-friendly protocol into a two-round MPC in the shared EPR pair model. Crucially, the compiler only has to operate on the second (multi-round input-dependent) phase, and thus we obtain a final protocol that makes black-box use of cryptography. 
\end{enumerate}

We stress that to make the above compiler work, we need to start with an OT protocol in which all \emph{cryptographic operations} and \emph{quantum computations} are performed \emph{indepedently} of the parties' inputs and \emph{before} the second message. That is, it does not follow from any two-round quantum OT protocol.


If we start with the protocol from \cite{ABKK} that was proven secure in the quantum random oracle model, then we obtain a final MPC protocol in the quantum random oracle model. In addition, we prove that a slight variant of the \cite{ABKK} protocol is secure \emph{without} random oracles, assuming non-interactive extractable commitments and correlation-intractability for efficient functions. Interestingly, while we use a similar approach as described above, we do not have to resort to sub-exponential assumptions here. Roughly, this is because the \cite{ABKK} protocol is built from ``bitwise'' rather than ``stringwise'' correlations, and it suffices for the reduction to correctly guess a random subset of the adversary's bitwise measurements.

\section{Preliminaries}

Let $\secp$ denote the security parameter. We write $\negl(\cdot)$ to denote any \emph{negligible} function, which is a function $f$ such that for every constant $c \in \mathbb{N}$ there exists $N \in \mathbb{N}$ such that for all $n > N$, $f(n) < n^{-c}$. We write $\nonnegl(\cdot)$ to denote any function $f$ that is not negligible. That is, there exists a constant $c$ such that for infinitely many $n$, $f(n) \geq n^{-c}$. 

\subsection{Quantum information}

A register $\cX$ is a named Hilbert space $\bbC^{2^n}$. A pure quantum state on register $\cX$ is a unit vector $\ket{\psi}^{\cX} \in \bbC^{2^n}$, and we say that $\ket{\psi}^{\cX}$ consists of $n$ qubits. A mixed state on register $\cX$ is described by a density matrix $\rho^{\cX} \in \bbC^{2^n \times 2^n}$, which is a positive semi-definite Hermitian operator with trace 1. 

A \emph{quantum operation} (also referred to as quantum map or quantum channel) $Q$ is a completely-positive trace-preserving (CPTP) map from a register $\cX$ to a register $\cY$, which in general may have different dimensions. That is, on input a density matrix $\rho^{\cX}$, the operation $Q$ produces $\tau^{\cY} \gets Q(\rho^{\cX})$ a mixed state on register $\cY$. We will sometimes write a quantum operation $Q$ applied to a state on register $\cX$ and resulting in a state on register $\cY$ as $\cY \gets Q(\cX)$. Note that we have left the actual mixed states on these registers implicit in this notation, and just work with the names of the registers themselves.

A \emph{unitary} $U: \cX \to \cX$ is a special case of a quantum operation that satisfies $U^\dagger U = U U^\dagger = \bbI^{\cX}$, where $\bbI^{\cX}$ is the identity matrix on register $\cX$. A \emph{projector} $\Pi$ is a Hermitian operator such that $\Pi^2 = \Pi$, and a \emph{projective measurement} is a collection of projectors $\{\Pi_i\}_i$ such that $\sum_i \Pi_i = \bbI$.

Let $\Tr$ denote the trace operator. For registers $\cX,\cY$, the \emph{partial trace} $\Tr^{\cY}$ is the unique operation from $\cX,\cY$ to $\cX$ such that for all $(\rho,\tau)^{\cX,\cY}$, $\Tr^{\cY}(\rho,\tau) = \Tr(\tau)\rho$. The \emph{trace distance} between states $\rho,\tau$, denoted $\TD(\rho,\tau)$ is defined as \[\TD(\rho,\tau) \coloneqq \frac{1}{2}\|\rho-\tau\|_1 \coloneqq \frac{1}{2}\Tr\left(\sqrt{(\rho-\tau)^\dagger(\rho-\tau)}\right).\] The trace distance between two states $\rho$ and $\tau$ is an upper bound on the probability that any (unbounded) algorithm can distinguish $\rho$ and $\tau$. When clear from context, we will write $\TD(\cX,\cY)$ to refer to the trace distance between a state on register $\cX$ and a state on register $\cY$.

\begin{lemma}[Gentle measurement \cite{DBLP:journals/tit/Winter99}]\label{lemma:gentle-measurement}
Let $\rho^{\cX}$ be a quantum state and let $(\Pi,\bbI-\Pi)$ be a projective measurement on $\cX$ such that $\Tr(\Pi\rho) \geq 1-\delta$. Let \[\rho' = \frac{\Pi\rho\Pi}{\Tr(\Pi\rho)}\] be the state after applying $(\Pi,\bbI-\Pi)$ to $\rho$ and post-selecting on obtaining the first outcome. Then, $\TD(\rho,\rho') \leq 2\sqrt{\delta}$.
\end{lemma}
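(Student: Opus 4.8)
The plan is to reduce to the case of pure states via purification, where the bound follows from the standard relationship between fidelity and trace distance, and then push the bound down to $\rho$ using monotonicity of trace distance under partial trace.

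First I would purify $\rho^{\cX}$: let $\ket{\phi}^{\cX\cZ}$ be any purification, so that $\Tr^{\cZ}(\dyad{\phi}) = \rho$. Consider the projective measurement $(\Pi \otimes \bbI^{\cZ}, (\bbI - \Pi) \otimes \bbI^{\cZ})$ acting on $\cX\cZ$. Since $\braket{\phi | (\Pi \otimes \bbI^{\cZ}) | \phi} = \Tr(\Pi \rho) \geq 1 - \delta$, and since $\Pi$ is a projector, the normalized post-measurement pure state $\ket{\phi'} \coloneqq (\Pi \otimes \bbI^{\cZ})\ket{\phi} / \sqrt{\Tr(\Pi\rho)}$ satisfies $\braket{\phi | \phi'} = \sqrt{\Tr(\Pi\rho)} \geq \sqrt{1-\delta}$.

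Next I would invoke the identity $\TD(\dyad{\phi}, \dyad{\phi'}) = \sqrt{1 - |\braket{\phi|\phi'}|^2}$, valid for pure states, which yields $\TD(\dyad{\phi}, \dyad{\phi'}) \leq \sqrt{\delta}$. Finally, I would observe that $\Tr^{\cZ}(\dyad{\phi'}) = \Pi\rho\Pi / \Tr(\Pi\rho) = \rho'$ (the partial trace commutes with applying $\Pi$ on the $\cX$ factor), and since the partial trace is a CPTP map, trace distance is non-increasing under it; hence $\TD(\rho, \rho') \leq \TD(\dyad{\phi},\dyad{\phi'}) \leq \sqrt{\delta} \leq 2\sqrt{\delta}$.

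There is no real obstacle here: this is a textbook lemma, and the only non-routine ingredients (the fidelity–trace-distance identity for pure states and monotonicity of trace distance under CPTP maps) are standard. As an alternative that avoids purification, one can argue directly by writing $\rho - \Pi\rho\Pi = \Pi\rho(\bbI-\Pi) + (\bbI-\Pi)\rho\Pi + (\bbI-\Pi)\rho(\bbI-\Pi)$, bounding the last term by $\Tr((\bbI-\Pi)\rho) \leq \delta$ and each off-diagonal term by $\sqrt{\delta}$ via the Cauchy–Schwarz inequality for the trace norm (with $A = \Pi\sqrt{\rho}$, $B = \sqrt{\rho}(\bbI-\Pi)$, using $\|AB\|_1 \le \|A\|_2\|B\|_2$), and then accounting for the renormalization factor $\Tr(\Pi\rho) \geq 1 - \delta$, which costs at most an additional $\delta$ in trace distance; this route also gives the claimed $2\sqrt{\delta}$ bound (with room to spare when $\delta$ is small).
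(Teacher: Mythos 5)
Your proof is correct, and since the paper imports this lemma from Winter without proof, there is nothing to compare it against. A few observations. The purification route you outline actually gives the sharper bound $\TD(\rho,\rho') \leq \sqrt{\delta}$: indeed $|\braket{\phi|\phi'}|^2 = \Tr(\Pi\rho) \geq 1-\delta$, so the pure-state identity $\TD(\dyad{\phi}{\phi},\dyad{\phi'}{\phi'}) = \sqrt{1-|\braket{\phi|\phi'}|^2}$ yields $\sqrt{\delta}$, and monotonicity under partial trace preserves this; the stated $2\sqrt{\delta}$ is just looser. Your alternative direct route is also sound, but it is worth tracking the constants: Cauchy--Schwarz gives $\|\Pi\sqrt{\rho}\cdot\sqrt{\rho}(\bbI-\Pi)\|_1 \leq \sqrt{\Tr(\Pi\rho)}\sqrt{\Tr((\bbI-\Pi)\rho)} \leq \sqrt{\delta}$ for each off-diagonal term, the corner term contributes at most $\delta$, and renormalization from $\Pi\rho\Pi$ to $\rho'$ costs exactly $1-\Tr(\Pi\rho) \leq \delta$ in $1$-norm, for a total $\TD(\rho,\rho') \leq \sqrt{\delta} + \delta \leq 2\sqrt{\delta}$ when $\delta\le 1$. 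Both are complete; the purification argument is cleaner and gives the tighter constant.
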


A non-uniform quantum polynomial-time (QPT) machine $\{\Adv_\secp,\ket{\psi}_\secp\}_{\secp \in \bbN}$ is a family of polynomial-size quantum machines $\Adv_\secp$, where each is initialized with a polynomial-size advice state $\ket{\psi_\secp}$. Each $\Adv_\secp$ is in general described by a CPTP map. Similar to above, when we write $\cY \gets \Adv(\cX)$, we mean that the machine $\Adv$ takes as input a state on register $\cX$ and produces as output a state on register $\cY$, and we leave the actual descripions of these states implicit. Finally, a quantum \emph{interactive} machine is simply a sequence of quantum operations, with designated input, output, and work registers.

Finally we will often use $\approx_c$ as a shorthard to denote \emph{computational} indistinguishability between two families of distributions (over quantum states), and $\approx_s$ as a shorthard to denote \emph{statistical} indistinguishability (or negligible closeness in trace distance) between two families of distributions.

\subsection{Correlation intractability}\label{subsec:CI}

\begin{definition}[Correlation intractable hash function]\label{def:CI}
Let $\{\cX_\secp,\cY_\secp\}_{\secp \in \bbN}$ be families of finite sets. An efficiently computable keyed hash function family $\{H_\secp: \{0,1\}^{k(\secp)} \times \cX_\secp \to \cY_\secp\}_{\secp \in \bbN}$ with keys of length $k(\secp)$ is $\epsilon(\secp)$-\emph{correlation intractable} for a relation ensemble $\{R_\secp \subseteq \cX_\secp \times \cY_\secp\}_{\secp \in \bbN}$ if for any QPT adversary $\{\Adv_\secp\}_{\secp \in \bbN}$, 

\[\Pr\left[(x,H_\secp(\hk,x)) \in R_\secp : \begin{array}{r} \hk \gets \{0,1\}^{k(\secp)} \\ x \gets \Adv_\secp(\hk)\end{array}\right] \leq \epsilon(\secp).\]

We say that $\{H_\secp\}_{\secp \in \bbN}$ is \emph{sub-exponentially} correlation intractable for $\{R_\secp\}_{\secp \in \bbN}$ if it is $2^{-\secp^\delta}$-correlation intractable for some constant $\delta > 0$.
\end{definition}

\begin{definition}[Sparse, efficiently verifiable, approximate product relations \cite{HLR}]\label{def:product-relations}
A relation $R \subseteq \cX \times \cY^t$ is an efficiently verifiable $\alpha$-approximate product relation with sparsity $\rho$ if the following hold.
\begin{itemize}
    \item \textbf{Approximate product.} For every $x$, the set $R_x \coloneqq \{y : (x,y) \in R\}$ consists of $y = (y_1,\dots,y_t) \in \cY^t$ such that  \[|\{i \in [t] : y_i \in S_i\}| \geq \alpha t\] for some sets $S_{1,x},\dots,S_{t,x} \subseteq \cY$ that may depend on $x$.
    
    \item \textbf{Efficiently verifiable.} There is a polynomial-size circuit $C$ such that for every $x$, the sets $S_{1,x},\dots,S_{t,x}$ are such that for any $i,y_i \in S_{i,x}$ if and only if $C(x,y_i,i) = 1$.
    \item \textbf{Sparse.} For every $x$, the sets $S_{1,x},\dots,S_{t,x}$ are such that for all $i$, $|S_{i,x}| \leq \rho|\cY|$.
\end{itemize}
\end{definition}

\begin{importedtheorem}[\cite{HLR}]\label{impthm:CI}
Assuming the existence of an efficiently computable keyed hash function family that is $\epsilon(\secp)$-correlation intractable for any efficient function, there exists an efficiently computable keyed hash function family $\{H_\secp : \{0,1\}^{k(\secp)} \times \cX_\secp \to \cY_\secp^{t(\secp)}\}_{\secp \in \bbN}$ that is $\epsilon(\secp)$-correlation intractable for any efficiently verifiable $\alpha$-approximate product relation ensemble $\{R_\secp \subseteq \cX_\secp \times \cY_\secp^{t(\secp)}\}_{\secp \in \bbN}$ with sparsity $\rho$, as long as $\rho < \alpha$ and $t(\secp) \geq \secp / (\alpha-\rho)^3$. 
\end{importedtheorem}

\begin{importedtheorem}[\cite{CCHLRRW,C:PeiShi19}]\label{impthm:CI2}
Assuming the $\epsilon(\secp)$-hardness of LWE, there exists an efficiently computable keyed hash function family that is $\epsilon(\secp)$-correlation intractable for any efficient function.
\end{importedtheorem}

\begin{definition}[Programmability]\label{def:programmability}
A hash function family $\{H_\secp: \{0,1\}^{k(\secp)} \times \cX_\secp \to \cY_\secp\}_{\secp \in \bbN}$ is \emph{programmable} if for any $\secp, x \in \cX_\secp$, and $y \in \cY_\secp$, \[\Pr_{\hk \gets \{0,1\}^{k(\secp)}}[H_\secp(\hk,x) = y] = \frac{1}{2^{m(\secp)}},\] and there exists a PPT sampling algorithm $\Samp(1^\secp,x,y)$ that samples from the conditional distribution \[\hk : H_\secp(\hk,x) = y.\]
\end{definition}

\begin{remark}
\cite{CCHLRRW} show a simple transformation that generically adds the above notion of programmability to natural correlation intractable hash functions.
\end{remark}

\ifsubmission
In \cref{sec:crypto-prelim}, we present additional preliminaries covering commitments, zero-knowledge, and quantum leftover hashing.
\else
\ifsubmission \section{Additional Preliminaries}\label{sec:crypto-prelim}\else\fi

\subsection{Commitments}\label{subsec:commitments}

A non-interactive commitment in the common random string model is parameterized by polynomials $h(\secp),n(\secp)$, and consists of the following algorithm. 
\begin{itemize}
    \item $\Com(\ck,m) \to c$: Take as input a commitment key $\ck \in \{0,1\}^{h(\secp)}$ and a message $m \in \{0,1\}^{n(\secp)}$, and output a commitment string $c$.
\end{itemize}

\begin{definition}[Hiding]\label{def:hiding}
A non-interactive commitment scheme $\Com$ is \emph{hiding} if for any QPT adversary $\{\Adv_\secp\}_{\secp \in \bbN}$ and messages $\{m_{0,\secp},m_{1,\secp}\}_{\secp \in \bbN}$,

\[\bigg| \Pr\left[1 \gets \Adv_\secp(\ck,c) : \begin{array}{r}\ck \gets \{0,1\}^{h(\secp)} \\ c \gets \Com(\ck,m_{0,\secp})\end{array}\right] - \Pr\left[1 \gets \Adv_\secp(\ck,c) : \begin{array}{r}\ck \gets \{0,1\}^{h(\secp)} \\ c \gets \Com(\ck,m_{1,\secp})\end{array}\right] \bigg| = \negl(\secp).\]
\end{definition}

\begin{definition}[Extractability]\label{def:extractability}
A non-interactive commitment scheme $\Com$ is \emph{extractable} if there exist PPT algorithms $(\ExtGen,\Ext)$ such that for any QPT adversary $\{\Adv_\secp\}_{\secp \in \bbN}$,

\[\left|\Pr[1 \gets \Adv_\secp(\ck) : \ck \gets \{0,1\}^{h(\secp)}] - \Pr[1 \gets \Adv_\secp(\ck) : (\ck,\ek) \gets \ExtGen(1^\secp)]\right| = \negl(\secp),\]

and 

\[\Pr\left[\exists m' \neq m,r \text{ s.t. } \Com(\ck,m';r) = c : \begin{array}{r} (\ck,\ek) \gets \ExtGen(1^\secp) \\ c \gets \Adv_\secp(\ck) \\ m \gets \Ext(\ek,c)
\end{array}\right] = \negl(\secp),\] where $\Ext$ either outputs an $n(\secp)$-bit message or $\bot$.

\end{definition}

\begin{remark}\label{def:com-lwe}
Commitments satisfying these properties are known from LWE, for example via dual-Regev encryption.
\end{remark}



\subsection{Non-interactive zero-knowledge}\label{subsec:ZK}

Let $\cL$ be an NP language and let $\cR_\cL$ be the associated binary relation, where a statement $x \in \cL$ if and only if there exists a witness $w$ such that $(x,w) \in \cR_\cL$. A non-interactive argument system for $\cL$ in the common random string model consists of the following algorithms.

\begin{itemize}
    \item $\Prove(\crs,x,w) \to \pi$: The prover algorithm takes as input a common random string $\crs \in \{0,1\}^{n(\secp)}$, a statememt $x$, and a witness $w$, and outputs a proof $\pi$.
    \item $\Ver(\crs,x,\pi) \to \{\top,\bot\}$: The verify algorithm takes as input a common random string $\crs \in \{0,1\}^{n(\secp)}$, a statement $x$, and a proof $\pi$, and outputs either $\top$ or $\bot$.
\end{itemize}

\begin{definition}[Completeness]\label{def:ZK-completeness}
The non-interactive argument system $(\Prove,\Ver)$ satisfies \emph{completeness} if for any $(x,w) \in \cR_\cL$, \begin{align*}&\Pr\left[\Ver(\crs,x,\pi) = 1 : \begin{array}{r}\crs \gets \{0,1\}^{n(\secp)} \\ \pi \gets \Prove(\crs,x,w)\end{array}\right] = 1-\negl(\secp).\end{align*}
\end{definition}

\begin{definition}[Soundness]\label{def:ZK-soundness}
The non-interactive argument system $(\Prove,\Ver)$ satisfies \emph{soundness} if for any QPT adversary $\{\Adv_\secp\}_{\secp \in \bbN}$,

\[\Pr\left[x \notin \cL \wedge \Ver(\crs,x,\pi) = \top : \begin{array}{r}\crs \gets \{0,1\}^{n(\secp)} \\ (x,\pi) \gets \Adv_\secp(\crs)\end{array}\right] = \negl(\secp).\]
\end{definition}




\begin{definition}[Zero-knowledge]\label{def:ZK-ZK}
The argument system $(\Prove,\Ver)$ satisfies \emph{zero-knowledge} if there exists a PPT simulator $\Sim$ such that for any $(x,w) \in \cR_\cL$,

\begin{align*}&\bigg|\Pr\left[1 \gets \Adv_\secp(\crs,x,\pi) : \begin{array}{r} \crs \gets \{0,1\}^{n(\secp)}, \\ \pi \gets \Prove(\crs,x,w)\end{array}\right] - \Pr[1 \gets \Adv_\secp(\crs,x,\pi) : (\crs,\pi) \gets \Sim(1^\secp,x)]\bigg| = \negl(\secp).\end{align*}
\end{definition}

\begin{importedtheorem}[\cite{CCHLRRW,C:PeiShi19}]\label{impthm:nizk}
There exists a NIZK argument for NP assuming LWE.
\end{importedtheorem}

\subsection{Quantum entropy and leftover hashing}
\label{subsec: quantum min entropy}

\paragraph{Quantum conditional min-entropy.} 

Let $\rho^{\regX,\regY}$ denote a bipartite quantum state on registers $\regX,\regY$. Following~\cite{Renner08,KonRenSch09}, the conditional min-entropy of $\rho^{\regX,\regY}$ given $\regY$ is defined to be 
\[\mathbf{H}_\infty\left(\rho^{\regX,\regY} \mid \regY\right) \coloneqq \sup_\tau \max \left\{h \in \mathbb{R} : 2^{-h} \cdot \bbI^\regX \otimes \tau^\regY - \rho^{\regX,\regY} \geq 0\right\}.\]

In this work, we will exclusively consider the case where $\rho^{\regX,\regY}$ can be written as 
\[ \sum_{x \in X} p_x\dyad{x}{x}^\regX \otimes \tau^\regY \] for some finite set $X$ and probability distribution $\{p_x\}_{x \in X}$. We refer to such $\rho^{\regX,\regY}$ as a classical-quantum state. In this case, quantum conditional min-entropy exactly corresponds to the maximum probability of guessing $x$ given the state on register $\regY$.

\begin{importedtheorem}[\cite{KonRenSch09}]\label{impthm:conditional-min-entropy}
Let $\rho^{\regX,\regY}$ be a classical-quantum state, and let $p_{\mathsf{guess}}(\rho^{\regX,\regY} | \regY)$ be the maximum probability that any quantum operation can output the $x$ on register $\regX$, given the state on register $\regY$. Then \[p_{\mathsf{guess}}(\rho^{\regX,\regY} | \regY) = 2^{-\mathbf{H}_\infty(\rho^{\regX,\regY} | \regY)}.\]

\end{importedtheorem}

\paragraph{Leftover hash lemma with quantum side information.} We now state a generalization of the leftover hash lemma to the setting of quantum side information. 
\begin{importedtheorem}[\cite{TCC:RenKon05}]\label{impthm:privacy-amplification}
Let $\mathcal{H}$ be a family of universal hash functions from $X$ to $\{0,1\}^\secp$, i.e. for any $x \neq x'$, $\Pr_{h \leftarrow \mathcal{H}}[h(x) = h(x')] = 2^{-\secp}$. Let $\rho^{\regX,\regY}$ be any classical-quantum state. Let $\regK$ be a register that holds $h \gets \cH$, let $\regR$ be a register that holds $h(x)$ where $x$ is from register $\regX$, and define $\rho^{\regX,\regY,\regK,\regR}$ to be the entire system. Then, it holds that
\[\TD\left(\rho^{\regY,\regK,\regR},\rho^{\regY,\regK} \otimes \frac{1}{2^\secp}\sum_{r \in \{0,1\}^\secp}\dyad{r}{r}^\regR\right) \leq \frac{1}{2^{1+\frac{1}{2}(\mathbf{H}_\infty(\rho^{\regX,\regY}|\regY)-\secp)}}.\]

\end{importedtheorem}

\paragraph{Small superposition of terms.} We will also make use of the following lemma from \cite{C:BouFeh10}.

\begin{importedtheorem}(\cite{C:BouFeh10})\label{impthm:small-superposition}
Let $\regX,\regY$ be registers of arbitrary size, and let $\{\ket{i}\}_{i \in I}$ and $\{\ket{w}\}_{w \in W}$ be orthonormal bases of $\cX$. Let $\ket{\psi}^{\regX,\regY}$ and $\rho^{\regX,\regY}$ be of the form \[\ket{\psi} = \sum_{i \in J}\alpha_i\ket{i}^\regX\ket{\psi_i}^\regY  \text{   and     } \rho = \sum_{i \in J}|\alpha_i|^2\dyad{i}{i}^\regX \otimes \dyad{\psi_i}{\psi_i}^\regY\] for some subset $J \subseteq I$. Furthermore, let $\widehat{\rho}^{\regX,\regY}$ and $\widehat{\rho}_{\mathsf{mix}}^{\regX,\regY}$ be the classical-quantum states obtained by measuring register $\cX$ of $\ket{\psi}$ and $\rho$, respectively, in basis $\{\ket{w}\}_{w \in W}$ to observe outcome $w$. Then, \[\mathbf{H}_\infty(\widehat{\rho}^{\regX,\regY} | \regY) \geq \mathbf{H}_\infty(\widehat{\rho}_{\mathsf{mix}}^{\regX,\regY} | \regY) - \log|J|.\]
\end{importedtheorem}

\fi

\subsection{Secure computation}\label{subsec:secure-computation}

An ideal functionality $\cF$ is an interactive (classical or quantum) machine specifying some distributed computation. In this work, we will specifically focus on \emph{two-party} functionalities between party $A$ and party $B$. In some cases, party $B$ will have a random input, or no input. The ideal functionalities we will consider in this work are specified in \cref{fig:ideal-functionalities}.

\protocol{Ideal functionalities}{Ideal functionalities considered in this work.}{fig:ideal-functionalities}{

Setup: Parties $A$ and $B$, security parameter $\secp$.\\

$\underline{\cF_\OT}$
\begin{itemize}
    \item $\cF_\OT$ receives input $m_0,m_1 \in \{0,1\}^\secp$ from $A$ and $b \in \{0,1\}$ from $B$.
    \item $\cF_\OT$ delivers $m_b$ to $B$.
\end{itemize}

$\underline{\cF_{\ROT}}$
\begin{itemize}
    \item $\cF_\ROT$ receives input $m_0,m_1 \in \{0,1\}^\secp$ from $A$.
    \item $\cF_\ROT$ samples a bit $b \gets \{0,1\}$ and delivers $(b,m_b)$ to $B$.
\end{itemize}

$\underline{\cF_\CL[C]}$
\begin{itemize}
    \item $C$ is a classical circuit with two inputs, one of length $n_1 = n_1(\secp)$ and one of length $n_2 = n_2(\secp)$.
    \item $\cF_\CL[C]$ receives input $x \in \{0,1\}^{n_1}$ from $A$.
    \item $\cF_\CL[C]$ samples a string $r \gets \{0,1\}^{n_2}$ and delivers $C(x,r)$ to $B$.
\end{itemize}


$\underline{\cF_\QU[Q]}$ 
\begin{itemize}
    \item $Q$ is a quantum operation that takes as input a state on register $\regX$ of $n = n(\secp)$ qubits and outputs a state on register $\regY$. 
    \item $\cF_\QU[Q]$ receives as input a state on register $\regX$ from $A$.
    \item $\cF_\QU[Q]$ computes $Q(\regX) = \regY$ and delivers $\regY$ to $B$.
\end{itemize}
}

\paragraph{Security with abort.} In what follows, we will by default consider the notion of security with abort, where the ideal functionality $\cF$ is always modified to (1) know the identity of the corrupt party (if one exists) and (2) be slightly reactive: after the parties have provided input, the functionality computes outputs and sends output to the corrupt party only (if it expects output). Then the functionality awaits either a “deliver” or “abort” command from the corrupted party. Upon receiving “deliver”, the functionality delivers the honest party output. Upon receiving “abort”, the functionality instead delivers an abort message $\bot$ to the honest party. In the case where the corrupt party does not expect output, the functionality $\cF$ still awaits a ``deliver'' or ``abort'' from the corrupt party before delivering output (or $\bot$) to the honest party.



%

\paragraph{The real-ideal paradigm.} A two-party protocol $\Pi_\cF$ for computing the functionality $\cF$ consists of two families of quantum interactive machines $\{\sA_\secp\}_{\secp \in \bbN},\{\sB_\secp\}_{\secp \in \bbN}$. An adversary intending to attack the protocol by corrupting one of the parties can be described by a family of quantum interactive machines  $\{\Adv_\secp\}_{\secp \in \bbN}$ and a family of initial quantum states $\{\ket{\psi_\secp}^{\regX,\regA,\regD}\}_{\secp \in \bbN}$ on registers $(\regX,\regA,\regD)$, where $\regX$ is the honest party's input register, $\regA$ is the adversary's input register, and $\regD$ is given directly to the distinguisher. That is, the honest party takes as input the state on register $\regX$, $\Adv_\secp$ takes as input the state on register $\regA$,  and they interact in the protocol $\Pi_\cF$. Then, the honest party outputs a state on register $\regX'$, $\Adv_\secp$ outputs a state on register $\regA'$, and we define the random variable $\Pi_\cF[\Adv_\secp,\ket{\psi_\secp}]$ to consist of the resulting state on registers $(\regX',\regA',\regD)$, which will be given to a distinguisher. In the case where the honest party has no input, we don't include a register $\regX$, and just consider families $\{\ket{\psi_\secp}^{\regA,\regD}\}_{\secp \in \bbN}$ on registers $\regA$ and $\regD$. In the case where the honest party has a classical input, we assume that $\regX$ is in a standard basis state. In other words, we consider families $\{(x_\secp,\ket{\psi_\secp}^{\regA,\regD})\}_{\secp \in \bbN}$, where each $x_\secp$ is a classical string.

An \emph{ideal-world} protocol $\widetilde{\Pi}_\cF$ for functionality $\cF$ consists of ``dummy'' parties $\widetilde{A}$ and $\widetilde{B}$ that have access to an additional ``trusted'' party that implements $\cF$. That is, $\widetilde{A}$ and $\widetilde{B}$ only interact directly with $\cF$, providing inputs and receiving outputs, and do not interact with each other. We consider the execution of ideal-world protocols in the presence of a simulator, described by a family of quantum interactive machines $\{\Sim_\secp\}_{\secp \in \bbN}$ that controls either party $\widetilde{A}$ or $\widetilde{B}$. The execution of the protocol in the presence of the simulator also begins with a family of states $\{\ket{\psi_\secp}^{\regX,\regA,\regD}\}_{\secp \in \bbN}$ on registers $(\regX,\regA,\regD)$ as described above, and we define the analogous random variable $\widetilde{\Pi}_\cF[\Sim_\secp,\ket{\psi_\secp}]$.

\paragraph{Secure realization.} We define what it means for a protocol to securely realize an ideal functionality.

\begin{definition}[Secure realization]\label{def:secure-realization}
A protocol $\Pi_\cF$ \emph{securely realizes} the functionality $\cF$ if for any QPT adversary $\{\Adv_\secp\}_{\secp \in \bbN}$ corrupting party $M \in \{A,B\}$, there exists a QPT simulator $\{\Sim_\secp\}_{\secp \in \bbN}$ corrupting party $M$ such that for any QPT distinguisher $\{\sD_\secp\}_{\secp \in \bbN}$ and polynomial-size family of states $\{\ket{\psi_\secp}^{\regX,\regA,\regD}\}_{\secp \in \bbN}$, 
\[\bigg| \Pr[1 \gets \sD_\secp(\Pi_\cF[\Adv_\secp,\ket{\psi_\secp}])] - \Pr[1 \gets \sD_\secp(\widetilde{\Pi}_\cF[\Sim_\secp,\ket{\psi_\secp}])]\bigg| = \negl(\secp).\]
\end{definition}

\subsection{The XOR extractor}

\begin{importedtheorem}[\cite{ABKK}]\label{thm:XOR-extractor}
Let $\regX$ be an $n$-qubit register, and consider any quantum state $\ket{\gamma}^{\regA,\regX}$ that can be written as \[\ket{\gamma}^{\regA,\regX} = \sum_{u: \hw(u) < n/2} \ket{\psi_u}^{\regA} \ket{u}^{\regX},\] where $\hw(\cdot)$ denotes the Hamming weight. Let $\rho^{\regA,\regP}$ be the mixed state that results from measuring $\regX$ in the Hadamard basis to produce a string $x \in \{0,1\}^n$, and writing $\bigoplus_{i \in [n]}x_i$ into a single qubit register $\regP$. Then it holds that \[\rho^{\regA,\regP} = \Tr^{\regX}(\dyad{\gamma}{\gamma}) \otimes \left(\frac{1}{2}\dyad{0}{0} + \frac{1}{2}\dyad{1}{1}\right)^{\regP}.\] 
\end{importedtheorem}

\section{One-Shot String Oblivious Transfer}

\subsection{Impossibility in the CRS model}\label{subsec:impossibility}

First, we show that a classical shared random string is not sufficient to achieve one-shot (random receiver bit) string OT, even when parties can compute and communicate quantumly. The intuition is that the sender has the same view of the receiver right before it sends its message, and can thus run the receiver's honest computation on its message in order to learn the receiver's choice bit. This is easy to formalize in the case that the sender and receiver are running classical computations, but takes a little more care in the quantum setting.

We will rely on recent observations about the quantum equivalence between mapping and distinguishing. In particular, we will show that, EITHER the receiver cannot tell that the sender measured their choice bit, which violates security against a malicious sender, OR their exists an efficient adversarial receiver that can map between the sender's strings $m_0,m_1$ and can recover them both, violating security against a malicious receiver.

\begin{theorem}
There does not exist a one-message protocol that securely realizes the functionality $\cF_\ROT$ in the common reference string model, even if parties can compute and communicate quantumly.
\end{theorem}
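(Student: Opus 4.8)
The goal is to lift the classical intuition --- in a one-message protocol the sender, immediately before transmitting, holds everything (the common reference string and the message) that the receiver uses to compute its output, so it can itself simulate the receiver's computation and learn the choice bit, breaking security against a corrupt sender --- to the quantum setting, where the obstruction is that simulating the receiver's computation is destructive and so cannot simply be ``run and forgotten''. First I would fix a candidate one-message protocol realizing $\cF_\ROT$ in the common reference string model, fix $\crs$ and uniformly random, independent sender inputs $m_0, m_1 \in \{0,1\}^\secp$, and place copies of $m_0, m_1$ in the distinguisher's register $\regD$. Purifying, the honest sender's action becomes a unitary producing a pure state $\ket{\Psi}^{\cM\cE}$, where $\cM$ is transmitted and $\cE$ is the sender's discarded workspace, and the honest receiver's action becomes ``append ancilla $\cB$, apply a $\crs$-dependent unitary $U$, measure a single-qubit register $\cR_b$ and a $\secp$-qubit register $\cR_m$''.

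Setting $\ket{\Omega} \coloneqq (U \otimes \bbI_\cE)(\ket{\Psi}^{\cM\cE}\ket{0}^\cB)$ on registers $\cR_b\cR_m\cR_g\cE$, correctness of $\cF_\ROT$ together with the fact that the ideal functionality samples $b$ uniformly, plus a gentle-measurement argument (\cref{lemma:gentle-measurement}), should give that up to negligible trace distance \[\ket{\Omega} \approx \tfrac{1}{\sqrt 2}\ket{0}^{\cR_b}\ket{m_0}^{\cR_m}\ket{g_0}^{\cR_g\cE} + \tfrac{1}{\sqrt 2}\ket{1}^{\cR_b}\ket{m_1}^{\cR_m}\ket{g_1}^{\cR_g\cE}\] for some normalized $\ket{g_0}, \ket{g_1}$. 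Instantiating security against a corrupt sender with the honest-but-curious sender that keeps its purification $\cE$, simulation security and the fresh uniformity of $b$ in the ideal world force that no QPT algorithm predicts the receiver's bit from $\cE$, i.e.\ $\Tr_{\cR_g}\dyad{g_0}{g_0} \approx_c \Tr_{\cR_g}\dyad{g_1}{g_1}$.

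The heart of the argument is a dichotomy. Consider the malicious sender that generates $\ket{\Psi}^{\cM\cE}$, runs the receiver's $b$-determination step on its own copy of $\cM$, records the outcome $\widehat b$, and forwards the (possibly disturbed) message. \emph{Either} the honest receiver, run on this forwarded message, still outputs $\widehat b$ without aborting (with overwhelming probability) --- in which case the corrupt sender's recorded bit is correlated with the receiver's output, which a distinguisher (given both output registers) detects, since in the ideal world the receiver's bit is freshly uniform and independent of the corrupt sender's view; this contradicts security against a corrupt sender. \emph{Or} the honest receiver's behavior on the forwarded message differs from the honest run conditioned on $b=\widehat b$, i.e.\ the receiver's computation effectively distinguishes a tampered message from a fresh honest one. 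Here I would invoke the recent equivalence between quantum state \emph{distinguishing} and state \emph{mapping}: a procedure detecting the effect of the $b$-measurement yields an efficient operation mapping between the two branches $\ket{0}^{\cR_b}\ket{m_0}^{\cR_m}\ket{g_0}$ and $\ket{1}^{\cR_b}\ket{m_1}^{\cR_m}\ket{g_1}$ of $\ket{\Omega}$. A malicious receiver then runs the honest algorithm to obtain $(b, m_b)$, applies this map to move to the other branch, and reads off $m_{1-b}$; recovering both strings contradicts security against a corrupt receiver, since the ideal-world simulator sees only $(b, m_b)$ while $m_{1-b}$ is uniform and present in $\regD$, so it can produce $m_{1-b}$ only with probability $2^{-\secp}$.

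I expect the main obstacle to be making the second case rigorous: pinning down exactly which distinguishing task the honest receiver solves when it rejects a tampered message, quantifying the trace-distance ``gap'' between tampered and honest messages, and applying the mapping--distinguishing equivalence to extract an \emph{efficient} map between the branches of $\ket{\Omega}$ --- together with the bookkeeping that this map can be applied by a (non-uniform) corrupt receiver holding only its own registers and that it carries the residual post-measurement state from one branch to the other up to negligible error. The recurring quantum-specific difficulty is that, unlike the classical case, measurements are destructive, so neither party can ``copy, run, and retry'', and essentially all of the work lies in showing that the dichotomy above closes this gap.
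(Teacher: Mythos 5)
Your plan matches the paper's proof essentially step for step: fix $m_0,m_1$, view the honest receiver as a unitary $U_R$ followed by a standard-basis measurement, define the projectors $\Pi_b = U_R^\dagger\dyad{b,m_b}{b,m_b}U_R$, observe that correctness forces the sender's message into $\mathsf{Im}(\Pi_0+\Pi_1)$ (via gentle measurement if correctness is only $1-\negl$), have a malicious sender measure $\{\Pi_0,\Pi_1\}$ before forwarding, and then run the dichotomy you describe. The ``recent equivalence between distinguishing and mapping'' you gesture at for the second case is exactly what the paper invokes to close the gap you flag as the main obstacle --- specifically \cref{impthm:DS}, the inequality $\|\Pi_1\sD\Pi_0\ket{\psi}\|^2 + \|\Pi_0\sD\Pi_1\ket{\psi}\|^2 \ge \frac{1}{2}\big(\|\sD\ket{\psi}\|^2 - \|\sD\Pi_0\ket{\psi}\|^2 - \|\sD\Pi_1\ket{\psi}\|^2\big)^2$ from~\cite{cryptoeprint:2022/786}, which converts a projective distinguisher $\sD$ for $\ket{\psi}$ vs.~the post-measurement mixture directly into a branch-swapping map, so that the malicious receiver runs $U_R$, measures $(b,m_b)$, applies $U_R\sD U_R^\dagger$, and measures again to get $(1-b,m_{1-b})$. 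The extra scaffolding in your write-up (purifying the sender's workspace $\cE$, the claim $\Tr_{\cR_g}\dyad{g_0}{g_0}\approx_c\Tr_{\cR_g}\dyad{g_1}{g_1}$) is sound but not needed --- the paper works directly with $\ket{\psi}$ and the projectors, so you could drop it and land on the paper's streamlined version.
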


\begin{proof}
We will use the following imported theorem, which is a special case of \cite[Claim 3.5]{cryptoeprint:2022/786}.

\begin{importedtheorem}[\cite{cryptoeprint:2022/786}]\label{impthm:DS}
Let $\sD$ be a projector, $\Pi_0,\Pi_1$ be orthogonal projectors, and $\ket{\psi} \in \mathsf{Im}\left(\Pi_0+\Pi_1\right)$. Then,

\[\|\Pi_1\sD\Pi_0\ket{\psi}\|^2 + \|\Pi_0\sD\Pi_1\ket{\psi}\|^2 \geq \frac{1}{2}\left(\|\sD\ket{\psi}\|^2 - \left(\|\sD\Pi_0\ket{\psi}\|^2 + \|\sD\Pi_1\ket{\psi}\|^2 \right)\right)^2.\]
\end{importedtheorem}

Now consider any one-message protocol for $\cF_\ROT$, where the common reference string is sampled as $\crs \gets \cR$ from some distribution $\cR$ over classical strings. Fix any pair of sender inputs $(m_0,m_1)$, and let $\ket{\psi}$ be the message sent by the honest sender on input $m_0,m_1$ and $\crs$ (note that $\ket{\psi}$ is a sample from a distribution). Without loss of generality, we can consider the honest receiver strategy to apply a unitary $U_R$ on $\ket{\psi}\ket{0}$ (where the auxiliary register is initialized with sufficiently many qubits) and then measure an output register in the standard basis to obtain $(b,m_b)$. We define

\[\Pi_0 \coloneqq U_R^\dagger\dyad{0,m_0}{0,m_0}U_R, \ \ \, \Pi_1 \coloneqq U_R^\dagger\dyad{1,m_1}{1,m_1}U_R.\] 

Suppose that the protocol satisfies perfect correctness,\footnote{If it satisfies $1-\negl(\secp)$ correctness, we can apply Gentle Measurement (\cref{lemma:gentle-measurement}) to $\ket{\psi}$, which will only affect our conclusions by a $\negl(\secp)$ amount.} which implies that $\ket{\psi} \in \mathsf{Im}(\Pi_0 + \Pi_1)$. Now, suppose that an adversarial sender applies the measurement $\{\Pi_0,\Pi_1\}$ to $\ket{\psi}$ and then sends the resulting mixture over $\Pi_0\ket{\psi}$ and $\Pi_1\ket{\psi}$ to the receiver. There are two cases.

In the first case, there does not exist a QPT procedure that distinguishes $\ket{\psi}$ from the mixture over $\Pi_0\ket{\psi}$ and $\Pi_1\ket{\psi}$. In this case, the sender learns the bit $b$ obtained by the receiver from the outcome of measurement $\{\Pi_0,\Pi_1\}$, which violates security.

In the second case, there exists a QPT distinguisher $\sD$ (written as a projective measurement) that distinguishes with $\nonnegl(\secp)$ probability. Then by \cref{impthm:DS}, we have that \[\E_{\crs,\ket{\psi}}\left[\|\Pi_1\sD\Pi_0\ket{\psi}\|^2 + \|\Pi_0\sD\Pi_1\ket{\psi}\|^2\right] = \nonnegl(\secp),\] where the expectation is over the sampling of $\crs \gets \cR$ and the sampling of the sender's message $\ket{\psi}$. However, this means that the following receiver strategy will return $\{m_0,m_1\}$ with $\nonnegl(\secp)$ probability, which violates security. Begin with $\ket{\psi}\ket{0}$. Apply $U_R$ and measure the output register in the computational basis to obtain $(b,m_b)$. Then, apply $U_R \sD U_R^\dagger$ and measure the output register in the computational basis to obtain $(1-b,m_{1-b})$. This completes the proof.

\end{proof}

\subsection{Construction in the shared EPR pairs model}

In this section, we give our construction of one-shot (random receiver bit) string oblivious transfer in the shared EPR pairs model.

\paragraph{\bf Ingredients}
\begin{itemize}
    \item Non-interactive extractable commitment $(\Com,\ExtGen,\Ext)$ in the common random string model (\cref{subsec:commitments}). This is known from LWE (\cref{def:com-lwe}).
    \item A programmable hash function family $\{H_\secp\}_{\secp \in \bbN}$ that is sub-exponentially correlation intractable for efficiently verifiable approximate product relations with constant sparsity (\cref{subsec:CI}). This is known from the sub-exponential hardness of LWE (\cref{impthm:CI,impthm:CI2}).
    \item Non-interactive zero-knowledge argument $(\NIZK.\Prove,\allowbreak\NIZK.\Ver,\allowbreak\NIZK.\Sim)$ in the common random string model (\cref{subsec:ZK}). This is known from LWE (\cref{impthm:nizk}).
    \item Pseudorandom generator $\PRG$.
\end{itemize}



\paragraph{\bf Parameters}
\begin{itemize}
    \item Security parameter $\secp$.
    \item Correlation intractable hash security parameter $\secp_\CI \coloneqq \secp^{1/\delta}$, where $\delta > 0$ is the constant such that $\{H_{\secp_\CI}\}_{\secp_\CI \in \bbN}$ is $2^{-\secp_\CI^\delta}$-correlation intractable.
    \item Size of commitment key $h = h(\secp)$.
    \item Size of NIZK crs $n = n(\secp)$.
    \item Size of hash key $k = k(\secp_\CI)$.
    \item Approximation parameter $\alpha = 1/120$.
    \item Number of repetitions in each group $c = 480$.
    \item Sparsity $\rho = \frac{\binom{(1-\alpha)c}{(1/2)c}}{2^c} < \alpha$.
    \item Product parameter $t = t(\secp_\CI) = 180^3\secp_\CI \geq \secp_\CI / (\alpha-\rho)^3$.
    \item Total number of repetitions $\ell = \ell(\secp) = c \cdot t = \poly(\secp)$.
    \item $\PRG$ range $\{0,1\}^{2\secp\ell}$.
    \item CI hash range $\cY^t$, where $\cY$ is the set of subsets of $[c]$ of size $c/2$. We will also parse $T \in \cY^t$ as a subset of $[\ell]$ of size $\ell / 2$.
\end{itemize}

We remark that we have not tried to fully optimize the constants in the parameters above.

\paragraph{\bf Setup}
\begin{itemize}
    \item $\ell$ collections of EPR pairs indexed by $i \in [\ell]$. Each collection consists of one ``control'' pair $\{\regS_i^\ctl,\regR_i^\ctl\}$ and $2\secp$ ``message'' pairs on registers $\{\regS_{i,j}^\msg,\regR_{i,j}^\msg\}_{j \in [2\secp]}$. For each $i \in [\ell]$, we define $\regS_i \coloneqq (\regS_i^\ctl,\regS_{i,1}^\msg,\dots,\regS_{i,2\secp}^\msg)$ and $\regR_i \coloneqq (\regR_i^\ctl,\regR_{i,1}^\msg,\dots,\regR_{i,2\secp}^\msg)$.
    \item Commitment key $\ck \gets \{0,1\}^h$.
    \item NIZK common random string $\crs \gets \{0,1\}^n$.
    \item Correlation intractable hash key $\hk \gets \{0,1\}^k$.
    
\end{itemize}

Note that a shared uniformly random string can be obtained by measuring shared EPR pairs in the same basis, and thus this entire Setup can be obtained with just shared EPR pairs. 

Finally, given a commitment key $\ck$ for $\Com$ and a set $\overline{T} \subset [\ell]$, we define the NP language $\cL_{\ck,\overline{T}}$ of instance-witness pairs as follows. 

\[\left(\left(\{\widehat{\cm}_{i,0},\widehat{\cm}_{i,1}\}_{i \in \overline{T}},\{\cm_i\}_{i \in [\ell]}\right),\left(\{t_i\}_{i \in \overline{T}},\Delta,s\right)\right) \in \cL_{\ck,\overline{T}}\] if and only if\footnote{Technically, the random coins used to compute the commitments must also be included in the witness.}

\begin{align*}
    &\forall i \in \overline{T}, \widehat{\cm}_{i,0} \in \Com(\ck,t_i) \, \wedge \, \widehat{\cm}_{i,1} \in \Com(\ck,t_i \oplus \Delta), \text{ and}\\
    &\forall i \in [\ell], \cm_i \in \Com(\ck,(\cdot,x_i,\cdot)), \text{where } (x_1,\dots,x_\ell) \coloneqq \PRG(s).
\end{align*}



Now, our protocol is described in \cref{fig:NIOT}.

\protocol{One-shot protocol for $\cF_\ROT$}{A protocol for one-shot random string OT in the shared EPR pair model.}{fig:NIOT}{
{\underline{Sender message.} ~~ Input strings $m_0,m_1 \in \{0,1\}^\secp$.} 
    \begin{enumerate}
        \item Sample a PRG seed $s \gets \{0,1\}^\secp$ and set $(x_1,\dots,x_\ell) \coloneqq \PRG(s)$, where each $x_i \in \{0,1\}^{2\secp}$.
        \item For each $i \in [\ell]$:
        \begin{itemize}
            \item For each $j \in [2\secp]$ such that $x_{i,j} = 1$, apply a CNOT gate from register $\regS_i^\ctl$ to register $\regS_{i,j}^\msg$.
            \item Measure $\{\cS_{i,j}^\msg\}_{j \in [2\secp]}$ in the standard basis to obtain $v_i \in \{0,1\}^{2\secp}$ and measure $\cS_i^\ctl$ in the Hadamard basis to obtain $h_i \in \{0,1\}$. 
            \item Compute $\cm_i \coloneqq \Com(\ck,(v_i,x_i,h_i);r_i)$, where $r_i \gets \{0,1\}^\secp$ are the random coins used for commitment.
        \end{itemize}
        \item Compute $T = H_\secp(\hk,(\cm_1,\dots,\cm_\ell)) \subset [\ell]$ and let $\overline{T} \coloneqq [\ell]\setminus T$.
        \item Sample $\Delta \gets \{0,1\}^\secp$ and for each $i \in \overline{T}$: 
        \begin{itemize}
            \item Sample $t_i \gets \{0,1\}^\secp$ and compute $\widehat{\cm}_{i,0} \coloneqq \Com(\ck,t_i;r_{i,0})$ and $\widehat{\cm}_{i,1} \coloneqq \Com(\ck,t_i \oplus \Delta;r_{i,1})$ where $r_{i,0},r_{i,1} \gets \{0,1\}^\secp$ are the random coins used for commitment.
            \item Define $z_{i,0} = (t_i,r_{i,0}) \oplus v_i, ~~ z_{i,1} = (t_i \oplus \Delta,r_{i,1})  \oplus v_i \oplus x_i.$
        \end{itemize}
        \item Define \[\widetilde{m}_0 \coloneqq m_0 \oplus \bigoplus_{i \in \overline{T}} t_i, ~~ \widetilde{m}_1 \coloneqq m_1 \oplus \Delta \oplus \bigoplus_{i \in \overline{T}} t_i.\]
        \item Compute $\pi \gets \NIZK.\Prove\left(\crs,\left(\{\widehat{\cm}_{i,0},\widehat{\cm}_{i,1}\}_{i \in \overline{T}},\{\cm_i\}_{i \in [\ell]}\right),\left(\{t_i\}_{i \in \overline{T}}, \Delta, s\right)\right)$ for the language $\cL_{\ck,\overline{T}}$.

        \item Send $\left( \{\cm_i\}_{i \in [\ell]},\{v_i,x_i,h_i,r_i\}_{i \in T}, \{\widehat{\cm}_{i,0},\widehat{\cm}_{i,1},z_{i,0},z_{i,1}\}_{i \in \overline{T}}, \pi, \widetilde{m}_0,\widetilde{m}_1\right)$ to the receiver. 
    \end{enumerate}
    
\medskip
{\underline{Receiver computation.}} ~~ In what follows, abort and output $\bot$ if any check fails.
    \begin{enumerate}
        \item Compute $T = H_\secp(\hk,(\cm_1,\dots,\cm_\ell))$ and check that for all $i \in T$, $\cm_i = \Com(\ck,(v_i,x_i,h_i);r_i)$.
        \item For each $i \in T$, define $\ket{\psi_{v_i,x_i,h_i}} \coloneqq \frac{1}{\sqrt{2}}\left({\ket{0,v_i} + (-1)^{h_i}\ket{1,v_i \oplus x_i}}\right)$, and measure register $\regR_i$ in the basis $\left\{\dyad{\psi_{v_i,x_i,h_i}}{\psi_{v_i,x_i,h_i}}, \bbI - \dyad{\psi_{v_i,x_i,h_i}}{\psi_{v_i,x_i,h_i}}\right\}$. Check that for all $i \in T$, the first outcome is observed.
        \item Check that $\NIZK.\Ver\left(\crs,\left(\{\widehat{\cm}_{i,0},\widehat{\cm}_{i,1}\}_{i \in \overline{T}},\{\cm_i\}_{i \in [\ell]}\right),\pi\right) = \top$.
        \item For each $i \in \overline{T}$, measure register $\regR_i$ in the standard basis to obtain $b_i \in \{0,1\}$ and  $v_i' \in \{0,1\}^{2\secp}$, compute $(t_i',r_i') = z_{i,b_i} \oplus v_i'$, and check that for each $i \in \overline{T}$, $\widehat{\cm}_{i,b_i} = \Com(\ck,t_i';r_i')$.
        \item Output \[b \coloneqq \bigoplus_{i \in \overline{T}} b_i, ~~ m_b \coloneqq \widetilde{m}_b \oplus \bigoplus_{i \in \overline{T}}t'_i.\]
    \end{enumerate}
}

\subsection{Security}

\begin{theorem}\label{thm:one-shot-security}
The protocol in \cref{fig:NIOT} securely realizes (\cref{def:secure-realization}) the functionality $\cF_\ROT$. Thus, assuming the sub-exponential hardness of LWE, there exists a one-message protocol for $\cF_\ROT$ in the shared EPR pair model.
\end{theorem}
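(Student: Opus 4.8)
The plan is to build a simulator separately for a corrupt receiver (party $B$) and a corrupt sender (party $A$), and to prove real/ideal indistinguishability via a sequence of hybrids, using the LWE-based primitives (extractable/hiding commitment, NIZK, PRG) and the sub-exponentially correlation-intractable hash exactly as listed in the Ingredients. Completeness of \cref{fig:NIOT} is a routine check: if $b=\bigoplus_{i\in\overline T}b_i$, the receiver recovers $\bigoplus_{i\in\overline T}t'_i=\bigoplus_{i\in\overline T}t_i\oplus b\Delta$, so $m_b=\widetilde m_b\oplus\bigoplus_{i\in\overline T}t'_i$ as intended.

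\textbf{Corrupt receiver.} Since the receiver sends nothing, the simulator $\Sim$ must, from a single message, present $\Adv$ a view indistinguishable from the honest sender's, while using only the string $m_b$ it learns from $\cF_\ROT$ for the receiver's effective choice bit $b$. The key observation is that from the receiver's side, having the sender measure $\regS_i^\ctl$ in the Hadamard basis and keep the outcome $h_i$ secret (as happens for $i\in\overline T$) leaves the register $\regR_i$ in the \emph{same} mixed state $\tfrac12\dyad{0,v_i}+\tfrac12\dyad{1,v_i\oplus x_i}$ as measuring $\regS_i^\ctl$ in the \emph{standard} basis; but the latter also collapses $\regR_i^\ctl$, fixing $b_i$. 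So $\Sim$ simulates the setup, picks the check set $T$ first and uses programmability of $H$ to set $\hk$ with $H(\hk,\{\cm_i\})=T$; for $i\in T$ it acts honestly (Hadamard-measuring $\regS_i^\ctl$ and committing to the real $h_i$), and for $i\in\overline T$ it standard-measures $\regS_i^\ctl$ and commits to an arbitrary $h_i$; this lets $\Sim$ read off $b=\bigoplus_{i\in\overline T}b_i$, forward it to $\cF_\ROT$, obtain $m_b$, set $\widetilde m_b$ so the receiver recovers $m_b$, set $\widetilde m_{1-b}$ uniformly at random, and simulate the NIZK proof. Indistinguishability follows by hybrids: zero-knowledge (removing the need for the witness); programmability (reordering the choice of $T$ is distribution-preserving since $H(\hk,\cdot)$ is uniform on each fixed input); hiding of $\Com$ for the never-opened ``garbage'' $h_i$, $i\in\overline T$; the observation above for the change of measurement basis; and finally hiding of the never-opened $\{\widehat\cm_{i,1-b_i}\}$ together with pseudorandomness of $\PRG$ (which keeps the $x_i$, $i\in\overline T$, and hence the masks of $z_{i,1-b_i}$ and the value $\Delta$, hidden from the receiver), allowing us to replace $\widetilde m_{1-b}$ by a uniform string.

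\textbf{Corrupt sender.} Here $\Sim$ simulates the setup with the commitment key in extraction mode (keeping $\ek$), runs $\Adv$ to obtain its message, runs the honest receiver's checks (including the quantum measurement checks on $\{\regR_i\}_{i\in T}$) on the halves of the EPR pairs it holds, and sends \textsf{abort} to $\cF_\ROT$ if any check fails. Otherwise it extracts $(v_i^{\mathsf{ext}},x_i^{\mathsf{ext}},h_i^{\mathsf{ext}})$ from each $\cm_i$ and $(t_i^{(0)},t_i^{(1)})$ from $\{\widehat\cm_{i,0},\widehat\cm_{i,1}\}_{i\in\overline T}$, sets $\Delta^{\mathsf{ext}}:=t_i^{(0)}\oplus t_i^{(1)}$ (well-defined and consistent across $i\in\overline T$ by soundness of the NIZK for $\cL_{\ck,\overline T}$), computes $m_0^{\mathsf{ext}}:=\widetilde m_0\oplus\bigoplus_{i\in\overline T}t_i^{(0)}$ and $m_1^{\mathsf{ext}}:=\widetilde m_1\oplus\Delta^{\mathsf{ext}}\oplus\bigoplus_{i\in\overline T}t_i^{(0)}$, and sends $(m_0^{\mathsf{ext}},m_1^{\mathsf{ext}})$ to $\cF_\ROT$. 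The ``easy'' half of the analysis is that, conditioned on no abort, the string the real receiver would recover for whatever bit it obtains equals $m_b^{\mathsf{ext}}$ — this is immediate from NIZK soundness (common offset) plus binding/extractability of $\Com$. The substantive half is that, conditioned on all checks passing, the receiver's output bit is statistically close to uniform and (essentially) independent of $\Adv$'s residual state: by \cref{thm:XOR-extractor} (plus Gentle Measurement, \cref{lemma:gentle-measurement}) this reduces to showing that the receiver's registers $\{\regR_i\}_{i\in\overline T}$ are close to the honest states $\{\ket{\psi_{v_i,x_i,h_i}}\}$, which in turn holds unless $\Adv$ managed to produce a set of commitments $\{\cm_i\}$ whose hash $T$ is ``bad'' — passing the check on all of $T$ while $\Adv$ has effectively deviated on many indices of $\overline T$.

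\textbf{The main obstacle.} Ruling out such ``bad'' hashes is where correlation intractability enters, and it is the crux of the proof. A priori, deciding whether a candidate output $T$ is bad seems to require a \emph{quantum} measurement of the receiver's registers, which is outside the scope of \cref{impthm:CI} (which handles classically efficiently verifiable approximate product relations). I would resolve this via complexity leveraging over the PRG seed, exactly as outlined in the technical overview: the reduction guesses $s\in\{0,1\}^\secp$, sets $x_i:=\PRG(s)_i$, and uses that for a \emph{fixed} $x_i$ the honest receiver's measurement check on index $i$ is equivalent to applying $\mathrm{CNOT}^{x_i}$ from $\regR_i^\ctl$ onto $\regR_i^\msg$ and then measuring in the standard/Hadamard bases; ``pre-measuring'' the receiver's registers this way yields classical values $(\hat v_i,\hat h_i)$, and combined with $\Ext(\ek,\cdot)$ this turns ``index $i$ is honest'' into a classically efficiently checkable predicate ($\cm_i$ extracts to $(\hat v_i,x_i,\hat h_i)$ with $x_i=\PRG(s)_i$). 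Grouping $[\ell]$ into $t$ blocks of size $c$ and parsing $T$ block-wise, the bad event becomes a sparse, efficiently verifiable, $\alpha$-approximate product relation, and the parameter choices ($\alpha=1/120$, $c=480$, $\rho=\binom{(1-\alpha)c}{c/2}/2^c<\alpha$, $t=180^3\secp_\CI\ge\secp_\CI/(\alpha-\rho)^3$) are precisely those required by \cref{impthm:CI}. Because the NIZK forces the sender to derive all $x_i$ from a \emph{single} $\secp$-bit seed, the guessing loss is only $2^\secp$, independent of $\ell$; taking $\secp_\CI$ a sufficiently large polynomial in $\secp$, sub-exponential correlation intractability (\cref{impthm:CI,impthm:CI2}, hence sub-exponential LWE) dominates this loss and bounds the probability of a bad hash by a negligible function. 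Combining the two cases gives the theorem, with the instantiation from sub-exponential LWE following from \cref{def:com-lwe,impthm:CI2,impthm:nizk} and the standard construction of pseudorandom generators.
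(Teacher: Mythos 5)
Your proposal takes essentially the same route as the paper: the same two simulators, the same hybrid structure (extraction-mode commitment key, coherent commitment check, NIZK soundness for the common offset, the XOR-extractor plus Gentle Measurement for uniformity of $b$), and, most importantly, the same resolution of the quantum-verification obstacle via PRG-seed guessing, pre-measurement of the receiver's registers, and a reduction to a classically efficiently verifiable approximate product relation against a sub-exponentially CI hash.

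Two points to tighten. First, in the corrupt-receiver direction the information flow with $\cF_\ROT$ is backwards in your sketch: $B$ has no input to $\cF_\ROT$, so the simulator does not ``read off $b$ and forward it''; it \emph{receives} $(b,m_b)$ from the functionality and must then sample $\{b_i\}_{i\in\overline T}$ conditioned on $\bigoplus_i b_i=b$. In the hybrid chain this is justified by first switching the sender's measurement of $\{\regS_i^\ctl\}_{i\in\overline T}$ from Hadamard to standard basis (distribution-preserving since these registers are out of the adversary's view), which makes the $b_i$ independently uniform and lets one equivalently post-select. Second, in the corrupt-sender direction the CI argument alone only gives that the residual $\overline T$-registers are within a $|\overline T|/30$-Hamming-weight ball of the honest product state; you then apply the coherent commitment-opening check $\bigotimes_{i\in\overline T}\Pi[\ck,\widehat\cm_{i,0},z_{i,0},z_{i,1}]$, which in principle could push the state outside that ball. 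The paper needs a separate counting argument (a Cauchy–Schwarz bound over the small superposition together with a case split on the fraction of indices whose extracted openings are consistent) showing the post-check state still sits within a $|\overline T|/2$-weight ball — exactly the threshold the XOR extractor needs. Your sketch collapses these two projections into one; the second step is not automatic and deserves to be spelled out.
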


The proof of this theorem follows from \cref{lemma:ROT-sender-security} and \cref{lemma:ROT-receiver-security} below.

\begin{lemma}\label{lemma:ROT-sender-security}
The protocol in \cref{fig:NIOT} is secure against a malicious sender.
\end{lemma}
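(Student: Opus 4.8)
The plan is to exhibit a simulator $\Sim$ that runs the honest receiver's verification procedure against the malicious sender's single message, decides whether to abort, and, on non-abort, extracts the sender's two inputs from the commitments and forwards them to $\cF_\ROT$. Concretely, $\Sim$ generates the setup exactly as an honest receiver would, except that it samples the commitment key with a trapdoor, $(\ck,\ek)\gets\ExtGen(1^\secp)$; it hands $\Adv$ its EPR halves together with $(\ck,\crs,\hk)$, receives the sender message, and then runs Steps~1--4 of the receiver computation in \cref{fig:NIOT} verbatim (including the quantum measurements on the $\regR$ registers). If any check fails, $\Sim$ instructs $\cF_\ROT$ to abort. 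Otherwise it computes $t_i^{(0)}\coloneqq\Ext(\ek,\widehat{\cm}_{i,0})$ and $t_i^{(1)}\coloneqq\Ext(\ek,\widehat{\cm}_{i,1})$ for each $i\in\overline T$, sets $\Delta\coloneqq t_{i^*}^{(0)}\oplus t_{i^*}^{(1)}$ for an arbitrary $i^*\in\overline T$, feeds $\cF_\ROT$ the inputs $m_0\coloneqq\widetilde{m}_0\oplus\bigoplus_{i\in\overline T}t_i^{(0)}$ and $m_1\coloneqq\widetilde{m}_1\oplus\Delta\oplus\bigoplus_{i\in\overline T}t_i^{(0)}$, and instructs it to deliver. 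Since $\Sim$'s operations on $\regA$ and on the $\regR$ registers are exactly those of the honest receiver, the adversary's residual state and the abort probability are reproduced perfectly; the only thing left is to show that, conditioned on non-abort, the pair $(b,m_b)$ that $\cF_\ROT$ delivers (with $b\gets\{0,1\}$) is statistically close to the honest receiver's output, jointly with $\regA$ and $\regD$.

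I would prove this via hybrids on the real execution. First switch $\ck\gets\{0,1\}^h$ to $(\ck,\ek)\gets\ExtGen(1^\secp)$, indistinguishable by key-indistinguishability of the extractable commitment. Next, abort if the NIZK verifies on a false statement, which changes the output by $\negl(\secp)$ by soundness; from here on, for every $i\in\overline T$ the commitments $\widehat{\cm}_{i,0},\widehat{\cm}_{i,1}$ open to $t_i,t_i\oplus\Delta$ for a common $\Delta$, and every $\cm_i$ opens to a tuple whose middle entry is the $i$-th block of $\PRG(s)$ for the committed seed $s$. Then abort if some passing check is inconsistent with the extractor's output, which changes the output by $\negl(\secp)$ by the binding property implied by extractability; after this, whenever the receiver does not abort, the string $t_i'$ it recovers equals $t_i^{(b_i)}$ for every $i\in\overline T$, so $\bigoplus_{i\in\overline T}t_i'=\bigoplus_{i\in\overline T}t_i^{(0)}\oplus b\Delta$ with $b=\bigoplus_{i\in\overline T}b_i$, and therefore the honest receiver's output message is exactly $\widetilde{m}_b\oplus\bigoplus_{i\in\overline T}t_i^{(0)}\oplus b\Delta$, i.e.\ $m_b$ for the strings $m_0,m_1$ that $\Sim$ extracts. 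At this point the receiver's output is $(b,m_b)$ with $b=\bigoplus_{i\in\overline T}b_i$ and $m_0,m_1$ fixed given the transcript and $\ek$, so the proof reduces to the single claim that $b$ is within negligible trace distance of a uniform bit that is independent of the transcript, of $\ek$, and of the adversary's residual state. Given that claim, replacing $b$ by a fresh uniform bit while still outputting $m_b$ yields exactly the ideal execution.

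The core claim is where the real work lies, and I would establish it following the template of \cref{sec:tech-overview}. The difficulty is that deciding whether a hash output $T=H_\secp(\hk,(\cm_1,\dots,\cm_\ell))$ is ``bad'' --- meaning the cut-and-choose on $T$ passes but the post-selected joint state on $\{\regR_i\}_{i\in\overline T}$ is far from honest --- appears to require a quantum measurement of the $\regR$ registers, which is outside the reach of current correlation-intractability results. To make the bad event classically checkable I would use complexity leveraging: guess the $\secp$-bit PRG seed $s$ in advance (a $2^{-\secp}$ loss, absorbed by the sub-exponential security of the CI hash, whose security parameter $\secp_\CI$ is set accordingly), so that $x_1,\dots,x_\ell=\PRG(s)$ are fixed before $\Adv$ runs; then for each $i$ pre-measure $\regR_i$ --- apply CNOTs from $\regR_i^\ctl$ into $\{\regR_{i,j}^\msg\}_{j:x_{i,j}=1}$, measure $\regR_i^\ctl$ in the Hadamard basis to obtain $h_i$ and $\{\regR_{i,j}^\msg\}$ in the standard basis to obtain $v_i$ --- which commutes with $\Adv$, since $\Adv$ touches only the $\regS$ registers. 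Given this classical data and $\ek$, the predicate ``$\cm_i$ is an honest commitment to $(v_i,x_i,h_i)$'' is classically and efficiently checkable, so splitting $[\ell]$ into $t$ groups of $c$ and parsing $T$ as a size-$c/2$ subset within each group yields an efficiently-verifiable $\alpha$-approximate product relation $R$ (\cref{def:product-relations}) with $((\cm_1,\dots,\cm_\ell),T)\in R$ iff all $\cm_i$ with $i\in T$ are honest while too many $\cm_i$ with $i\in\overline T$ are dishonest; the parameters $\alpha,c,\rho,t$ are chosen so that \cref{impthm:CI} applies. Correlation intractability then says the sender's message avoids $R$ except with sub-exponentially small probability; combining this with a standard cut-and-choose/sampling argument and \cref{lemma:gentle-measurement} (which, conditioned on the cut-and-choose passing, force all $\cm_i$ with $i\in T$ to be honest and the corresponding $\regR_i$ to be close to $\ket{\psi_{v_i,x_i,h_i}}$) gives that all but a small constant fraction of $\{\cm_i\}_{i\in\overline T}$ are honest and the post-selected joint state on $\{\regR_i\}_{i\in\overline T}$, together with the adversary, is within small trace distance of a state holding $\ket{\psi_{v_i,x_i,h_i}}$ in each honest coordinate.

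Finally I would convert this into a statement about $b$. The coordinate-wise unitary $V_i$ that CNOTs $\regR_i^\ctl$ into $\{\regR_{i,j}^\msg\}_{j:x_{i,j}=1}$ and then XORs $v_i$ out of $\regR_i^\msg$ maps $\ket{\psi_{v_i,x_i,h_i}}$ to $(H\ket{h_i})^{\regR_i^\ctl}\otimes\ket{0^{2\secp}}^{\regR_i^\msg}$ and leaves the standard-basis measurement of $\regR_i^\ctl$ unchanged; after additionally applying $H^{\otimes|\overline T|}$ and $X^{h_{\overline T}}$ to the control qubits, the certified closeness from the previous paragraph says exactly that the control register is, up to small trace distance, a superposition of standard-basis strings of Hamming weight $<|\overline T|/2$ tensored against the adversary. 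The receiver's standard-basis measurement of the control qubits is then, in this basis, a Hadamard-basis measurement, so \cref{thm:XOR-extractor} implies that the XOR $b$ of the outcomes is negligibly close to a uniform bit in tensor product with the adversary's state and the rest of the transcript, proving the core claim and hence the lemma. The main obstacle is precisely this interface: arranging the pre-measurement so that ``honesty of $\cm_i$'' becomes a classically checkable predicate (which forces the PRG and complexity leveraging), while setting up the quantum cut-and-choose so that passing on $T$ certifies the state on $\overline T$ in exactly the low-Hamming-weight form consumed by the XOR extractor, and making sure the quantities the receiver extracts to reconstruct $m_b$ leak nothing about the parity $b$.
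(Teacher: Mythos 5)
Your overall strategy tracks the paper's closely: extractable commitments to pull out the sender's inputs, complexity leveraging to guess the PRG seed so the receiver's registers can be pre-measured and the predicate ``$\cm_i$ is honest'' becomes classically verifiable, correlation intractability for efficiently verifiable product relations on this pre-measured data, and finally a change of basis followed by the XOR extractor to conclude that the receiver's bit $b$ is uniform. These are exactly the paper's hybrids $\cH_0,\ldots,\cH_9$, and the simulator you write is the paper's.

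There is, however, a real gap between your third and fourth paragraphs. Your CI argument certifies that, conditioned on the Step~2 (cut-and-choose) check passing, the joint state on $\{\regR_i\}_{i\in\overline T}$ lies (up to negligible trace distance) in the low-weight subspace around $\bigotimes_i\ket{\psi_{v_i,x_i,h_i}}$. But the bits $b_i$ you feed into the XOR extractor are produced by the receiver's \emph{Step~4} measurement, which first projects each $\regR_i$ onto the set of standard-basis strings $(b_i,v_i')$ for which $z_{i,b_i}\oplus v_i'$ is a valid opening of $\widehat{\cm}_{i,b_i}$. That projector is standard-basis diagonal, does not commute with the ``close-to-honest'' projector, and is built from $\widehat{\cm}_{i,0},\widehat{\cm}_{i,1},z_{i,0},z_{i,1}$, which the adversary chooses freely and which are \emph{not} forced by the NIZK or the CI argument to be consistent with the $(v_i,x_i)$ extracted from $\cm_i$. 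So without further argument the post-Step-4 state could leave the low-weight subspace entirely, and the XOR extractor no longer applies. The paper handles this in a dedicated hybrid step ($\cH_5\approx_s\cH_6$): define for each $i\in\overline T$ a classical bit $f_i$ recording whether $\widehat{\cm}_{i,0},\widehat{\cm}_{i,1},z_{i,0},z_{i,1}$ are consistent with the extracted $(v_i,x_i)$, and split into cases. If a $\geq 1/2-1/30$ fraction of $f_i=1$, then a Cauchy--Schwarz counting argument over the low-weight decomposition shows the Step~4 projector annihilates the state up to $2^{-\Omega(\secp)}$ norm, so that branch contributes nothing; if a $<1/2-1/30$ fraction have $f_i=1$, then the Step~4 projector acts trivially on every coordinate with $e_i=0\wedge f_i=0$, and combining $\hw(e)<|\overline T|/30$ with this bound shows the surviving state stays inside the Hamming-weight-$<|\overline T|/2$ subspace, which is exactly the precondition of the XOR extractor. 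Your write-up needs this (or an equivalent) bridge; as stated, the extractor is being applied to the pre-Step-4 state rather than to the state the receiver actually measures.
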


\begin{proof}
Let $\{\Adv_\secp\}_{\secp \in \bbN}$ be a QPT adversary corrupting the sender, which takes as input register $\regA$ of $\{\ket{\psi_\secp}^{\regA,\regD}\}_{\secp \in \bbN}$. Note that we don't consider a register $\regX$ holding the honest party's input, since an honest receiver has no input. We will define a sequence of hybrids, beginning with the real distribution $\Pi_{\cF_\ROT}[\Adv_\secp,\ket{\psi_\secp}]$ and ending with the distribution $\widetilde{\Pi}_{\cF_\ROT}[\Sim_\secp,\ket{\psi_\secp}]$ defined by a simulator $\{\Sim_\secp\}_{\secp \in \bbN}$. Each hybrid is a distribution described by applying an operation to the input register $\cA$, and a QPT distinguisher will obtain the output of this distribution along with the register $\cD$. We drop the dependence of the hybrids on $\secp$ for convenience.\\

\noindent\underline{$\cH_0(\regA)$}
\begin{itemize}
    \item Prepare $\ell$ collections of EPR pairs on registers $\{\regS_i,\regR_i\}_{i \in [\ell]}$, and sample $\ck \gets \{0,1\}^h$, $\crs \gets \{0,1\}^n$, and $\hk \gets \{0,1\}^k$.
    \item Run $\Adv_\secp$ on input $\regA,\{\regS_i\}_{i \in [\ell]},\ck,\crs,\hk$ until it outputs a message
    \[\left( \{\cm_i\}_{i \in [\ell]},\{v_i,x_i,h_i,r_i\}_{i \in T}, \{\widehat{\cm}_{i,0},\widehat{\cm}_{i,1},z_{i,0},z_{i,1}\}_{i \in \overline{T}}, \pi, \widetilde{m}_0,\widetilde{m}_1\right)\] and a state on register $\regA'$.
    \item Run the Receiver's honest computation on the sender's message to obtain an output $(b,m_b)$ or $\bot$. Output either $(\regA',(b,m_b))$ or $(\regA',\bot)$.
\end{itemize}

\noindent\underline{$\cH_1(\regA)$}
\begin{itemize}
    \item Prepare $\ell$ collections of EPR pairs on registers $\{\regS_i,\regR_i\}_{i \in [\ell]}$, and sample \textcolor{red}{$(\ck,\ek) \gets \ExtGen(1^\secp)$}, $\crs \gets \{0,1\}^n$, and $\hk \gets \{0,1\}^k$.
    \item Run $\Adv_\secp$ on input $\regA,\{\regS_i\}_{i \in [\ell]},\ck,\crs,\hk$ until it outputs a message
    \[\left( \{\cm_i\}_{i \in [\ell]},\{v_i,x_i,h_i,r_i\}_{i \in T}, \{\widehat{\cm}_{i,0},\widehat{\cm}_{i,1},z_{i,0},z_{i,1}\}_{i \in \overline{T}}, \pi, \widetilde{m}_0,\widetilde{m}_1\right)\] and a state on register $\regA'$.
    \item Run the Receiver's honest computation on the sender's message to obtain an output $(b,m_b)$ or $\bot$. Output either $(\regA',(b,m_b))$ or $(\regA',\bot)$.
\end{itemize}

\noindent\underline{$\cH_2(\regA)$}
\begin{itemize}
    \item Prepare $\ell$ collections of EPR pairs on registers $\{\regS_i,\regR_i\}_{i \in [\ell]}$, and sample $(\ck,\ek) \gets \ExtGen(1^\secp)$, $\crs \gets \{0,1\}^n$, and $\hk \gets \{0,1\}^k$.
    \item Run $\Adv_\secp$ on input $\regA,\{\regS_i\}_{i \in [\ell]},\ck,\crs,\hk$ until it outputs a message
    \[\left( \{\cm_i\}_{i \in [\ell]},\{v_i,x_i,h_i,r_i\}_{i \in T}, \{\widehat{\cm}_{i,0},\widehat{\cm}_{i,1},z_{i,0},z_{i,1}\}_{i \in \overline{T}}, \pi, \widetilde{m}_0,\widetilde{m}_1\right)\] and a state on register $\regA'$.
    \item Run Steps 1-3 of the Receiver's honest computation on the sender's message.
    \item We will now \emph{coherently} apply the check described in Step 4 to the registers $\{\regR_i\}_{i \in \overline{T}}$. First we introduce some notation. For commitment key $\ck$, commitment $\widehat{\cm}$, and two strings $z_0,z_1 \in \{0,1\}^{2\secp}$, let $\Pi[\ck,\widehat{\cm},z_0,z_1]$ be a projection onto strings $(b,v') \in \{0,1\}^{1+2\secp}$ such that $\widehat{\cm} = \Com(\ck,t;r)$, where $(t,r) \coloneqq z_b \oplus v'$. 
    
    \textcolor{red}{Attempt to project registers $\{\regR_i\}_{i \in \overline{T}}$ onto \[\bigotimes_{i \in \overline{T}}\Pi[\ck,\widehat{\cm}_{i,0},z_{i,0},z_{i,1}]^{\regR_i},\] and aborts if the projection fails.}
    \item If there was an abort, output $(\regA',\bot)$. Otherwise, for each $i \in \overline{T}$, measure register $\cR_i$ in the standard basis to obtain $b_i \in \{0,1\}$ and $v_i' \in \{0,1\}^{2\secp}$, and compute $(t_i',r_i') = z_{i,b_i} \oplus v_i'$. Then, define \[b \coloneqq \bigoplus_{i \in \overline{T}} b_i, ~~ m_b \coloneqq \widetilde{m}_b \oplus \bigoplus_{i \in \overline{T}}t'_i,\] and output $(\regA',(b,m_b))$.
\end{itemize}

\noindent\underline{$\cH_3(\regA)$}
\begin{itemize}
    \item Prepare $\ell$ collections of EPR pairs on registers $\{\regS_i,\regR_i\}_{i \in [\ell]}$, and sample $(\ck,\ek) \gets \ExtGen(1^\secp)$, $\crs \gets \{0,1\}^n$, and $\hk \gets \{0,1\}^k$.
    \item Run $\Adv_\secp$ on input $\regA,\{\regS_i\}_{i \in [\ell]},\ck,\crs,\hk$ until it outputs a message
    \[\left( \{\cm_i\}_{i \in [\ell]},\{v_i,x_i,h_i,r_i\}_{i \in T}, \{\widehat{\cm}_{i,0},\widehat{\cm}_{i,1},z_{i,0},z_{i,1}\}_{i \in \overline{T}}, \pi, \widetilde{m}_0,\widetilde{m}_1\right)\] and a state on register $\regA'$.
    \item Run Steps 1-3 of the Receiver's honest computation on the sender's message.
    \item Attempt to project registers $\{\regR_i\}_{i \in \overline{T}}$ onto \[\bigotimes_{i \in \overline{T}}\Pi[\ck,\widehat{\cm}_{i,0},z_{i,0},z_{i,1}]^{\regR_i},\] and abort if the projection fails.
    \item \textcolor{red}{For each $i \in \overline{T},b \in \{0,1\}$, compute $t_{i,b} \gets \Ext(\ek,\widehat{\cm}_{i,b})$. Abort if any $t_{i,b} = \bot$ or if there does not exist $\Delta$ such that $t_{i,1} = \Delta \oplus t_{i,0}$ for all $i \in \overline{T}$.}
    \item If there was an abort, output $(\regA',\bot)$. Otherwise, for each $i \in \overline{T}$, measure register $\cR_i$ in the standard basis to obtain $b_i \in \{0,1\}$ and $v_i' \in \{0,1\}^{2\secp}$, and compute $(t_i',r_i') = z_{i,b_i} \oplus v_i'$. Then, define \[b \coloneqq \bigoplus_{i \in \overline{T}} b_i, ~~ m_b \coloneqq \widetilde{m}_b \oplus \bigoplus_{i \in \overline{T}}t'_i,\] and output $(\regA',(b,m_b))$.
\end{itemize}

\noindent\underline{$\cH_4(\regA)$}
\begin{itemize}
    \item Prepare $\ell$ collections of EPR pairs on registers $\{\regS_i,\regR_i\}_{i \in [\ell]}$, and sample $(\ck,\ek) \gets \ExtGen(1^\secp)$, $\crs \gets \{0,1\}^n$, and $\hk \gets \{0,1\}^k$.
    \item Run $\Adv_\secp$ on input $\regA,\{\regS_i\}_{i \in [\ell]},\ck,\crs,\hk$ until it outputs a message
    \[\left( \{\cm_i\}_{i \in [\ell]},\{v_i,x_i,h_i,r_i\}_{i \in T}, \{\widehat{\cm}_{i,0},\widehat{\cm}_{i,1},z_{i,0},z_{i,1}\}_{i \in \overline{T}}, \pi, \widetilde{m}_0,\widetilde{m}_1\right)\] and a state on register $\regA'$.
    \item Run Steps 1-3 of the Receiver's honest computation on the sender's message.
    \item Attempt to project registers $\{\regR_i\}_{i \in \overline{T}}$ onto \[\bigotimes_{i \in \overline{T}}\Pi[\ck,\widehat{\cm}_{i,0},z_{i,0},z_{i,1}]^{\regR_i},\] and abort if the projection fails.
    \item For each $i \in \overline{T},b \in \{0,1\}$, compute $t_{i,b} \gets \Ext(\ek,\widehat{\cm}_{i,b})$. Abort if any $t_{i,b} = \bot$ or if there does not exist $\Delta$ such that $t_{i,1} = \Delta \oplus t_{i,0}$ for all $i \in \overline{T}$.
    \item If there was an abort, output $(\regA',\bot)$. Otherwise, for each $i \in \overline{T}$, measure register $\textcolor{red}{\cR_i^{\ctl}}$ in the standard basis to obtain $b_i \in \{0,1\}$. Then, define \[b \coloneqq \bigoplus_{i \in \overline{T}} b_i, ~~ \textcolor{red}{m_0 \coloneqq \bigoplus_{i \in \overline{T}}t_{i,0}, ~~ m_1 \coloneqq \widetilde{m}_1 \oplus \Delta \oplus \bigoplus_{i \in \overline{T}}t_{i,0}},\] and output $(\regA',(b,m_b))$. 
\end{itemize}

\noindent\underline{$\cH_5(\regA)$}
\begin{itemize}
    \item Prepare $\ell$ collections of EPR pairs on registers $\{\regS_i,\regR_i\}_{i \in [\ell]}$, and sample $(\ck,\ek) \gets \ExtGen(1^\secp)$, $\crs \gets \{0,1\}^n$, and $\hk \gets \{0,1\}^k$.
    \item Run $\Adv_\secp$ on input $\regA,\{\regS_i\}_{i \in [\ell]},\ck,\crs,\hk$ until it outputs a message
    \[\left( \{\cm_i\}_{i \in [\ell]},\{v_i,x_i,h_i,r_i\}_{i \in T}, \{\widehat{\cm}_{i,0},\widehat{\cm}_{i,1},z_{i,0},z_{i,1}\}_{i \in \overline{T}}, \pi, \widetilde{m}_0,\widetilde{m}_1\right)\] and a state on register $\regA'$.
    \item Run Steps 1-3 of the Receiver's honest computation on the sender's message.
    \item We will insert a measurement on the registers $\{\cR_i\}_{i \in \overline{T}}$. Before specifying this measurement, we introduce some notation.
    \begin{itemize}
        \item For $\{(v_i,x_i,h_i)\}_{i \in \overline{T}}$ and a string $e \in \{0,1\}^{|\overline{T}|}$, define
        \[\Pi[e,\{(v_i,x_i,h_i)\}_{i \in \overline{T}}]^{\{\regR_i\}_{i \in \overline{T}}} \coloneqq \bigotimes_{i : e_i = 0}\dyad{\psi_{v_i,x_i,h_i}}{\psi_{v_i,x_i,h_i}}^{\regR_i} \otimes \bigotimes_{i : e_i = 1} \bbI - \dyad{\psi_{v_i,x_i,h_i}}{\psi_{v_i,x_i,h_i}}^{\regR_i}.\]
        \item For $\{(v_i,x_i,h_i)\}_{i \in \overline{T}}$ and a constant $\gamma \in [0,1]$, define
        \[\Pi[\gamma,\{(v_i,x_i,h_i)\}_{i \in \overline{T}}]^{\{\regR_i\}_{i \in \overline{T}}} \coloneqq \sum_{e \in \{0,1\}^{|S|} : \hw(e) < \gamma |\overline{T}|} \Pi[e,\{(v_i,x_i,h_i)\}_{i \in \overline{T}}]^{\{\regR_i\}_{i \in \overline{T}}}.\]
    \end{itemize}
    
    \textcolor{red}{Compute $(v_i,x_i,h_i) \gets \Ext(\ek,\cm_i)$ for each $i \in \overline{T}$.  Attempt to project registers $\{\regR_i\}_{i \in \overline{T}}$ onto \[\Pi\left[1/30, \{(v_i,x_i,h_i)\}_{i \in \overline{T}}\right],\] and abort if this projection fails.}
    \item Attempt to project registers $\{\regR_i\}_{i \in \overline{T}}$ onto \[\bigotimes_{i \in \overline{T}}\Pi[\ck,\widehat{\cm}_{i,0},z_{i,0},z_{i,1}]^{\regR_i},\] and abort if the projection fails.
    \item For each $i \in \overline{T},b \in \{0,1\}$, compute $t_{i,b} \gets \Ext(\ek,\widehat{\cm}_{i,b})$. Abort if any $t_{i,b} = \bot$ or if there does not exist $\Delta$ such that $t_{i,1} = \Delta \oplus t_{i,0}$ for all $i \in \overline{T}$.
    \item If there was an abort, output $(\regA',\bot)$. Otherwise, for each $i \in \overline{T}$, measure register $\cR_i^{\ctl}$ in the standard basis to obtain $b_i \in \{0,1\}$. Then, define \[b \coloneqq \bigoplus_{i \in \overline{T}} b_i, ~~ m_0 \coloneqq \bigoplus_{i \in \overline{T}}t_{i,0}, ~~ m_1 \coloneqq \widetilde{m}_1 \oplus \Delta \oplus \bigoplus_{i \in \overline{T}}t_{i,0},\] and output $(\regA',(b,m_b))$. 
\end{itemize}

\noindent\underline{$\cH_6(\regA)$}
\begin{itemize}
    \item Prepare $\ell$ collections of EPR pairs on registers $\{\regS_i,\regR_i\}_{i \in [\ell]}$, and sample $(\ck,\ek) \gets \ExtGen(1^\secp)$, $\crs \gets \{0,1\}^n$, and $\hk \gets \{0,1\}^k$.
    \item Run $\Adv_\secp$ on input $\regA,\{\regS_i\}_{i \in [\ell]},\ck,\crs,\hk$ until it outputs a message
    \[\left( \{\cm_i\}_{i \in [\ell]},\{v_i,x_i,h_i,r_i\}_{i \in T}, \{\widehat{\cm}_{i,0},\widehat{\cm}_{i,1},z_{i,0},z_{i,1}\}_{i \in \overline{T}}, \pi, \widetilde{m}_0,\widetilde{m}_1\right)\] and a state on register $\regA'$.
    \item Run Steps 1-3 of the Receiver's honest computation on the sender's message.
    \item Compute $(v_i,x_i,h_i) \gets \Ext(\ek,\cm_i)$ for each $i \in \overline{T}$.  Attempt to project registers $\{\regR_i\}_{i \in \overline{T}}$ onto \[\Pi\left[1/30, \{(v_i,x_i,h_i)\}_{i \in \overline{T}}\right],\] and abort if this projection fails.
    \item Attempt to project registers $\{\regR_i\}_{i \in \overline{T}}$ onto \[\bigotimes_{i \in \overline{T}}\Pi[\ck,\widehat{\cm}_{i,0},z_{i,0},z_{i,1}]^{\regR_i},\] and abort if the projection fails.
    \item \textcolor{red}{Attempt to project registers $\{\regR_i\}_{i \in \overline{T}}$ onto \[\Pi\left[1/2, \{(v_i,x_i,h_i)\}_{i \in \overline{T}}\right],\] and abort if this projection fails.}
    \item For each $i \in \overline{T},b \in \{0,1\}$, compute $t_{i,b} \gets \Ext(\ek,\widehat{\cm}_{i,b})$. Abort if any $t_{i,b} = \bot$ or if there does not exist $\Delta$ such that $t_{i,1} = \Delta \oplus t_{i,0}$ for all $i \in \overline{T}$.
    \item If there was an abort, output $(\regA',\bot)$. Otherwise, for each $i \in \overline{T}$, measure register $\cR_i^{\ctl}$ in the standard basis to obtain $b_i \in \{0,1\}$. Then, define \[b \coloneqq \bigoplus_{i \in \overline{T}} b_i, ~~ m_0 \coloneqq \bigoplus_{i \in \overline{T}}t_{i,0}, ~~ m_1 \coloneqq \widetilde{m}_1 \oplus \Delta \oplus \bigoplus_{i \in \overline{T}}t_{i,0},\] and output $(\regA',(b,m_b))$. 
\end{itemize}

\noindent\underline{$\cH_7(\regA)$}
\begin{itemize}
    \item Prepare $\ell$ collections of EPR pairs on registers $\{\regS_i,\regR_i\}_{i \in [\ell]}$, and sample $(\ck,\ek) \gets \ExtGen(1^\secp)$, $\crs \gets \{0,1\}^n$, and $\hk \gets \{0,1\}^k$.
    \item Run $\Adv_\secp$ on input $\regA,\{\regS_i\}_{i \in [\ell]},\ck,\crs,\hk$ until it outputs a message
    \[\left( \{\cm_i\}_{i \in [\ell]},\{v_i,x_i,h_i,r_i\}_{i \in T}, \{\widehat{\cm}_{i,0},\widehat{\cm}_{i,1},z_{i,0},z_{i,1}\}_{i \in \overline{T}}, \pi, \widetilde{m}_0,\widetilde{m}_1\right)\] and a state on register $\regA'$.
    \item Run Steps 1-3 of the Receiver's honest computation on the sender's message.
    \item Compute $(v_i,x_i,h_i) \gets \Ext(\ek,\cm_i)$ for each $i \in \overline{T}$.  Attempt to project registers $\{\regR_i\}_{i \in \overline{T}}$ onto \[\Pi\left[1/30, \{(v_i,x_i,h_i)\}_{i \in \overline{T}}\right],\] and abort if this projection fails.
    \item Attempt to project registers $\{\regR_i\}_{i \in \overline{T}}$ onto \[\bigotimes_{i \in \overline{T}}\Pi[\ck,\widehat{\cm}_{i,0},z_{i,0},z_{i,1}]^{\regR_i},\] and abort if the projection fails.
    \item Attempt to project registers $\{\regR_i\}_{i \in \overline{T}}$ onto \[\Pi\left[1/2, \{(v_i,x_i,h_i)\}_{i \in \overline{T}}\right],\] and abort if this projection fails.
    \item For each $i \in \overline{T},b \in \{0,1\}$, compute $t_{i,b} \gets \Ext(\ek,\widehat{\cm}_{i,b})$. Abort if any $t_{i,b} = \bot$ or if there does not exist $\Delta$ such that $t_{i,1} = \Delta \oplus t_{i,0}$ for all $i \in \overline{T}$.
    \item If there was an abort, output $(\regA',\bot)$. Otherwise, \textcolor{red}{sample $b \gets \{0,1\}$}. Then, define \[m_0 \coloneqq \bigoplus_{i \in \overline{T}}t_{i,0}, ~~ m_1 \coloneqq \widetilde{m}_1 \oplus \Delta \oplus \bigoplus_{i \in \overline{T}}t_{i,0},\] and output $(\regA',(b,m_b))$. 
\end{itemize}

\noindent\underline{$\cH_8(\regA)$}
\begin{itemize}
    \item Prepare $\ell$ collections of EPR pairs on registers $\{\regS_i,\regR_i\}_{i \in [\ell]}$, and sample $(\ck,\ek) \gets \ExtGen(1^\secp)$, $\crs \gets \{0,1\}^n$, and $\hk \gets \{0,1\}^k$.
    \item Run $\Adv_\secp$ on input $\regA,\{\regS_i\}_{i \in [\ell]},\ck,\crs,\hk$ until it outputs a message
    \[\left( \{\cm_i\}_{i \in [\ell]},\{v_i,x_i,h_i,r_i\}_{i \in T}, \{\widehat{\cm}_{i,0},\widehat{\cm}_{i,1},z_{i,0},z_{i,1}\}_{i \in \overline{T}}, \pi, \widetilde{m}_0,\widetilde{m}_1\right)\] and a state on register $\regA'$.
    \item Run Steps 1-3 of the Receiver's honest computation on the sender's message.
    \item Attempt to project registers $\{\regR_i\}_{i \in \overline{T}}$ onto \[\bigotimes_{i \in \overline{T}}\Pi[\ck,\widehat{\cm}_{i,0},z_{i,0},z_{i,1}]^{\regR_i},\] and abort if the projection fails.
    \item For each $i \in \overline{T},b \in \{0,1\}$, compute $t_{i,b} \gets \Ext(\ek,\widehat{\cm}_{i,b})$. Abort if any $t_{i,b} = \bot$ or if there does not exist $\Delta$ such that $t_{i,1} = \Delta \oplus t_{i,0}$ for all $i \in \overline{T}$.
    \item If there was an abort, output $(\regA',\bot)$. Otherwise, sample $b \gets \{0,1\}$. Then, define \[m_0 \coloneqq \bigoplus_{i \in \overline{T}}t_{i,0}, ~~ m_1 \coloneqq \widetilde{m}_1 \oplus \Delta \oplus \bigoplus_{i \in \overline{T}}t_{i,0},\] and output $(\regA',(b,m_b))$. 
\end{itemize}

\noindent\underline{$\cH_9(\regA)$ / $\Sim(\regA)$}
\begin{itemize}
    \item Prepare $\ell$ collections of EPR pairs on registers $\{\regS_i,\regR_i\}_{i \in [\ell]}$, and sample $(\ck,\ek) \gets \ExtGen(1^\secp)$, $\crs \gets \{0,1\}^n$, and $\hk \gets \{0,1\}^k$.
    \item Run $\Adv_\secp$ on input $\regA,\{\regS_i\}_{i \in [\ell]},\ck,\crs,\hk$ until it outputs a message
    \[\left( \{\cm_i\}_{i \in [\ell]},\{v_i,x_i,h_i,r_i\}_{i \in T}, \{\widehat{\cm}_{i,0},\widehat{\cm}_{i,1},z_{i,0},z_{i,1}\}_{i \in \overline{T}}, \pi, \widetilde{m}_0,\widetilde{m}_1\right)\] and a state on register $\regA'$.
    \item Run Steps 1-3 of the Receiver's honest computation on the sender's message.
    \item Attempt to project registers $\{\regR_i\}_{i \in \overline{T}}$ onto \[\bigotimes_{i \in \overline{T}}\Pi[\ck,\widehat{\cm}_{i,0},z_{i,0},z_{i,1}]^{\regR_i},\] and abort if the projection fails.
    \item For each $i \in \overline{T},b \in \{0,1\}$, compute $t_{i,b} \gets \Ext(\ek,\widehat{\cm}_{i,b})$. Abort if any $t_{i,b} = \bot$ or if there does not exist $\Delta$ such that $t_{i,1} = \Delta \oplus t_{i,0}$ for all $i \in \overline{T}$.
    \item If there was an abort, \textcolor{red}{send $\bot$ to the ideal functionality, and output $\cA'$.} Otherwise, define \[m_0 \coloneqq \bigoplus_{i \in \overline{T}}t_{i,0}, ~~ m_1 \coloneqq \widetilde{m}_1 \oplus \Delta \oplus \bigoplus_{i \in \overline{T}}t_{i,0},\] \textcolor{red}{send $(m_0,m_1)$ to the ideal functionality, and output $\cA'$.} 
\end{itemize}

Observe that $\cH_9(\cA)$ describes the behavior of a simulator $\Sim$ that operates on input register $\cA$, and interacts with the ideal functionality $\cF_\ROT$. Thus, The following sequence of claims completes the proof.

\begin{claim}
$\cH_0 \approx_c \cH_1$.
\end{claim}

\begin{proof}
This follows directly from the extractability of the commitment (\cref{def:extractability}).
\end{proof}

\begin{claim}
$\cH_1 \equiv \cH_2$.
\end{claim}

\begin{proof}
The only difference is that we have applied the Step 4 check coherently before measuring in the standard basis. Since these measurements commute, these hybrids describe the same distribution. 
\end{proof}

\begin{claim}
$\cH_2 \approx_s \cH_3$.
\end{claim}

\begin{proof}
The newly introcued abort condition will only be triggered with negligible probability due to the soundness of the NIZK (\cref{def:ZK-soundness}) and the extractability of the commitment (\cref{def:extractability}).
\end{proof}

\begin{claim}
$\cH_3 \approx_s \cH_4$.
\end{claim}

\begin{proof}
We are now defining $m_0,m_1$ based on the strings extracted by $\Ext$ rather than the strings measured by the Receiver. Since the strings measured by the Receiver must be valid commitment openings, this only introduces a negligible difference due to the extractability of the commitment (\cref{def:extractability}).
\end{proof}


\begin{claim}\label{claim:4-to-5}
$\cH_4 \approx_s \cH_5$.
\end{claim}

\begin{proof}
By Gentle Measurement (\cref{lemma:gentle-measurement}), it suffices to argue that the projection introduced in $\cH_5$ will succeed with probability $1-\negl(\secp)$. So towards contradiction, assume that the projection fails with non-negligible probability. We will eventually use this assumption to break the correlation intractability of $H$. First, consider the following experiment.\\

\noindent\underline{$\Exp_1$}

\begin{itemize}
    \item Prepare $\ell$ collections of EPR pairs on registers $\{\regS_i,\regR_i\}_{i \in [\ell]}$. Sample $(\ck,\ek) \gets \ExtGen(1^\secp)$, $\crs \gets \{0,1\}^n$, and $\hk \gets \{0,1\}^k$.
    \item Run $\Adv_\secp$ on input $\regA,\{\regS_i\}_{i \in [\ell]},\ck,\crs,\hk$, and receive a message that includes $\{\cm_i\}_{i \in [\ell]},\allowbreak\{\widehat{\cm}_{i,0},\widehat{\cm}_{i,1}\}_{i \in \overline{T}},\allowbreak\pi$. 
    \item Compute $T = H_\secp(\hk,(\cm_1,\dots,\cm_\ell))$, check that for all $i \in T$, $\cm_i = \Com(\ck,(v_i,x_i,h_i);r_i)$, and that $\NIZK.\Ver\left(\crs,\left(\{\widehat{\cm}_{i,0},\widehat{\cm}_{i,1}\}_{i \in \overline{T}},\{\cm_i\}_{i \in [\ell]}\right),\pi\right) = \top$, and abort if not.
    \item For each $i \in [\ell]$, compute $(v_i,x_i,h_i) \gets \Ext(\ek,\cm_i)$, and abort if any are $\bot$.
    \item For each $i \in [\ell]$, measure registers $\regR_i$ in the basis $\{\dyad{\psi_{v_i,x_i,h_i}}{\psi_{v_i,x_i,h_i}}, \bbI - \dyad{\psi_{v_i,x_i,h_i}}{\psi_{v_i,x_i,h_i}}\}$ and define the bit $e_i = 0$ if the first outome is observed and $e_i = 1$ if the second outcome is observed.
    \item Output 1 if (i) there exists an $s \in \{0,1\}^\secp$ such that $(x_1,\dots,x_\ell) = \PRG(s)$,\footnote{Note that this step is not efficient to implement, but this will not be important for our arguments.} (ii) $e_i = 0$ for all $i \in T$, and (iii) $e_i = 1$ for at least $1/30$ fraction of $i : i \in \overline{T}$.
\end{itemize}

We claim that $\Pr[\Exp_1 \to 1] = \nonnegl(\secp)$. This nearly follows from the assumption that the measurement introduced in $\cH_5$ rejects with non-negligible probability, except for the following two differences. One difference from $\cH_3$ is that in $\Exp_1$, we are using $\{(v_i,x_i,h_i)\}_{i \in T}$ extracted from $\{\cm_i\}_{i \in T}$ to measure registers $\{\regR_i\}_{i \in T}$, rather than the strings sent by the adversary. However, this introduces a negligible difference due to the extractability of the commitment scheme. The other difference is that we require $(x_1,\dots,x_\ell)$, which are extracted from $\{\cm_i\}_{i \in [\ell]}$, to be in the image of $\PRG(\cdot)$. However, by extractability of the commitment scheme and soundness of the NIZK, the probability that the procedure does not abort and this fails to occur is negligible. Next, consider the following experiment. \\

\noindent\underline{$\Exp_2$}

\begin{itemize}
    \item Prepare $\ell$ collections of EPR pairs on registers $\{\regS_i,\regR_i\}_{i \in [\ell]}$. Sample $(\ck,\ek) \gets \ExtGen(1^\secp)$, $\crs \gets \{0,1\}^n$, and $\hk \gets \{0,1\}^k$.
    \item Sample $s^* \gets \{0,1\}^\secp$ and set $(x_1^*,\dots,x_\ell^*) = \PRG(s^*)$. For each $i \in [\ell]$ and $j \in [2\secp]$ such that $x_{i,j} = 1$, apply a CNOT gate from register $\regR_i^\ctl$ to $\regR_{i,j}^\msg$, then measure $\regR_i^\ctl$ in the Hadamard basis to obtain $h_i^*$ and measure $\regR_{i,1}^\msg,\dots,\regR_{i,2\secp}^\msg$ in the standard basis to obtain $v_i^*$.
    \item Run $\Adv_\secp$ on input $\regA,\{\regS_i\}_{i \in [\ell]},\ck,\crs,\hk$, and receive a message that includes $\{\cm_i\}_{i \in [\ell]}$.
    \item Compute $T = H_\secp(\hk,(\cm_1,\dots,\cm_\ell))$ and $(v_i,x_i,h_i) \gets \Ext(\ek,\cm_i)$ for each $i \in [\ell]$.
    \item Output 1 if (i) $(x_1,\dots,x_\ell) = (x_1^*,\dots,x_\ell^*)$, (ii) $(v_i,h_i) = (v_i^*,h_i^*)$ for all $i \in T$, and (iii) $(v_i,h_i) \neq (v_i^*,h_i^*)$ for at least $1/30$ fraction of $i : i \in \overline{T}$.
\end{itemize}

It follows that $\Pr[\Exp_2 \to 1] = \nonnegl(\secp)/2^\secp > 1/2^{\secp_\CI^\delta}$, since the guess of $s^*$ is uniformly random and independent of the adversary's view. Finally, we will show that $\Exp_2$ can be used to break the correlation intractability of $H$, but first we introduce some notation.

\begin{itemize}
    \item For each $(\ek,s^*,\{v_i^*,h_i^*\}_{i \in [\ell]})$, define the relation $R[\ek,s^*,\{v_i^*,h_i^*\}_{i \in [\ell]}]$ as follows. Recalling that $\ell = c \cdot t$, we will associate each $i \in [\ell]$ with a pair $(\iota,\kappa)$ for $\iota \in [t], \kappa \in [c]$. Also, for each set of strings $\{\cm_i\}_{i \in [\ell]}$, we fix $(v_i,x_i,h_i) \coloneqq \Ext(\ek,\cm_i)$ for each $i \in [\ell]$. Then the domain will consist of strings $\{\cm_i\}_{i \in [\ell]}$ such that (i) $(x_1,\dots,x_\ell) = \PRG(s^*)$, (ii) $|i : (v_i,h_i) = (v_i^*,h_i^*)| \leq (1-1/60)\ell$, and (iii) for each $\iota \in [t]$, $|\kappa : (v_{(\iota,\kappa)},h_{(\iota,\kappa)}) = (v_{(\iota,\kappa)}^*,h_{(\iota,\kappa)}^*)| \geq (1/2)c$.

    \item For each $\{\cm_i\}_{i \in [\ell]}$ in the domain of $R[\ek,s^*,\{v_i^*,h_i^*\}_{i \in [\ell]}]$, define the sets $\{S_{\iota,\{\cm_i\}_{i \in [\ell]}}\}_{\iota \in [t]}$ as follows. If $(1/2)c \leq |\kappa : (v_{(\iota,\kappa)},h_{(\iota,\kappa)}) = (v_{(\iota,\kappa)}^*,h_{(\iota,\kappa)}^*)| \leq (1-1/120)c$, let $S_{\iota,\{\cm_i\}_{i \in [\ell]}}$ consist of subsets $C \subset [c]$ of size $c/2$ such that for all $\kappa \in C$, $(v_{(\iota,\kappa)},h_{(\iota,\kappa)}) = (v_{(\iota,\kappa)}^*,h_{(\iota,\kappa)}^*)$. Otherwise, let $S_{\iota,\{\cm_i\}_{i \in [\ell]}} = \emptyset$. 
    \item Define the set $R[\ek,s^*,\{v_i^*,h_i^*\}_{i \in [\ell]}]_{\{\cm_i\}_{i \in [\ell]}}$ to consist of all $y = (C_1,\dots,C_t)$ such that $C_\iota \in S_{\iota,\{\cm\}_{i \in [\ell]}}$ for all $\iota$ such that $S_{\iota,\{\cm\}_{i \in [\ell]}} \neq \emptyset$. We claim that there are always at least $1/120$ fraction of $\iota \in [t]$ such that $S_{\iota,\{\cm_i\}_{i \in [\ell]}} \neq \emptyset$. To see this, note that $S_{\iota,\{\cm_i\}_{i \in [\ell]}} \neq \emptyset$ iff $|\kappa : (v_{(\iota,\kappa)},h_{(\iota,\kappa)}) \neq (v_{(\iota,\kappa)}^*,h_{(\iota,\kappa)}^*)| > (1/120)c$. However, if less 1/120 fraction of $\iota$ satisfies this condition, then the fraction of $i \in [\ell]$ such that $(v_i,h_i) \neq (v_i^*,h_i^*)$ is at most $(1/120) + (1/120)(1-1/120) < 1/60$, which would contradict the fact that $\{\cm_i\}_{i \in [\ell]}$ is in the domain of $R[\ek,s^*,\{v_i^*,h_i^*\}_{i \in [\ell]}]_{\{\cm_i\}_{i \in [\ell]}}$.

    Thus, $R[\ek,s^*,\{v_i^*,h_i^*\}_{i \in [\ell]}]$ is an $\alpha$-approximate efficiently verifiable product relation for $\alpha=1/120$ with sparsity $\rho = \binom{(1-\alpha)c}{(1/2)c}/2^c < \alpha$. 
    
\end{itemize}

Now, whenever $\Exp_2 = 1$, it must be the case that $\{\cm_i\}_{i \in [\ell]}$ is in the domain of $R[\ek,s^*,\{v_i^*,h_i^*\}_{i \in [\ell]}]$, and $T \in R[\ek,s^*,\{v_i^*,h_i^*\}_{i \in [\ell]}]_{\{\cm_i\}_{i \in [\ell]}}$. Thus, we can break correlation intractability as follows. Begin running $\Exp_2$, but don't sample $\hk$. Once $\ek,s^*$ are sampled and $\{v_i^*,h_i^*\}_{i \in [\ell]}$ are measured, declare the relation $R[\ek,s^*,\{v_i^*,h_i^*\}_{i \in [\ell]}]$. Then, receive $\hk$ from the correlation intractability challenger, continue running $\Exp_2$ until $\{\cm_i\}_{i \in [\ell]}$ is obtained, and output this to the challenger. The above analysis shows that this breaks correlation intractability for the relation $R[\ek,s^*,\{v_i^*,h_i^*\}_{i \in [\ell]}]$.

\end{proof}

\begin{claim}\label{claim:5-to-6}
$\cH_5 \approx_s \cH_6$.
\end{claim}

\begin{proof}
By Gentle Measurement (\cref{lemma:gentle-measurement}), it suffices to show that the projection introduced in $\cH_6$ will succeed with probability $1-\negl(\secp)$. To do so, we will rule out one bad case. For each $i \in \overline{T}$, define the bit $f_i = 0$ if and only if $\widehat{\cm}_{i,0} = \Com(\ck,t_{i,0};r_{i,0})$ and $\widehat{\cm}_{i,1} = \Com(\ck,t_{i,1};r_{i,1})$, where $(t_{i,0},r_{i,0}) = z_{i,0} \oplus v_i$, $(t_{i,1},r_{i,1}) = z_{i,1} \oplus v_i \oplus x_i$, and $(v_i,x_i,h_i) \coloneqq \Ext(\ek,\cm_i)$. Now we claim that if the fraction of $i \in \overline{T}$ such that $f_i = 1$ is $\geq 1/2-1/30$, then the attempted projection onto \[\bigotimes_{i \in \overline{T}}\Pi[\ck,\widehat{\cm}_{i,0},z_{i,0},z_{i,1}]^{\regR_i}\] performed during Step 4 of the receiver's computation would have failed with probabilty $1-\negl(\secp)$. To see this, consider any state $\ket{\psi}^{\{\regR_i\}_{i \in [\ell]},\regX}$ in the image of $\Pi\left[1/30,\{(v_i,x_i,h_i)\}_{i \in \overline{T}}\right]$, where $\regX$ is an arbitrary auxiliary register. Then, defining $\gamma = 1/30$, we can write $\ket{\psi}$ as 

\[\ket{\psi} \coloneqq \sum_{e \in \{0,1\}^{|\overline{T}|} : \hw(e) < \gamma|\overline{T}|}\left(\bigotimes_{i : e_i = 0}\ket{\psi_{v_i,x_i,h_i}}^{\regR_i}\right) \otimes \ket{\psi_e}^{\{\regR_i\}_{i : e_i = 1},\regX},\] where $\ket{\psi_e}$ is some unit vector that is orthogonal to $\ket{\psi_{v_i,x_i,h_i}}$ for all $i$ such that $e_i = 1$. Then,
\begin{align*}\bigg\|&\bigotimes_{i \in \overline{T}}\Pi[\ck,\widehat{\cm}_{i,0},z_{i,0},z_{i,1}]\ket{\psi}\bigg\|^2 \\
&\leq \bigg\| \sum_{e \in \{0,1\}^{|\overline{T}|}: \hw(e) < \gamma|\overline{T}|}\bigotimes_{i:e_i = 0}\Pi[\ck,\widehat{\cm}_i,z_{i,0},z_{i,1}]\ket{\psi_{v_i,x_i,h_i}}^{\regR_i}\bigg\|^2 \\
&\leq \binom{|\overline{T}|}{\gamma|\overline{T}|}\sum_{e \in \{0,1\}^{|\overline{T}|} : \hw(e) < \gamma|\overline{T}|} \bigg\|\bigotimes_{i : e_i = 0}\Pi[\ck,\widehat{\cm}_i,z_{i,0},z_{i,1}] \ket{\psi_{v_i,x_i,h_i}}^{\regR_i}\bigg\|^2 \\
&\leq \binom{|\overline{T}|}{\gamma |\overline{T}|}^2 \cdot 2^{-(1/2-2\gamma)|\overline{T}|} \\ 
&\leq (3/\gamma)^{2\gamma|\overline{T}|} \cdot 2^{-(1/2-2\gamma)|\overline{T}|}\\
&= 2^{|\overline{T}|(2\gamma\log(3/\gamma) - (1/2-2\gamma))}\\
&= \negl(\secp)\end{align*} where the second inequality is Cauchy-Schwartz, the third inequality follow from the fact that there are at least $1/2-2\gamma$ fraction of indices where $f_i = 1$ and $e_i = 0$, and the final equality follows because $\gamma=1/30$ is such that $2\gamma\log(3/\gamma) - (1/2-2\gamma) = O(1)$, and $|\overline{T}| = \ell/2 = \Omega(\secp)$, which means that the exponent is $\Omega(\secp)$.

Thus it suffices to consider the case where the fraction of $i \in \overline{T}$ such that $f_i = 1$ is $< 1/2-1/30$. So consider any state $\ket{\psi}^{\{\regR_i\}_{i \in [\ell]},\regX}$ in the image of $\Pi\left[1/30,\{(v_i,x_i,h_i)\}_{i \in \overline{T}}\right]$, which we can write as 

\[\ket{\psi} \coloneqq \sum_{e \in \{0,1\}^{|\overline{T}|} : \hw(e) < |\overline{T}|/30}\left(\bigotimes_{i : e_i = 0}\ket{\psi_{v_i,x_i,h_i}}^{\regR_i}\right) \otimes \ket{\psi_e}^{\{\regR_i\}_{i : e_i = 1},\regX}.\] Then,

\begin{align*}
    &\Pi[\ck,\widehat{\cm}_{i,0},z_{i,0},z_{i,1}]\ket{\psi}\\ &= \sum_{e \in \{0,1\}^{|\overline{T}|} : \hw(e) < |\overline{T}|/30} \left(\bigotimes_{i: e_i = 0 \wedge f_i = 0} \Pi[\ck,\widehat{\cm}_{i,0},z_{i,0},z_{i,1}]\ket{\psi_{v_i,x_i,h_i}}\right)\\ & \hspace{4cm} \otimes \left(\bigotimes_{i : e_i = 1 \vee f_i = 1}\Pi[\ck,\widehat{\cm}_{i,0},z_{i,0},z_{i,1}]\right)\ket{\psi_e} \\
    &= \sum_{e \in \{0,1\}^{|\overline{T}|} : \hw(e) < |\overline{T}|/30} \left(\bigotimes_{i: e_i = 0 \wedge f_i = 0}\ket{\psi_{v_i,x_i,h_i}}\right) \otimes \left(\bigotimes_{i : e_i = 1 \vee f_i = 1}\Pi[\ck,\widehat{\cm}_{i,0},z_{i,0},z_{i,1}]\right)\ket{\psi_e} \\
    &= \sum_{e' \in \{0,1\}^{|\overline{T}|} : \hw(e') < |\overline{T}|/2} \left(\bigotimes_{i : e'_i = 0}\ket{\psi_{v_i,x_i,h_i}}\right) \otimes \ket{\psi_{e'}} \\
    & \in \mathsf{Im}\left(\Pi\left[1/2,\{(v_i,x_i,h_i)\}_{i \in \overline{T}}\right]\right),
\end{align*}

where the $\ket{\psi_{e'}}$ are some set of unit vectors.



\end{proof}

\begin{claim}
$\cH_6 \equiv \cH_7$.
\end{claim}

\begin{proof}
It suffices to show that in $\cH_6$, the bit $b = \bigoplus_{i \in \overline{T}} b_i$ sampled by measuring registers $\{\regR_i^\ctl\}_{i \in \overline{T}}$ of $\ket{\psi}^{\{\regR_i\}_{i \in [\ell]},\regX}$ in the standard basis is uniformly random, even conditioned on the auxiliary register $\regX$ (which includes the view of the adversarial sender). This follows from \cref{thm:XOR-extractor} by applying a change of basis. In more detail, define the unitary $U_{v_i,x_i,h_i}$ to be applied to $\regR_i$ as follows: For each $j \in [2\secp]$ such that $x_{i,j} = 1$ apply a CNOT gate from $\regR_{i,j}^\ctl$ to  $\regR_{i,j}^\msg$, then apply a classically controlled phase flip $Z^{h_i}$ to $\regR_i^\ctl$, and finally apply a Hadamard gate to $\regR_i^\ctl$. In particular, \[U_{v_i,x_i,h_i}\ket{\psi_{v_i,x_i,h_i}} = \ket{0}\ket{v_i}.\] Thus, for any $\ket{\psi} \in \mathsf{Im}(\Pi\left[1/2,\{(v_i,x_i,h_i)\}_{i \in \overline{T}}\right])$, it holds that registers $\{\regR_i^\ctl\}_{i \in \overline{T}}$ of $(\bigotimes_{i \in \overline{T}}U_{v_i,x_i,h_i})\ket{\psi}$ are in a superposition of standard basis states with Hamming weight $< |\overline{T}|/2$. Since applying $U^\dagger_{v_i,x_i,h_i}$ to a standard basis measurement of $\regR_i^\ctl$ yields a Hadamard basis measurement of $\regR_i^\ctl$, \cref{thm:XOR-extractor} directly implies that the bit $b = \bigoplus_{i \in \overline{T}} b_i$ is uniformly random, even conditioned on the auxiliary register $\regX$. 

\end{proof}

\begin{claim}
$\cH_7 \approx_s \cH_8$.
\end{claim}

\begin{proof}
We are removing the two measurements introduced in hybrids $\cH_5$ and $\cH_6$, and indistinguishability follows from the same arguments used in the corresponding claims \cref{claim:4-to-5} and \cref{claim:5-to-6}.
\end{proof}

\begin{claim}
$\cH_8 \equiv \cH_9$.
\end{claim}

\begin{proof}
This is just a syntactic switch, routing information through the ideal functionality $\cF_\ROT$.
\end{proof}

\end{proof}

\begin{lemma}\label{lemma:ROT-receiver-security}
The protocol in \cref{fig:NIOT} is secure against a malicious receiver.
\end{lemma}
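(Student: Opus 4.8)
The plan is to construct a simulator $\Sim$ corrupting the receiver $B$, and then interpolate to the ideal world through a short sequence of hybrids. The simulator prepares the $\ell$ collections of EPR pairs $\{\regS_i,\regR_i\}_{i\in[\ell]}$ (keeping the $\regS_i$ halves), samples $\ck\gets\{0,1\}^h$ and $\hk\gets\{0,1\}^k$, and generates $\crs$ together with a trapdoor using the NIZK simulator. Upon receiving $(b,m_b)$ from $\cF_\ROT$, it runs the honest sender algorithm of \cref{fig:NIOT} on input strings $(m_0',m_1')$ with $m_b'\coloneqq m_b$ and $m_{1-b}'\coloneqq 0^\secp$, except that it produces $\pi$ with the NIZK simulator rather than $\NIZK.\Prove$; it forwards this message to the malicious receiver, outputs the receiver's output register, and sends a ``deliver'' command to $\cF_\ROT$ (moot, since the honest sender has no output).

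I would argue security through the hybrids: $\cH_0$ is the real execution; $\cH_1$ generates $(\crs,\pi)$ using the NIZK simulator, which is indistinguishable by zero-knowledge (\cref{def:ZK-ZK}) since the sender's statement is true; and $\cH_2$ samples a uniform bit $\beta\gets\{0,1\}$ and replaces $m_{1-\beta}$ by $0^\secp$ in the definition of $\widetilde{m}_{1-\beta}$ — this is exactly the ideal execution with $\Sim$ once $\beta$ is identified with the bit $b$ drawn by $\cF_\ROT$. The crux is $\cH_1\approx_s\cH_2$.

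To prove $\cH_1\approx_s\cH_2$ I would exploit that here the \emph{sender is honest}, so register $\regR_i$ genuinely holds the honest state $\ket{\psi_{v_i,x_i,h_i}}$, which after tracing out the sender's record of $h_i$ is the mixed state $\tfrac12\dyad{\psi_{v_i,x_i,0}}{\psi_{v_i,x_i,0}}+\tfrac12\dyad{\psi_{v_i,x_i,1}}{\psi_{v_i,x_i,1}}$, and which equals $\left(\tfrac12\dyad{0}{0}+\tfrac12\dyad{1}{1}\right)\otimes\dyad{v_i}{v_i}$ after the (analysis-only) change-of-basis unitary used in the proof of \cref{lemma:ROT-sender-security}. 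In that basis $z_{i,0}$ reveals $(t_i,r_{i,0})$ (so the receiver can open $\widehat{\cm}_{i,0}$), while $z_{i,1}$ is masked by $x_i$ and $\widehat{\cm}_{i,1}=\Com(\ck,t_i\oplus\Delta;r_{i,1})$ is a hiding commitment whose opening is not revealed; a symmetric change of basis has $z_{i,1}$ reveal $(t_i\oplus\Delta,r_{i,1})$ instead, but since $\ket{\psi_{v_i,x_i,h_i}}$ is a single pure state, no basis makes both $z_{i,0}$ and $z_{i,1}$ informative — the per-index ``one-out-of-two'' constraint. I would then argue that, conditioned on the receiver's view, there is a uniformly random bit $\beta$, independent of the rest of the view (with uniformity following from the XOR extractor, \cref{thm:XOR-extractor}, applied to $\{\regR_i^\ctl\}_{i\in\overline{T}}$ after the change of basis), such that the slot-$\beta$ mask $\bigoplus_{i\in\overline{T}}t_i\oplus\beta\Delta$ is determined while the slot-$(1-\beta)$ mask is $\negl(\secp)$-close to uniform, invoking the hiding of $\Com$ (\cref{def:hiding}) together with the quantum conditional min-entropy and leftover-hash bounds (\cref{impthm:conditional-min-entropy,impthm:privacy-amplification,impthm:small-superposition}). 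Given this, XORing $m_{1-\beta}$ versus $0^\secp$ into $\widetilde{m}_{1-\beta}$ is statistically undetectable, yielding $\cH_1\approx_s\cH_2$.

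I expect the main obstacle to be making this last argument fully rigorous against an \emph{arbitrary} malicious receiver: carefully tracking the correlations between the classical strings $\{z_{i,0},z_{i,1},\widehat{\cm}_{i,0},\widehat{\cm}_{i,1}\}_{i\in\overline{T}}$ and the quantum registers $\{\regR_i\}_{i\in\overline{T}}$; showing that joint, non-standard-basis measurements cannot aggregate the per-index one-out-of-two constraint into simultaneously recovering both masks (this parallels the XOR-extractor argument from sender security, but run in the dual direction — bounding what the receiver can learn rather than what the sender can bias); and justifying that the bookkeeping bit $\beta$ may be identified with the uniformly random bit output by $\cF_\ROT$, which should follow from the symmetry of the sender's message under simultaneously swapping the two branches of every $\regR_i$ and swapping the roles of $\widetilde{m}_0$ and $\widetilde{m}_1$.
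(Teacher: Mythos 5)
Your simulator is broken, and the reason is the crux of why this lemma needs a carefully built simulator rather than a lightly modified honest sender. Your $\Sim$ prepares genuine EPR pairs on $\{\regS_i,\regR_i\}$ and then runs the honest sender algorithm, which (for $i\in\overline{T}$) measures $\regS_i^\ctl$ in the Hadamard basis; after that measurement the receiver's half $\regR_i^\ctl$ is a maximally mixed bit, so the bit $b'=\bigoplus_{i\in\overline{T}}b_i$ the receiver extracts from its own standard-basis measurements is uniform and \emph{independent} of the bit $b$ that $\cF_\ROT$ handed to $\Sim$. With probability $1/2$, $b'\ne b$, and the receiver decodes $m'_{b'}=0^\secp$ from your transcript instead of the true $m_{b'}$ it would get in the real protocol; a distinguisher who knows $(m_0,m_1)$ (e.g.\ via $\regD$) and runs the honest receiver on $\regA'$ tells these apart with constant advantage. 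Labelling $\beta$ as ``identified with $b$'' does not create a coupling between them — a coupling needs a mechanism. The paper's simulator creates one: for $i\in\overline{T}$ it samples bits $\{b_i\}$ conditioned on $\bigoplus b_i=b$ and \emph{directly initializes} $\regR_i$ to $\ket{b_i,v_i'}$, which forces the receiver's measurement outcomes. Doing this requires in turn faking the commitments $\{\cm_i\}_{i\in\overline{T}}$ as $\Com(\ck,0)$ (the simulator has no honest opening once it stops using EPR pairs, so hiding is invoked) and programming $\hk$ via $\Samp$ so that $T$ is fixed before the commitments are formed (so the simulator knows which indices to prepare that way). It also requires replacing $\PRG(s)$ by a uniform $(x_1,\dots,x_\ell)$ so the PRG constraint does not leak. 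Your proposal does none of these, which is why the rest of the argument cannot be made to close.

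A secondary point: the machinery you propose for $\cH_1\approx_s\cH_2$ — XOR extractor, min-entropy, leftover hashing — is both misplaced and unnecessary here. That toolkit is exactly what the paper's \emph{sender}-security proof uses (to show the receiver's extracted bit is uniform against a cheating sender). Against a malicious receiver the argument is much lighter: after switching the (unobserved) $\regS_i^\ctl$ measurement from Hadamard to standard basis — which has no effect on the adversary's view and fixes the $b_i$ on both sides of the EPR pair — one applies hiding twice (to $\{\cm_i\}_{i\in\overline{T}}$ and to $\{\widehat{\cm}_{i,1-b_i}\}_{i\in\overline{T}}$) and finishes with the purely information-theoretic observation that $\Delta$ is uniform and independent of the receiver's view, so $\widetilde{m}_{1-b}$ is already a one-time pad. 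No collapse lemmas, no entropy accounting. If you build the simulator correctly, you will find you do not need any of the quantum extractor machinery for this direction.
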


\begin{proof}
Let $\{\Adv_\secp\}_{\secp \in \bbN}$ be a QPT adversary corrupting the receiver, which takes as input register $\regA$ of $\{\ket{\psi_\secp}^{\regA,\regD}\}_{\secp \in \bbN}$. Instead of explicitly considering a register $\regX$ holding the honest sender's input, we write their input as classical strings $(m_0,m_1) \in \{0,1\}^\secp$. We define a simulator $\{\Sim_\secp\}_{\secp \in \bbN}$ as follows.\\

\noindent\underline{$\Sim_\secp(\regA)$}
\begin{itemize}
    \item Obtain $(b,m_b)$ from the ideal functionality.
    \item Sample $\ck \gets \{0,1\}^h$.
    \item Sample $T \subset [\ell]$ as a uniformly random sequence of $t$ subsets of $[c]$ of size $c/2$. 
    \item For $i \in T$, sample $v_i,x_i \gets \{0,1\}^{2\secp}, h_i \gets \{0,1\},r_i \gets \{0,1\}^\secp$, set $\cm_i \coloneqq \Com(\ck,(v_i,x_i,h_i);r_i)$, and initialize register $\regR_i$ to the state $\frac{1}{\sqrt{2}}\left({\ket{0,v_i} + (-1)^{h_i}\ket{1,v_i \oplus x_i}}\right)$.
    \item For $i \in \overline{T}$, sample $\cm_i \gets \Com(\ck,0)$.
    \item Sample random bits $\{b_i\}_{i \in \overline{T}}$ conditioned on $\bigoplus_{i \in \overline{T}}b_i = b$.
    \item For $i \in \overline{T}$, sample $v_i'\gets \{0,1\}^{2\secp},t_i',r_i' \gets \{0,1\}^\secp,z_{i,1-b} \gets \{0,1\}^{2\secp}$, set $\widehat{\cm}_{i,b_i} \coloneqq \Com(\ck,t_i';r_i')$, $\widehat{\cm}_{i,1-b_i} \gets \Com(\ck,0)$, $z_{i,b_i} \coloneqq (t_i',r_i') \oplus v_i'$, and initialize register $\regR_i$ to the state $\ket{b_i,v_i'}$.
    \item Compute $(\crs,\pi) \gets \NIZK.\Sim\left(1^\secp,\left(\{\widehat{\cm}_{i,0},\widehat{\cm}_{i,1}\}_{i \in \overline{T}},\{\cm_i\}_{i \in [\ell]}\right)\right)$.
    \item Sample $\hk \gets \Samp(1^\secp,(\cm_1,\dots,\cm_\ell),T)$.
    \item Set $\widetilde{m}_b \coloneqq m_b \oplus \bigoplus_{i \in \overline{T}}t_i'$ and sample $\widetilde{m}_{1-b} \gets \{0,1\}^\secp$.
    \item Run $\Adv_\secp$ on input $\regA, \{\regR_i\}_{i \in [\ell]}$, and \[\left(\{\cm_i\}_{i \in [\ell]}, \{v_i,x_i,h_i,r_i\}_{i \in T}, \{\widehat{\cm}_{i,0},\widehat{\cm}_{i,1},z_{i,0},z_{i,1}\}_{i \in \overline{T}}, \pi, \widetilde{m}_0,\widetilde{m}_1\right),\] and output their final state on register $\regA'$.
\end{itemize}


\noindent Now, we define a sequence of hybrids, and argue indistinguishability between each adjacent pair.

\begin{itemize}
    \item $\cH_0$: This is the real distribution $\Pi_{\cF_\ROT}[\Adv_\secp,\ket{\psi_\secp}]$.
    \item $\cH_1$: Same as $\cH_0$ except that $(\crs,\pi) \gets \NIZK.\Sim\left(1^\secp,\left(\{\widehat{\cm}_{i,0},\widehat{\cm}_{i,1}\}_{i \in \overline{T}},\{\cm_i\}_{i \in [\ell]}\right)\right)$. Computational indistinguishability from $\cH_0$ follows directly from zero-knowledge of $\NIZK$ (\cref{def:ZK-ZK}).
    \item $\cH_2$: Same as $\cH_1$ except that $(x_1,\dots,x_\ell) \gets \{0,1\}^{2\secp\ell}$ are sampled as uniformly random strings. Computational indistinguishability from $\cH_1$ follows directly from the pseudorandomness of $\PRG$.
    \item $\cH_3$: Same as $\cH_2$ except that $T \subset [\ell]$ is sampled as described in the simulator, and the hash key $\hk$ is sampled at the end of the Sender's computation by $\hk \gets \Samp(1^\secp,(\cm_1,\dots,\cm_\ell),T)$. This is the same distribution as $\cH_2$, which follows from the programmability of the correlation-intractable hash function (\cref{def:programmability}).
    \item $\cH_4$: Same as $\cH_3$ except that $\{\cm_i \gets \Com(\ck,0)\}_{i \in \overline{T}}$ are sampled as commitments to 0. Computational indistinguishability from $\cH_3$ follows directly from hiding of the commitment scheme (\cref{def:hiding}).
    \item $\cH_5$: Same as $\cH_4$ except the registers $\{\regS_i^\ctl\}_{i \in \overline{T}}$ are measured by the Sender in the standard basis rather than the Hadamard basis. We will let $\{b_i\}_{i \in \overline{T}}$ be the bits measured. This is the same distribution as $\cH_4$ since these registers are outside the adverary's view, and the results of measuring these registers are independent of the adversary's view.
    \item $\cH_6$: Same as $\cH_5$ except that $\{\widehat{\cm}_{i,1-b_i} \gets \Com(\ck,0)\}_{i \in \overline{T}}$ are sampled as commitments to 0. Computational indistinguishability from $\cH_5$ follows directly from hiding of the commitment scheme.
    \item $\cH_7$: Same as $\cH_6$ except that $\widetilde{m}_{1-b}$ is sampled as a random string, where $b \coloneqq \bigoplus_{i \in \overline{T}} b_i$. This is the same distribution as $\cH_6$ since $\Delta \gets \{0,1\}^\secp$ is independent of the adversary's view. Also observe that for any honest sender inputs $m_0,m_1 \in \{0,1\}^\secp$, this is the same as the simulated distribution $\widetilde{\Pi}_{\cF_\ROT}[\Sim_\secp,\ket{\psi}]$, which completes the proof. 
\end{itemize}

\end{proof}

\subsection{Applications}\label{subsec:applications}


First, we make use of a result from \cite{C:GIKOS15} (based on earlier work of \cite{C:IshPraSah08}) showing that any unidirectional randomized classical functionality can be securely computed in one message given access to a secure one-message protocol for random string OT. In our words, their theorem is the following.

\begin{importedtheorem}[\cite{C:GIKOS15}]
For any polynomial-size classical circuit $C(x,r) \to y$, there exists a one-shot secure computation protocol for $\cF_\CL[C]$ given polynomially many parallel queries to $\cF_\ROT$.
\end{importedtheorem}

Thus, we immediately obtain the following corollary of \cref{thm:one-shot-security}.

\begin{corollary}\label{cor:classical-functionalities}
For any polynomial-size classical circuit $C(x,r) \to y$, there exists a one-shot secure computation protocol for $\cF_\CL[C]$ in the shared EPR pair model, assuming the sub-exponential hardness of LWE.
\end{corollary}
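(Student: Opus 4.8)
The plan is to combine the one-shot protocol for $\cF_\ROT$ from \cref{thm:one-shot-security} with the reduction of \cite{C:GIKOS15}. First I would recall that the imported theorem of \cite{C:GIKOS15} gives, for any polynomial-size classical circuit $C$, a protocol $\Pi^{\cF_\ROT}_{\cF_\CL[C]}$ that securely realizes $\cF_\CL[C]$ while making only polynomially many \emph{parallel} queries to $\cF_\ROT$ and no other interaction, so the whole object remains a single message from $A$ to $B$. Then I would instantiate each such parallel query with an independent copy of the one-shot shared-EPR-pair protocol $\Pi_{\cF_\ROT}$ from \cref{fig:NIOT}, giving each copy its own fresh batch of EPR pairs and its own fresh common-random-string components (commitment key, NIZK crs, correlation-intractable hash key), all of which can be obtained by measuring shared EPR pairs as noted in the construction. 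The resulting object is still a single message from $A$ to $B$ in the shared EPR pair model, and its assumption is the sub-exponential hardness of LWE, inherited unchanged since the number of instances is a fixed polynomial in $\secp$.

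For security I would invoke a hybrid/composition argument. Since $\Pi_{\cF_\ROT}$ securely realizes $\cF_\ROT$ in the sense of \cref{def:secure-realization}, and — crucially — the simulators constructed in \cref{lemma:ROT-sender-security} and \cref{lemma:ROT-receiver-security} are \emph{straight-line} (each runs the adversary exactly once with no rewinding, and the honest-party side of the protocol also involves no rewinding), the $n = \poly(\secp)$ parallel instances can be replaced one at a time by $n$ invocations of the ideal functionality $\cF_\ROT$ via a standard sequence of hybrids, incurring only $\negl(\secp)$ loss per step and hence $n \cdot \negl(\secp) = \negl(\secp)$ loss overall. After all instances are replaced, we are exactly in the $\cF_\ROT$-hybrid model in which the \cite{C:GIKOS15} protocol is proven secure; composing its simulator with the per-instance simulators yields a single QPT simulator for $\Pi_{\cF_\CL[C]}$ meeting \cref{def:secure-realization} against a malicious $A$ or a malicious $B$.

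The one point requiring care — and the main (modest) obstacle — is justifying this parallel composition, since standalone-secure protocols need not compose in parallel in general. Here non-interactivity saves us: each instance consists of a single message, so the transcripts of distinct instances are mutually independent given the adversary's (and the honest party's) state. In the hybrid step that swaps instance $j$, I would fix the behavior of all instances $i \neq j$ (their inputs being determined by the GIKOS protocol's internal randomness on the honest side, or arbitrary on the corrupted side) and treat instance $j$ as a standalone execution whose input is supplied by an efficient reduction; the indistinguishability then follows directly from \cref{def:secure-realization} applied to $\Pi_{\cF_\ROT}$. I would also note that the EPR pair setups for distinct instances act on disjoint register pairs, so independence of the setups is immediate. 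This establishes the corollary.
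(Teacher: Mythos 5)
Your proposal is correct and follows the same route as the paper: instantiate each parallel $\cF_\ROT$ query in the GIKOS one-shot reduction with an independent copy of the protocol from \cref{fig:NIOT}, each on fresh disjoint EPR pairs and setup, and compose. The paper simply states the corollary as "immediate" from \cref{thm:one-shot-security} together with the imported GIKOS theorem, while you spell out the (straight-line, non-interactive) parallel-composition hybrid argument that makes this work — a reasonable elaboration, but the same proof.
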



\paragraph{Secure teleportation through quantum channels.} Next, we consider the secure computation of unidirectional \emph{quantum} functionalities. The setting here is that, given the public description of a quantum map $Q$, a sender with an arbitrary state on register $\cS$ can send a single message to the receiver who will recover the state $Q(\cS)$ on register $\cR$. The receiver is guaranteed to learn only $Q(\cS)$. Security also guarantees a strong notion of hiding against a malicious sender, whose view will have no more information than if they had just traced register $\cS$ out of their system. In particular, the sender won't learn anything about the resulting state $Q(\cS)$ or any other garbage registers that may have been created while computing $Q(\cS)$. In other words, this notion captures an ideal scenario where a sender places register $\cS$ into an honest channel implementing $Q$, and the receiver obtains the state at the other end.

In general, this notion corresponds to \emph{secure two-party quantum computation} where only one party has input and the other party has output. However, acheiving this task in only one message (which can be classical without loss of generality) in the shared EPR pair model is reminiscent of the setting of \emph{quantum teleportation}, except that there are additional strong integrity and privacy guarantees about the recevied state. This motivates the following definition.

\begin{definition}[Secure Teleportation through Quantum Channel]
A {secure teleportation protocol through a quantum channel $Q$} is a one-message protocol in the shared EPR pair model that can securely realize the ideal functionality $\cF_\QU[Q]$ for any efficient quantum map $Q$.
\end{definition}

We now use a result from \cite{BCKM} showing that any unidirectional quantum functionality can be securely computed in one message given access to a secure one-message protocol for arbitrary unidirectional classical functionalities. Since this theorem is not explicitly stated in \cite{BCKM}, we add some explanation for how it follows from their work.

\begin{importedtheorem}[\cite{BCKM}]
For any polynomial-size quantum circuit $Q(\regX) \to \regY$, there exists a one-shot secure computation protocol for $\cF_\QU[Q]$ given one query to $\cF_\CL[C_Q]$ for some polynomial-size classical circuit $C_Q$ that depends on $Q$.
\end{importedtheorem}

\begin{proof}
\cite[Section 5.1]{BCKM} presents and shows the security of a generic two-party quantum computation protocol that is three-round when both parties obtain output and two-round when only one party obtains output. Here, we are interested in the two-round case, where the receiver sends the first message and the sender sends the second message. In particular, we note that in the case that the receiver has no input, their first message only consists of the first message of an underlying \emph{classical} two-party computation protocol for a classical functionality in which the receiver has no input (this fact was also used in the proof of \cite[Theorem 6.2]{BCKM}). Since we are assuming that classical functionalities can be computed in a single message from sender to receiver, this entirely removes the need for the receiver to send its first message, completing the proof.
\end{proof}

\begin{remark} We describe at a very high level how the resulting one-message quantum protocol operates. This description will be somewhat imprecise and simplified by design.

It makes use of a type of \emph{quantum garbled circuit} first described in \cite{10.1145/3519935.3520073} and formalized by \cite{BCKM}. This quantum garbled circuit is based on the Clifford + measurement representation of quantum circuits, and supports a quantum input encoding procedure and classical garbling procedure. That is, given an input $\ket{\psi}$, the sender can sample a classical key $k$ and run a quantum encoding procedure $\widetilde{\ket{\psi}} \gets \mathsf{Encode}_k(\ket{\psi})$. Then, there is a \emph{classical} procedure $\widetilde{Q} \gets \mathsf{Garble}_k(Q)$ that given a key $k$ and description of quantum circuit $Q$, outputs a classical garbled version of $Q$. Finally, it holds that $Q(\ket{\psi}) = \widetilde{Q}(\widetilde{\ket{\psi}})$, but no other information about $\ket{\psi}$ is leaked by $\widetilde{\ket{\psi}}$ and $\widetilde{Q}$.

Now, given a circuit $Q$ describing a unidirectional quantum functionality, the sender samples $k$ and sends $\widetilde{\ket{\psi}} \gets \mathsf{Encode}_k(\ket{\psi})$ to the receiver. In addition, it inputs $k$ to a unidirectional randomized classical functionality that will sample $\widetilde{Q}$ and along with a description of some measurement $M$ that the receiver can apply to $\widetilde{\ket{\psi}}$ to check that it is well-formed (these are the zero and $T$ checks described in \cite{BCKM}). Thus, the receiver obtains $\widetilde{\ket{\psi}}$ from the sender, and $M$ and $\widetilde{Q}$ from the classical functionality, checks that $\widetilde{\ket{\psi}}$ is well-formed using $M$, and finally evaluates $\widetilde{Q}(\widetilde{\ket{\psi}})$ to obtain their output.
\end{remark}

Thus, we immediately obtain the following corollary of \cref{thm:one-shot-security} and \cref{cor:classical-functionalities}.

\begin{corollary}\label{cor:quantum-functionalities}
Assuming the sub-exponential hardness of LWE, there exists a secure teleportation protocol through any quantum channel.
\end{corollary}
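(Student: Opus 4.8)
The plan is to obtain the claimed protocol purely by composition, with essentially no new cryptographic work. First I would invoke \cref{thm:one-shot-security} to instantiate a one-message, simulation-secure (with abort) protocol for $\cF_\ROT$ in the shared EPR pair model under sub-exponential LWE. Then I would plug this into the imported theorem of \cite{C:GIKOS15}, which converts polynomially many parallel ideal $\cF_\ROT$ calls into a one-message secure protocol for any unidirectional classical functionality $\cF_\CL[C]$; this is exactly \cref{cor:classical-functionalities}. Finally, I would plug \cref{cor:classical-functionalities} into the imported theorem of \cite{BCKM}, which realizes any unidirectional quantum functionality $\cF_\QU[Q]$ from a single ideal $\cF_\CL[C_Q]$ call, where $C_Q$ is the classical garbling-plus-checks circuit associated to $Q$. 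Composing, the sender transmits a single classical message (the $\cF_\ROT$ messages, the \cite{C:GIKOS15} message, and the quantum-garbled-circuit / input-encoding message of \cite{BCKM}, bundled together), which is precisely a secure teleportation protocol through $Q$ in the sense of the preceding definition.

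Two bookkeeping points need attention. First, each of the $\poly(\secp)$-many $\cF_\ROT$ instances used by the \cite{C:GIKOS15} compiler, as well as the internal commitments, NIZK, and correlation-intractable hash of \cref{fig:NIOT}, requires its own batch of shared EPR pairs and its own common \emph{random} string; as noted in the Setup discussion, a shared uniform string can be obtained by measuring additional shared EPR pairs in a fixed basis, so the full construction stays within the shared EPR pair model with no trusted setup beyond honestly generated EPR pairs. Second, I must check that the flavor of security is consistent down the chain: \cref{thm:one-shot-security} gives security with abort against either a malicious sender or a malicious receiver, and both imported compilers are stated and proved in the corresponding hybrid model with security with abort, so the abort interfaces line up and the composed protocol again achieves security with abort for $\cF_\QU[Q]$.

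The only real content is the composition argument. I would establish security via a hybrid sequence that replaces, one at a time, each real $\cF_\ROT$ subprotocol by its ideal functionality, invoking the QPT simulator guaranteed by \cref{thm:one-shot-security} together with its computational indistinguishability guarantee; since there are only $\poly(\secp)$ instances, the accumulated distinguishing advantage remains negligible. This is a standard hybrid because every protocol in sight is non-interactive (a single message from sender to receiver) and every functionality has the "provide input, receive output, then deliver-or-abort" form, so the usual subtleties of concurrent composition do not arise. Once all $\cF_\ROT$ instances are idealized, security of the resulting protocol for $\cF_\QU[Q]$ follows directly from the two imported theorems combined with the definition of secure realization (\cref{def:secure-realization}). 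I expect the main (still routine) obstacle to be carefully threading the malicious-sender and malicious-receiver simulators of \cref{thm:one-shot-security} through the \cite{C:GIKOS15} and \cite{BCKM} simulators so that the final simulator for $\cF_\QU[Q]$ is QPT and correctly handles aborts on either side; in particular, no further complexity leveraging is needed here, since \cref{thm:one-shot-security} already absorbs the sub-exponential hardness of LWE.
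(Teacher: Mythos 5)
Your proposal is correct and takes essentially the same route as the paper: compose \cref{thm:one-shot-security} (one-shot $\cF_\ROT$ from sub-exponential LWE), the imported \cite{C:GIKOS15} compiler to get \cref{cor:classical-functionalities}, and the imported \cite{BCKM} reduction of $\cF_\QU[Q]$ to a single $\cF_\CL[C_Q]$ call. The paper treats this corollary as immediate from those ingredients, so the extra bookkeeping you spell out (shared random strings from EPR measurements, abort compatibility, the standard hybrid over $\poly(\secp)$ OT instances) is consistent elaboration rather than a departure.
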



Now, we explore the following applications as special cases of \cref{cor:quantum-functionalities}.

\paragraph{NIZK for QMA.} First we recall the definition of the complexity class QMA.

\begin{definition}[QMA]
A language $\cL = (\cL_{\text{yes}}, \cL_{\text{no}})$  in QMA is defined by a tuple $(V,p,\alpha,\beta)$, where $p$ is a polynomial, $V = \{V_\secp\}_{\secp \in \mathbb{N}}$ is a uniformly generated family of circuits such that for every $\secp$, $V_\secp$ takes as input a string $x \in \{0,1\}^\secp$ and a quantum state $\ket{\psi}$ on $p(\secp)$ qubits and returns a single bit, and $\alpha,\beta : \mathbb{N} \to [0,1]$ are such that $\alpha(\secp) - \beta(\secp) \geq 1/p(\secp)$. The language is then defined as follows.

\begin{itemize}
    \item For all $x \in \cL_{\text{yes}}$ of length $\secp$, there exists a quantum state $\ket{\psi}$ of size at most $p(\secp)$ such that the probability that $V_\secp$ accepts $(x, \ket{\psi})$ is at least $\alpha(\secp)$. We denote the (possibly infinite) set of quantum witnesses that make $V_\secp$ accept $x$ by $\cR_\cL(x)$.
    \item For all $x \in \cL_{\text{no}}$ of length $\secp$, and all quantum states $\ket{\psi}$ of size at most $p(\secp)$, it holds that $V_\secp$ accepts on input $(x,\ket{\psi})$ with probability at most $\beta(\secp)$.
\end{itemize}
\end{definition}

Next, we define the notion of NIZK for QMA (with setup). In the following definition, we assume the completeness-soundness gap of the QMA language is 1-$\negl(\secp)$ (which can be obtained without loss of generality).

\begin{definition}[NIZK for QMA]
A NIZK for QMA for a language $\cL \in\QMA$ with relation $\cR_\cL$ consists of the following efficient algorithms.
\begin{itemize}
    \item $\NIZK.\Setup(1^\secp) \to (\cP,\cV)$: On input the security parameter $1^\secp$, the setup outputs a bipartite state on registers $\cP,\cV$.
    \item $\NIZK.\Prove(\cP, \ket{\psi}, x) \to \pi$: On input the state on register $\cP$, a witness $\ket{\psi}$, and a statement $x$, the proving algorithm outputs a proof $\pi$.
    \item $\NIZK.\Verify(\cV, \pi, x) \to \{\top,\bot\}$: On input the state on register $\cV$, a proof $\pi$, and a statement $x$, the verification algorithm returns $\top$ or $\bot$.
\end{itemize}
They should satisfy the following properties.
\begin{itemize}
	\item \textbf{Completeness.} For all $x\in\cL_{\mathsf{yes}}$, and all $\ket{\psi}\in\cR_\cL(x)$ it holds that
\[
\Pr\left[\NIZK.\Verify(\cV, \NIZK.\Prove(\cP, \ket{\psi}, x), x) \to \top\right] = 1 -\negl(\secp),
\] where $(\cP,\cV) \gets \NIZK.\Setup(1^\secp)$.

	\item \textbf{Soundness.} For any non-uniform QPT malicious prover $\{P^*_\secp,\ket{\psi_\secp}\}_{\secp \in \bbN}$ and $x \in \cL_{\mathsf{no}}$, it holds that \[\Pr\left[\NIZK.\Verify(\cV,P^*_\secp(\cP,\ket{\psi_\secp}),x)\to \top\right] = \negl(\secp).\]
	
	\item \textbf{Zero-knowledge.} There exists a simulator $\{S_\secp\}_{\secp \in \bbN}$ such that for any non-uniform QPT malicious verifier $\{V^*_\secp,\ket{\psi_\secp}\}_{\secp \in \bbN}$, $x \in \cL_{\mathsf{yes}}$, and $\ket{\psi} \in \cR_\cL(x)$, it holds that \[\bigg| \Pr\left[V^*_\secp(\cV,\pi,x) \to 1 : \begin{array}{r}(\cP,\cV) \gets \NIZK.\Setup(1^\secp) \\ \pi \gets \NIZK.\Prove(\cP,\ket{\psi},x)\end{array}\right] - \Pr\left[V^*_\secp(\cV,\pi,x) \to 1 : \begin{array}{r}(\cV,\pi) \gets S_\secp(x)\end{array}\right]\bigg| = \negl(\secp).\]

\end{itemize}
\end{definition}

\begin{corollary}
Assuming sub-exponential LWE, there exists a NIZK for QMA in the shared EPR model, where $\NIZK.\Setup(1^\secp)$ is a deterministic algorithm that outputs polynomially many EPR pairs.
\end{corollary}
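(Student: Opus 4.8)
The plan is to obtain the NIZK for QMA as a direct application of secure teleportation through a quantum channel (\cref{cor:quantum-functionalities}). Fix a QMA language $\cL$ with verification family $\{V_\secp\}_{\secp \in \bbN}$, and assume as in the definition above that its completeness--soundness gap has been amplified to $\alpha(\secp) = 1 - \negl(\secp)$, $\beta(\secp) = \negl(\secp)$. For each statement $x$ of length $\secp$, I would define the efficient quantum map $Q_x$ that takes a $p(\secp)$-qubit input register $\regX$, runs $V_\secp$ coherently on $(x,\regX)$, measures the designated output qubit in the standard basis, and outputs the resulting accept/reject bit on register $\regY$; this map has size $\poly(\secp)$, uniformly over all $x$ of a given length. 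The NIZK setup will deterministically output the $\poly(\secp)$ EPR pairs required by the secure teleportation protocol $\Pi_{Q_x}$ realizing $\cF_\QU[Q_x]$, handing the sender's halves to the prover as register $\cP$ and the receiver's halves to the verifier as register $\cV$. Then $\NIZK.\Prove(\cP,\ket{\psi},x)$ runs the sender of $\Pi_{Q_x}$ on input $\ket{\psi}$ and register $\cP$, outputting the single classical message $\pi$; and $\NIZK.\Verify(\cV,\pi,x)$ runs the receiver of $\Pi_{Q_x}$ on $\cV$ and $\pi$, accepting iff it outputs the accept bit. Here I would invoke the observation following the Setup of \cref{fig:NIOT} that the shared random strings needed by the commitment, NIZK, and correlation-intractable hash primitives can themselves be produced by measuring a subset of the shared EPR pairs in the standard basis, so that $\NIZK.\Setup$ is purely a (deterministic) EPR-pair generator whose output size depends only on $\secp$.

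Next I would check the three properties against the real/ideal definition of secure realization (\cref{def:secure-realization}). Completeness follows from correctness of $\Pi_{Q_x}$ and $\alpha(\secp) = 1-\negl(\secp)$: the ideal functionality $\cF_\QU[Q_x]$ delivers the accept bit with probability $1-\negl(\secp)$, hence so does the real protocol. Soundness follows from security of $\Pi_{Q_x}$ against a malicious sender: for $x \in \cL_{\mathsf{no}}$ and any malicious prover $P^*$, the sender-simulator submits some effective state $\rho$ to $\cF_\QU[Q_x]$ (or causes an abort, in which case the verifier receives $\bot$ and rejects), so in the ideal world the verifier accepts with probability at most $\beta(\secp) = \negl(\secp)$; taking the distinguisher to be the one that simply reads the honest verifier's output register transfers this bound to the real world up to $\negl(\secp)$.

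The step requiring the most care is zero-knowledge, since the NIZK simulator must produce $(\cV,\pi)$ with no reference to a particular malicious verifier. I would apply security of $\Pi_{Q_x}$ against a malicious receiver to the benign ``adversary'' that plays the honest receiver but copies its EPR input register and the received message $\pi$ into its output; this yields a simulator $\Sim$ that interacts with $\cF_\QU[Q_x]$, learns only the delivered bit, and outputs a pair $(\cV,\pi)$ that is computationally indistinguishable from the honestly generated one. The NIZK simulator $S_\secp(x)$ then runs $\Sim$ with the delivered bit hardwired to ``accept'' --- valid because for $x \in \cL_{\mathsf{yes}}$ and $\ket{\psi} \in \cR_\cL(x)$ the true delivered bit is ``accept'' except with probability $\negl(\secp)$ --- and outputs the resulting $(\cV,\pi)$. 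Chaining the two indistinguishabilities, namely that $S_\secp(x)$ is statistically close to the ideal-world output of $\Sim$ when fed the true delivered bit, which is in turn computationally indistinguishable from the honestly generated $(\cV,\pi)$, gives the zero-knowledge guarantee for every malicious verifier $V^*$. I do not expect any deep obstacle here; the only genuine subtleties are bookkeeping --- that $Q_x$ and hence the EPR-pair count are uniform in $x$ so that $\NIZK.\Setup$ need only take $1^\secp$, and that secure realization applied to the benign receiver adversary indeed delivers a universal simulator for the pair $(\cV,\pi)$ rather than one tailored to a fixed verifier.
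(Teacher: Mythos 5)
Your proposal is correct and follows the same route as the paper: it instantiates the secure teleportation protocol (\cref{cor:quantum-functionalities}) with the channel $Q_x$ that coherently runs the QMA verifier on $(x,\regX)$ and outputs the decision bit, and then reads off completeness from correctness, soundness from simulation against a malicious sender, and zero-knowledge from simulation against a malicious receiver. The paper states this in two sentences; your write-up adds the same bookkeeping details (uniform $\poly(\secp)$ EPR-pair count, obtaining the commitment/CRS/hash keys by measuring extra EPR pairs, extracting a universal NIZK simulator from the malicious-receiver simulator applied to the identity receiver, and hardwiring the accept bit), all of which are sound.
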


\begin{proof}
This follows as a special case of our secure teleportation protocol, where the map $Q$ is the QMA verification circuit that outputs a single classical bit. Indeed, soundness follow directly from simulation security against a malicious sender, and zero-knowledge follows directly from simulation security against a malicious receiver.
\end{proof}

Prior work on NIZK for QMA includes the following.
\begin{itemize}
	\item Constructions in a \emph{pre-processing} model, where $\NIZK.\Setup(1^\secp)$ is randomized and samples correlated \emph{private randomness} for the prover and verifier \cite{ACGH,CVZ,BG,Shm21,BCKM,MY22}. Some of these can also be interpreted as two-message protocols in the common random string (CRS) model.
	\item A construction in the shared EPR pair + quantum random oracle model \cite{MY22}.
	\item A construction in the common reference string + classical oracle model \cite{BM22} (alternatively, from iO and non-black-box use of a hash function modeled as a random oracle). 
\end{itemize}

Thus, we achieve the first construction of NIZK for QMA in the shared EPR model from (sub-exponential) standard assumptions.



\paragraph{Zero-knowledge State Synthesis.} State synthesis \cite{https://doi.org/10.48550/arxiv.1607.05256,Rosenthal2022InteractivePF,10.4230/LIPIcs.CCC.2022.5} refers to the problem of generating a quantum state $\rho$ given an implicit classical description of it. That is, given the description of a quantum circuit $Q$ acting on $n$ qubits, generate the state 
$$\rho = \Tr_{\cB}\left(Q \ket{0^n}\right),$$ where $\rho$ consists of $m$ qubits and $\cB$ is a register of $n-m$ qubits.  We note that prior work has typically formalized this problem only for pure states ($n=m$), but the problem extends naturally to mixed states.

Thus far, state synthesis has been studied in the setting where a weak verifier is interacting with a more-powerful oracle or prover in order to generate a highly complex state $\rho$. However, \cite{Rosenthal2022InteractivePF} raised the question of whether there is some meaningful notion of \emph{zero-knowledge} state synthesis. In this setting, the notion would make sense even for efficiently preparable states. 

We propose one notion of zero-knowledge state synthesis. Rather than considering a single circuit $Q$, we consider a \emph{family} of circuits $\{Q_w\}_w$ parameterized by a witness $w$ that may be known to the prover but not necessarily to the verifier. The goal would then be to have the prover help the verifier generate
\[\rho_w = \Tr_{\cB}\left(Q_w\ket{0^n}\right)\] without leaking the description of $w$. A formal definition follows.

\begin{definition}[Zero-knowledge State Synthesis]\label{def:ZKSS}
Let $n(\secp),m(\secp),k(\secp)$ be polynomials such that $m(\secp) \leq n(\secp)$. A zero-knowledge state synthesis protocol for a family of polynomial-size quantum circuits $\{\{Q_{\secp,w}\}_{w \in \{0,1\}^{k(\secp)}}\}_{\secp \in \bbN}$ is an interactive protocol between a QPT interactive machine $P$ and a QPT interactive machine $V$. $P$ takes as input a classical string $w \in \{0,1\}^{k(\secp)}$ and has no output, while $V$ has no input but outputs either a state on $m(\secp)$ qubits or a special abort symbol $\dyad{\bot}{\bot}$.\footnote{We will assume that the abort symbol is orthogonal to this $m(\secp)$-qubit Hilbert space.} For any $w$, we write $\rho_w = \Tr_\cB\left(Q_w\ket{0^{n(\secp)}}\right),$ where register $\cB$ holds the final $n(\secp)-m(\secp)$ qubits. The protocol should satisfy the following properties.

\begin{itemize}
    \item \textbf{Completeness.} For any $w \in \{0,1\}^{k(\secp)}$, it holds that \[\TD\left(\rho_w,\langle P(w),V \rangle\right) = \negl(\secp),\] where $\langle P(w),V \rangle$ denotes the output of the honest verifier after interacting in the protocol with $P(w)$.
    \item \textbf{Soundness.} For any QPT malicious prover $\{P^*_\secp\}_{\secp \in \bbN}$, there exists a state $\rho^*$ in the convex combination of $\{\rho_w\}_{w \in \{0,1\}^{k(\secp)}} \cup \{\dyad{\bot}{\bot}\}$ such that for any QPT distinguisher $\{D_\secp\}_{\secp \in \bbN}$,\footnote{One could ask for a stronger definition of soundness, where the state output by the verifier must be \emph{statistically} rather than \emph{computationally} close to a distribution over honest output states (and abort). However, our protocol will only achieve this computational notion. }
    
    \[\big|\Pr[D_\secp(\rho^*) \to 1]  - \Pr[D_\secp( \langle P^*_\secp,V\rangle) \to 1]\big| = \negl(\secp),\]

    
    where $\langle P^*_\secp,V\rangle$ denotes the output of the honest verifier after interacting in the protocol with $P^*_\secp$.
    \item \textbf{Zero-Knowledge.} For any QPT malicious verifier $\{V^*_\secp\}_{\secp \in \bbN}$, there exists a QPT simulator $\{S_\secp\}_{\secp \in \bbN}$ such that for any QPT distinguisher $\{D_\secp\}_{\secp \in \bbN}$ and $w \in \{0,1\}^{k(\secp)}$, \[\big|\Pr[D_\secp(\langle P(w),V^*_\secp\rangle) \to 1] - \Pr[D_\secp(S_\secp(\rho_w)) \to 1]\big| = \negl(\secp),\] where $\langle P(w),V^*_\secp\rangle$ denotes the output of the malicious verifier $V^*_\secp$ after interacting in the protocol with $P(w)$, which may be a state on an arbitrary (polynomial-size) register.
\end{itemize}
\end{definition}

Now we perform some sanity checks on the definition. We also stress that there may be other natural definitions to consider, along with applications, and we defer a more thorough exploration of this topic to future work.

\begin{itemize}
	\item The trivial protocol where the prover prepares $\rho_w$ and sends it the verifier will not satisfy soundness, because it may not be efficient (or even possible) in general for the verifier to check that there exists $w$ such that its received state $\rho = \Tr_\cB(Q_w\ket{0^n})$.
	\item The trivial protocol where the prover sends $w$ and the verifier prepares $\rho_w$ will not satisfy zero-knowledge, because the simulator will not in general be able to recover $w$ from $\rho_w$.
	\item This notion generalizes zero-knowledge for NP. Each instance $x$ of an NP language defines a family of circuits $\{Q_w\}_w$ where $Q_w$ runs the verification algorithm on $x$ and $w$ and outputs a bit. Then, soundness says that for any no instance, the verifier outputs a state that is computationally indistinguishable from a mixture over $\dyad{0}{0}$ and $\dyad{\bot}{\bot}$, which implies that it must output either $\ket{0}$ (and reject) or $\ket{\bot}$ (and reject) with overwhelming probability. Zero-knowledge says that for any yes instance, and simulator can simulate the verifier's view just given $\dyad{1}{1}$, which merely indicates that the instance is a yes instance.
\end{itemize}



We note that while prior work \cite{BCKM} has implicitly constructed a two-message zero-knowledge state synthesis protocol in the CRS model, our work is the first to obtain this primitive with one message in the shared EPR pair model.

\begin{corollary}
Assuming sub-exponential LWE, there exists zero-knowledge state synthesis for any efficient family of quantum circuits.
\end{corollary}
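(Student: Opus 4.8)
The plan is to derive this corollary directly from \cref{cor:quantum-functionalities} (secure teleportation through quantum channels), by packaging the entire circuit family $\{\{Q_{\secp,w}\}_{w \in \{0,1\}^{k(\secp)}}\}_{\secp \in \bbN}$ into a single quantum map and invoking the one-message secure-computation guarantee in the shared EPR pair model. Concretely, I would define, for each $\secp$, a quantum operation $Q$ acting on an input register $\cX$ of $k(\secp)$ qubits and producing an output register $\cY$ of $m(\secp)$ qubits, as follows: first \emph{measure} $\cX$ in the standard basis to obtain a classical string $w \in \{0,1\}^{k(\secp)}$, then prepare $\ket{0^{n(\secp)}}$, apply the (uniformly generated, polynomial-size) circuit $Q_{\secp,w}$, trace out the last $n(\secp)-m(\secp)$ qubits, and output the remaining $m(\secp)$ qubits on $\cY$. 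Since the family is uniform and each $Q_{\secp,w}$ has polynomial size, $Q$ is an efficient quantum map, so \cref{cor:quantum-functionalities} yields a one-message protocol in the shared EPR pair model that securely realizes $\cF_\QU[Q]$ (\cref{def:secure-realization}). The zero-knowledge state synthesis protocol is then: the prover $P(w)$ plays the sender $A$ with input register $\cX$ set to $\dyad{w}{w}$, sends the single (classical) protocol message, and has no output; the verifier $V$ plays the receiver $B$, and outputs whatever it recovers on $\cY$, or $\dyad{\bot}{\bot}$ if it aborts.

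With this setup, the three properties of \cref{def:ZKSS} follow almost immediately from the corresponding guarantees of the secure teleportation protocol. \textbf{Completeness}: on honest input $\dyad{w}{w}$, the standard-basis measurement inside $Q$ returns $w$ with probability $1$, so $\cF_\QU[Q]$ delivers $\rho_w = \Tr_\cB(Q_{\secp,w}\ket{0^{n(\secp)}})$, and correctness of the protocol gives that $V$'s output is within $\negl(\secp)$ trace distance of $\rho_w$. \textbf{Soundness}: given a malicious prover $P^*$, invoke the simulator $\Sim$ guaranteed by security against a malicious sender; $\Sim$ interacts with $\cF_\QU[Q]$, either sending $\abort$ (so $V$ outputs $\dyad{\bot}{\bot}$) or feeding some (possibly entangled) state to the functionality, whose reduced state on $\cX$ we call $\sigma$. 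Because $Q$ measures its input first, $Q(\sigma) = \sum_w \bra{w}\sigma\ket{w}\,\rho_w$ is a convex combination of $\{\rho_w\}_w$; tracing out the simulator's residual register, $V$'s output is computationally indistinguishable from a fixed state $\rho^*$ in the convex hull of $\{\rho_w\}_{w}\cup\{\dyad{\bot}{\bot}\}$, exactly as required. \textbf{Zero-knowledge}: given a malicious verifier $V^*$, the simulator $S$ guaranteed by security against a malicious receiver takes as input precisely the functionality's output $Q(\dyad{w}{w}) = \rho_w$ and produces a state indistinguishable from $V^*$'s view after interacting with $P(w)$; this is exactly the required $S_\secp(\rho_w)$. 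Finally, the setup here is just the deterministic generation of polynomially many EPR pairs inherited from \cref{cor:quantum-functionalities}.

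I expect the main subtlety — and the reason the reduction is not entirely automatic — to be the choice to have $Q$ measure its input register in the standard basis before running $Q_{\secp,w}$. Without this measurement, a malicious prover's extracted input could be a genuine superposition over witnesses (e.g.\ $\sum_w \ket{w}/\sqrt{2^{k(\secp)}}$), on which $Q$ would produce a state lying outside the convex hull of $\{\rho_w\}_w$, breaking the soundness notion of \cref{def:ZKSS}; the measurement forces the functionality's output to always be a \emph{classical mixture} of honest target states, while costing nothing in completeness since honest provers supply a classical input. The only remaining points to verify are bookkeeping: that the computational (rather than statistical) flavor of soundness in \cref{def:ZKSS} is precisely what simulation security against a malicious sender provides, and that the register dimensions ($k(\secp)$-qubit input, $m(\secp)$-qubit output) are set consistently with the statement of $\cF_\QU$.
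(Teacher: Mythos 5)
Your proposal is correct and follows the same approach as the paper: instantiate the secure-teleportation protocol (\cref{cor:quantum-functionalities}) with the map $Q$ that classicalizes its input $w$ and outputs $\rho_w = \Tr_\cB(Q_{\secp,w}\ket{0^{n(\secp)}})$, then read off completeness from correctness, soundness from simulation security against a malicious sender (the simulator's extracted input defines the convex mixture $\rho^*$), and zero-knowledge from simulation security against a malicious receiver. Your explicit insistence that $Q$ first measure $\cX$ in the standard basis is a useful piece of bookkeeping that the paper leaves implicit by saying ``$Q$ takes a classical input $w$''; it is indeed necessary to guarantee that the functionality's output always lands in the convex hull of $\{\rho_w\}_w$, as the paper's soundness argument assumes.
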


\begin{proof}
This follows as a special case of our secure teleportation protocol, where the map $Q$ takes a classical input $w$ and outputs $\rho_w$. 
Soundness follow directly from simulation security against a malicious sender, where the state $\rho^*$ is defined by the probability distribution over classical strings $w$ (and $\mathsf{abort}$) sent by the simulator to the ideal functionality. Zero-knowledge follows directly from simulation security against a malicious receiver.
\end{proof}

\ifsubmission\else\section{Two-Round MPC}\label{sec:MPC}
In this section, we give a construction of a two-round MPC protocol for computing any classical function $f$ in the shared EPR pairs model without making use of public-key cryptography.

\subsection{Two-round OT in the shared EPR pairs model}

In this section, we present syntax and definitions for (chosen-input) two-round oblivious transfer where the parties begin with shared EPR pairs. We assume that all communication is classical (which is without loss of generality, due to teleportation), and we additionally require that all \emph{computation} is classical after initial (input-independent) measurements of the shared EPR pairs. The syntax follows.

\begin{itemize}
    \item $(\regR,\regS) \gets \Setup(1^\secp)$: The setup algorithm prepares $\poly(\secp)$ EPR pairs with halves on register $\regR$ and other halves on register $\regS$.
    \item $\sigma_R \gets M_R(\regR)$: The receiver performs a measurement $M_R$ on its halves of EPR pairs, resulting in a classical string $\sigma_R$.
    \item $\sigma_S \gets M_S(\regS)$: The sender performs a measurement $M_S$ on its halves of EPR pairs, resulting in a classical string $\sigma_S$.
    \item $(\ots_1,\omega) \gets \OT_1(\sigma_R,b)$: The receiver, on input a string $\sigma_R$ and bit $b \in \{0,1\}$, applies the first OT algorithm to obtain a message $\ots_1$ and secret $\omega$.
    \item $\{\ots_2,\bot\} \gets \OT_2(\sigma_S,\ots_1,m_0,m_1)$: The sender, on input a string $\sigma_S$, message $\ots_1$, and strings $m_0,m_1$, applies the second OT algorithm to obtain either a message $\ots_2$ or an abort symbol $\bot$.
    \item $\{m,\bot\} \gets \OT_3(\ots_2,b,\omega)$: The receiver, on input a message $\ots_2$, bit $b$, and secret $\omega$, outputs either a message $m$ or an abort symbol $\bot$.
\end{itemize}

We now define what it means for such a protocol to be \emph{black-box friendly}.

\begin{definition}
A two-round OT protocol in the shared EPR pair model is \emph{black-box friendly} if:
\begin{itemize}
    \item The receiver's algorithm $\OT_1$ can be split into two parts $\OT_1^\C,\OT_1^\NC$, where $(\ots_1^\C,\omega) \gets \OT_1^\C(\sigma_R)$, $\ots_1^\NC \gets \OT_1^\NC(b,\omega)$, and $\ots_1 \coloneqq (\ots_1^\C, \ots_1^\NC)$.
    \item The sender's algorithm $\OT_2$ can be split into two parts $\OT_2^\C,\OT_2^\NC$, where $\{\top,\bot\} \gets \OT_2^\C(\sigma_S,\ots_1^\C)$, $\ots_2 \gets \OT_2^\NC(\sigma_S,\ots_1,m_0,m_1)$, and the output of $\OT_2(\sigma_S,\ots_1,m_0,m_1)$ is set to $\bot$ if the output of $\OT_2^\C$ was $\bot$ and is otherwise set to the output of $\OT_2^\NC$.
    \item The only algorithms in the entire protocol that involve cryptographic operations are $\OT_1^\C$ and $\OT_2^\C$.
\end{itemize}
\end{definition}

Next, we define the security properties that we will need.

\begin{definition}\label{def:two-round-sim-security}
A two-round OT protocol in the shared EPR pair model satisfies \emph{standard simulation-based security} if it satisfies \cref{def:secure-realization} for functionality $\cF_\OT$. 
\end{definition}

\begin{definition}\label{def:two-round-equiv-security}
A two-round OT protocol in the shared EPR pair model satisfies \emph{equivocal receiver's security} if there exists a QPT simulator $\Sim_\EQ(1^\secp)$ such that for any $b \in \{0,1\}$,
\[\left\{(\regS,\ots_1,\omega_b) : (\regS,\ots_1,\omega_0,\omega_1) \gets \Sim_\EQ(1^\secp)\right\} \approx_c \left\{(\regS,\ots_1,\omega) : \begin{array}{r}(\regR,\regS) \gets \Setup(1^\secp), \\ \sigma_R \gets M_R(\regR),\\(\ots_1,\omega) \gets \OT_1(\sigma_R,b)\end{array}\right\}.\]
\end{definition}

\begin{theorem}\label{thm:black-box-friendly}
Given a black-box friendly two-round OT protocol in the shared EPR pair model that satisfies standard simulation-based security and equivocal receiver's security, there exists a black-box construction of two-round MPC for classical functionalities in the shared EPR pair model.
\end{theorem}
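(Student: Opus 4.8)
The plan is to follow the three-step outline from the technical overview. \textbf{Step 1 (black-box friendly inner MPC).} We start from a statistically secure MPC protocol in the $\cF_\OT$-hybrid model that makes only black-box use of a parallel OT oracle --- for instance the GMW-style protocols of \cite{C:CreVanTap95,C:IshPraSah08} --- and instantiate every OT call with the given black-box friendly two-round OT in the shared EPR pair model. Using the black-box friendly property, all quantum measurements $M_R, M_S$ and all cryptographic operations $\OT_1^\C, \OT_2^\C$ can be pushed into an \emph{input-independent} preamble: at the start of the protocol every ordered pair of parties measures its halves of the shared EPR pairs, broadcasts the $\C$-messages, and runs the cryptographic checks (aborting if any fails). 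After this preamble, the rest of the protocol --- the $\NC$-parts of the OTs composed with the information-theoretic MPC on top --- is \emph{entirely classical and information-theoretic}. We thus obtain a protocol consisting of (a) a one-round input-independent phase of EPR measurements, broadcasts, and crypto checks, followed by (b) a multi-round, fully classical, information-theoretic phase that depends on inputs.

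\textbf{Step 2 (round collapsing).} We apply the \cite{EC:GarSri18a} round-compressing compiler to the classical information-theoretic phase (b) only. That compiler collapses any polynomial-round protocol to two rounds using a two-round OT and a garbling scheme, making only black-box use of both; we instantiate its OT with the same two-round shared-EPR OT. The key point, enabled precisely by the black-box friendly property, is that the compiler never needs to touch the quantum or cryptographic preamble from Step 1: that preamble is input-independent, so we run it in parallel with the first round of the compiler. The result is a two-round protocol in the shared EPR pair model whose only cryptographic ingredients are the black-box friendly OT's $\C$-operations, the compiler's OT, and a garbling scheme --- all used in a black-box way.

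\textbf{Step 3 (security).} The simulator is built by composition. The inner protocol's simulator uses the OT's \emph{standard simulation-based security} (\cref{def:two-round-sim-security}) to simulate corrupted-party OT executions, together with the statistical simulator of the information-theoretic MPC. The \cite{EC:GarSri18a} compiler's simulator then wraps this, using standard simulation-based security again for the compiler's OT and \emph{equivocal receiver's security} (\cref{def:two-round-equiv-security}) at the point where the compiler must explain the receiver's first OT message consistently with an input extracted later. A standard hybrid argument over the OT instances and the compiler's transitions --- invoking OT security and security of the garbling scheme --- shows that the composed protocol securely realizes the desired classical functionality in the sense of \cref{def:secure-realization}. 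Black-box use of cryptography is preserved throughout, since the inner MPC is black-box in OT, the OT is black-box friendly, and the compiler is black-box.

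\textbf{Main obstacle.} The crux is making the round structure and the security reductions line up simultaneously: we need (i) the \cite{EC:GarSri18a} compiler to operate on a purely classical information-theoretic protocol, which requires that \emph{all} quantum measurements and cryptographic steps be confined to an input-independent preamble completable before any input-dependent message is sent --- this is exactly the role of the black-box friendly property --- and (ii) the shared-EPR two-round OT equipped with only the two stated security notions to serve as a drop-in replacement for the plain two-round OT in the compiler's analysis. Verifying (ii) amounts to checking that equivocal receiver's security is precisely the equivocation property the compiler's proof invokes, and that standard simulation-based security suffices everywhere else; this is where most of the care in the full proof will be spent.
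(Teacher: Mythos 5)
Your proposal matches the paper's proof essentially step for step: the paper formalizes your Step 1 as constructing a ``black-box friendly conforming protocol'' from an IT MPC in the $\cF_\OT$-hybrid model by pushing all measurements and cryptographic $\C$-operations into the input-independent pre-processing phase, and then your Steps 2--3 correspond to its second claim applying the \cite{EC:GarSri18a} compiler (with equivocal receiver's security for the receiver's first message and straight-line simulation carrying over to the quantum setting). Your identification of where black-box friendliness and each of the two OT security notions are invoked is exactly where the paper places them.
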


\begin{proof}
This follows from a few tweaks to the construction from \cite{EC:GarSri18a}, and some passages of this proof will be taken verbatim from \cite{EC:GarSri18a}.  At a high level, the \cite{EC:GarSri18a} construction proceeds in two steps.
\begin{enumerate}
    \item Write any multi-round secure MPC protocol as a \emph{conforming protocol}, where in each round a single party broadcasts a single bit.
    \item Compile any conforming protocol into a two-round protocol using garbled circuits and two-round OT with equivocal receiver's security.
\end{enumerate}

To implement Step 1, we will start with any multi-round MPC protocol in the OT-hybrid model \cite{STOC:Kilian88,C:CreVanTap95,C:IshPraSah08} and replace each call to the OT ideal functionality with an implementation of our black-box friendly two-round OT. However, we want to make sure that all cryptographic operations happen ``outside'' of the part of the protocol that is compiled during Step 2 in a non-black-box manner. Thus, we require each party to run their initial measurements $M_R,M_S$ and cryptographic algorithms $\OT_1^\C,\OT_2^\C$ for each OT that they will participate in at the \emph{beginning} of the protocol. Note that this is possible because each of these algorithms are required to be \emph{input-independent}. The result will be a conforming protocol with the following structure.

\begin{definition}
A \emph{black-box friendly conforming protocol in the shared EPR pair model} is an MPC protocol $\Phi$ between parties $P_1,\dots,P_n$ with inputs $x_1,\dots,x_n$ respectively. For each $i \in [n]$, we let $x_i \in \{0,1\}^m$ denote the input of party $P_i$. The protocol is defined by functions $\pre,\chck,\post$, and computation steps or what we call \emph{actions} $\phi_1,\dots,\phi_T$. Each pair of parties begins with a polynomial number of shared EPR pairs, and we let $\cP_i$ denote party $P_i$'s register that contains its halves of EPR pairs with every other party. Then, the protocol proceeds in three stages: the pre-processing phase, the computation phase, and the output phase.
\begin{itemize}
    \item \textbf{Pre-processing phase:} For each $i \in [n]$, party $P_i$ computes \[(z_i,v_i,w_i) \gets \pre(1^\secp,i,x_i,\cP_i),\] where $\pre$ is a quantum operation that takes as input the security parameter $1^\secp$, party identifier $i$, input $x_i$, and halves of EPR pairs on register $\cP_i$, and outputs $z_i \in \{0,1\}^{\ell/n}$, $v_i \in \{0,1\}^\ell$ (where $\ell$ is a parameter of the protocol), and $w_i \in \{0,1\}^*$ . Next, $P_i$ broadcasts $z_i$ to every other party, while retaining $v_i$ and $w_i$ as secret information. We require that all $v_{i,j} = 0$ for all $j \in [\ell] \setminus \{(i-1)\ell/n + 1,\dots,i\ell/n\}$. Finally, for each $i \in [n]$, party $P_i$ computes \[\{\top,\bot\} \gets \chck(i,w_i,z_1,\dots,z_n),\] and aborts the protocol in the case of $\bot$.
    \item \textbf{Computation phase:} For each $i \in [n]$, party $P_i$ sets \[\state_i \coloneqq (z_1, \dots, z_n) \oplus v_i.\] Next, for each $t \in \{1,\dots,T\}$ parties proceed as follows:
    \begin{enumerate}
        \item Parse action $\phi_t$ as $(i,f,g,h)$ where $i \in [n]$ and $f,g,h \in [\ell]$.
        \item Party $P_i$ computes \emph{one} $\NAND$ gate as 
        \[\state_{i,h} = \NAND(\state_{i,f},\state_{i,g})\] and broadcasts $\state_{i,h} \oplus v_{i,h}$ to every other party.
        \item Every party $P_j$ for $j \neq i$ updates $\state_{j,h}$ to the bit value received from $P_i$.
    \end{enumerate}
    We require that for all $t,t' \in [T]$ such that $t \neq t'$, we have that if $\phi = (\cdot,\cdot,\cdot,h)$ and $\phi_{t'} = (\cdot,\cdot,\cdot,h')$, then $h \neq h'$. Also we denote $A_i \subset [T]$ to be the set of rounds in which party $P_i$ sends a bit. Namely, $A_i = \{t \in T : \phi_t = (i,\cdot,\cdot,\cdot)\}$.
    \item \textbf{Output phase:} For each $i \in [n]$, party $P_i$ outputs $\post(\state_i)$.
\end{itemize}
Finally, we require that the only algorithms in the entire protocol that involve cryptographic operations are $\pre$ and $\chck$.
\end{definition}

Now, the following two claims complete the proof of the theorem.

\begin{claim}
Assuming a black-box friendly two-round OT protocol in the shared EPR pair model that satisfies simulation-based security, there exists a black-box friendly conforming protocol in the shared EPR pair model.
\end{claim}

\begin{proof}
We start with a multi-round MPC protocol in the OT-hybrid model \cite{STOC:Kilian88,C:CreVanTap95} and replace each call to the OT ideal functionality with an implementation of a black-box friendly two-round OT in the shared EPR pair model. We assume without loss of generality that each pair of parties $(P_i,P_j)$ participates in $u$ many OT protocols with $P_i$ as the receiver and $P_j$ as the sender. We split the resulting protocol $\Pi$ into three stages, where all pairs of parties begin with sufficiently many shared EPR pairs.
\begin{enumerate}
    \item In the first stage, each party $P_i$ does the following. For each $j \neq i$, $P_i$ measures $M_R$ on $u$ disjoint sets of halves of EPR pairs that they share with party $P_j$ to obtain $\{\sigma_{R,i,j,o}\}_{o \in [u]}$, measures $M_S$ on $u$ other disjoint sets of halves of EPR pairs that they share with party $P_j$ to obtain  $\{\sigma_{S,i,j,o}\}_{o \in [u]}$, and computes $\{(\ots^\C_{1,i,j,o},\omega_{i,j,o}) \gets \OT_1^\C(\sigma_{R,i,j,o})\}_{o \in [u]}$. Then, party $P_i$ broadcasts $y_i \coloneqq \{\ots^\C_{1,i,j,o}\}_{j \neq i, o \in [u]}$. We assume without loss of generality that there exists $k$ such that for each $i$, $y_i \in \{0,1\}^k$.
    
    \item In the second stage, each party $P_i$ runs $\OT_2^\C(\sigma_{S,i,j,o},\ots_{1,i,j,o}^\C)$ for each $j \neq i, o \in [u]$ and outputs $\bot$ if any are $\bot$ (and otherwise continues). 
    \item In the third stage, each party $P_i$ defines \[x_i' \coloneqq \left(x_i, \{\omega_{i,j,o}\}_{j \neq i, o \in [u]}, \{\sigma_{S,j,i,o}\}_{j \neq i, o \in [u]}, \$_i \right),\] where $\$_i$ are any additional random coins that $P_i$ needs for computing the MPC protocol. Then, the parties engage in a deterministic MPC protocol with private inputs $x'_1,\dots,x'_n$. Note that each OT $o \in [u]$ between receiver $P_i$ and sender $P_j$ can be completed by 
    \begin{itemize}
        \item computing $\ots_{1,i,j,o}^\NC \gets \OT_1^\NC(b,\omega_{i,j,0})$,
        \item setting $\ots_{1,i,j,o} \coloneqq (\ots_{1,i,j,o}^\C, \ots_{1,i,j,o}^\NC)$,
        \item computing $\ots_{2,i,j,o} \gets \OT_2^\NC(\sigma_{S,i,j,o},\ots_{1,i,j,o},m_0,m_1)$,
        \item and computing $\OT_3(\ots_{2,i,j,o},b)$.
    \end{itemize}
    We assume without loss of generality that (i) $|x'_1| = \dots = |x'_n| = m'$, (ii) in each round $r \in [p]$, a single party broadcasts a single bit computed using circuit $C_r$, (iii) there exists $q$ such that for each $r \in [p]$, $|C_r| = q$, (iv) each $C_r$ is composed on just $\NAND$ gates with fan-in 2, and (v) each party sends an equal number of bits in the execution of $\Pi$. Note that there are no crytographic operations in this third stage.
\end{enumerate}

Now we are ready to describe the transformed conforming protocol $\Phi$. The protocol $\Phi$ will include $T = pq$ actions, and we let $\ell = (m'+k)n + pq$ and $\ell' = pq/n$.

\begin{itemize}
    \item $\pre(1^\secp,i,x_i,\cP_i)$: Run party $P_i$'s first stage algorithm to obtain strings $x'_i \in \{0,1\}^{m'}$ and $y_i \in \{0,1\}^k$. Sample $r_i \gets \{0,1\}^{m'}$ and $s_i \gets (\{0,1\}^{q-1} \| 0)^{p/n}$. (Observe that $s_i$ is an $\ell'$ bit random string such that its $q^{th}$, $2q^{th}$,$\dots$ locations are set to 0). Output \[z_i \coloneqq (x'_i \oplus r_i, y_i, 0^{\ell'}), ~~ v_i \coloneqq (0^{\ell/n},\dots,r_i,0^k,s_i,\dots,0^{\ell/n}), ~~ w_i \coloneqq \{\sigma_{S,i,j,o}\}_{j \neq i, o \in [u]}.\]
    \item $\chck(i,w_i,z_1,\dots,z_n)$: For each $j \neq i$, obtain $y_i$ from $z_i$ and use $y_i$ and $w_i$ to run $P_i$'s second stage algorithm.
    \item We are now ready to desribe the actions $\phi_1,\dots,\phi_T$. For each $r \in [p]$, round $r$ in $\Pi$ is expanded into $q$ actions $\{\phi_j\}_{j \in \{(r-1)q + 1,\dots,rq\}}$. Let $P_i$ be the party that computes the circuit $C_r$ and broadcasts the output bit in round $r$. For each $j$, we set $\phi_j = (i,f,g,h)$ where $f$ and $g$ are the locations in $\state_i$ that the $j^{th}$ gate of $C_r$ is computed on (recall that initially $\state_i$ is set to $(z_1,\dots,z_n) \oplus v_i$). Moreover, we set $h$ to be the first location in $\state_i$ among the locations $(i-1)\ell/n + m + k + 1$ to $i\ell/n$ that has previously not been assigned to an action. Recall from before that on the execution of $\phi_j$, party $P_i$ sets $\state_{i,h} \coloneqq \NAND(\state_{i,f},\state_{i,g})$ and broadcasts $\state_{i,h} \oplus v_{i,h}$ to all parties.
    \item $\post(i,\state_i)$: Gather the local state of $P_i$ and the messages sent by the other parties in $\Pi$ from $\state_i$ and output the output of $\Pi$.
\end{itemize}

Now we need to argue that $\Phi$ preserves the correctness and security properties of $\Pi$. Observe that $\Phi$ is essentially the same protocol as $\Pi$ except that in $\Phi$, some additional bits are sent. Specifically, in addition to the messages that were sent in $\Pi$, in $\Phi$ parties send $z_i$ and $q-1$ additional bits per every bit sent in $\Pi$. Note that these additional bits sent are not used in the computation of $\Phi$. Thus these bits do not affect the functionality of $\Pi$ if dropped. This ensures that $\Phi$ inherits the correctness properties of $\Pi$. Next note that each of these extra bits is masked by a uniformly random and independent bit. This ensures that $\Phi$ achieves the same security properties as $\Pi$. Also note that by construction, for all $t,t' \in [T]$ such that $t \neq t'$, we have that if $\phi_t = (\cdot,\cdot,\cdot,h)$ and $\phi_{t'} = (\cdot,\cdot,\cdot,h')$ then $h \neq h'$, as required. Finally, note that the only algorithms that involve cryptographic operations are $\pre$ and $\chck$, as required.

\end{proof}

\begin{claim}
Assuming a black-box friendly conforming protocol in the shared EPR pair model and a two-round OT protocol in the shared EPR pair model that satisfies simulation-based security and equivocal receiver's security, there exists a black-box two-round MPC for classical functionalities in the shared EPR pair model.
\end{claim}

\begin{proof}
This follows from the compiler presented in \cite[Section 6]{EC:GarSri18a} that takes a conforming protocol and a two-round OT protocol and produces a two-round MPC protocol, by addressing the following slight differenes.

\begin{enumerate}
    \item We start with a conforming protocol where $\pre$ additionally operates on some shared quantum registers, and there is an extra algorithm $\chck$ in the pre-processing phases.
    \item We start with a conforming protocol where there are no cryptographic operations in the computation phase.
    \item We use a two-round OT protocol in the shared EPR model rather than the CRS model.
    \item We prove security against \emph{quantum} adversaries rather than classical adversaries.
\end{enumerate}

First, we note that (1) makes no difference in their construction or proof, since the entire pre-processing phase in the compiled protocol is run exactly as it is run in the conforming protocol. That is, it is run ``outside'' of the garbled protocol used to round-collapse the $T$ actions of the computation phase. Next, because of condition (2) we obtain a resulting two-round MPC protocol that is \emph{black-box}, since only the computation phase is used in a non-black-box way by the compiler. We also note that (3) makes no difference in the construction or proof, since the two-round OT is used in a black-box manner. Finally, all of the simulators and reductions in \cite{EC:GarSri18a} are straight-line black-box and do not use rewinding (which is typical in the CRS model), and as such they carry over immediately to the quantum setting.
\end{proof}

\end{proof}

\subsection{Construction}

In this section, we revisit an OT protocol from \cite{ABKK} and show that it is a black-box friendly two-round OT in the shared EPR pair model. 

\paragraph{Ingredients}
\begin{itemize}
    \item Non-interactive extractable commitment $(\Com,\ExtGen,\Ext)$ in the common random string model (\cref{subsec:commitments}).
    \item A programmable hash function family $\{H_\secp\}_{\secp \in \bbN}$ that is correlation intractable for efficiently verifiable approximate product relations with constant sparsity (\cref{subsec:CI}).
    \item A universal hash function family $\{F_\secp\}_{\secp \in \bbN}$.
\end{itemize}

\paragraph{Parameters}
\begin{itemize}
    \item Security parameter $\secp$.
    \item Size of commitment key $h = h(\secp)$.
    \item Size of correlation intractable hash key $k_1 = k_1(\secp)$.
    \item Size of universal hash key $k_2 = k_2(\secp)$.
    \item Approximation parameter $\alpha = 1/120$.
    \item Number of repetitions in each group $c = 480$.
    \item Sparsity $\rho = \frac{\binom{(1-\alpha)c}{(1/2)c}}{2^c} < \alpha$.
    \item Product parameter $t = t(\secp) = 180^3 \secp \geq \secp / (\alpha-\rho)^3$.
    \item Total number of repetitions $\ell = \ell(\secp) = c \cdot t = O(\secp)$.
    \item CI hash range $\cY^t$, where $\cY$ is the set of subsets of $[c]$ of size $c/2$. We will also parse $T \in \cY^t$ as a subset of $[\ell]$ of size $\ell/2$.
\end{itemize}

The protocol is given in \cref{fig:two-round}. We remark that shared uniformly random strings can be obtained by measuring shared EPR pairs in the same basis, and thus the entire Setup can be obtained with just shared EPR pairs.

\protocol{Two-round OT in the shared EPR pair model}{A black-box friendly two-round OT protocol that can be used to construct two-round MPC in the shared EPR pair model.}{fig:two-round}{
\paragraph{Setup}
\begin{itemize}
    \item $2\ell$ EPR pairs on registers $\{(\regR_{i,b},\regS_{i,b})\}_{i \in [\ell], b \in \{0,1\}}$. Let $\regR \coloneqq \{\regR_{i,b}\}_{i \in [\ell],b \in \{0,1\}}$ and $\regS \coloneqq \{\regS_{i,b}\}_{i \in [\ell],b \in \{0,1\}}$.
    \item Commitment key $\ck \gets \{0,1\}^h$.
    \item Correlation intractable hash key $\hk \gets \{0,1\}^{k_1}$.
\end{itemize}

\paragraph{Protocol}

\begin{itemize}
    \item $M_R(\regR)$: Sample $\theta^R \gets \{0,1\}^\ell$ and for each $i \in [\ell]$, measure registers $\regR_{i,0}$ and $\regR_{i,1}$ in the $\theta^R_i$ basis (where 0 indicates standard basis and 1 indicates Hadamard basis) to obtain bits $v^R_{i,0},v^R_{i,1}$. Output $\sigma_R \coloneqq \{\theta_i^R,v_{i,0}^R,v_{i,1}^R\}_{i \in [\ell]}$.
    \item $M_S(\regS)$: Sample $U$ as a uniformly random subset of $[\ell]$. For $i \in [\ell]$:
    \begin{itemize}
        \item If $i \in U$ sample $\theta^S_i \gets \{0,1\}$ and measure registers $\regS_{i,0}$ and $\regS_{i,1}$ in the $\theta^S_i$ basis to obtain bits $v^S_{i,0},v^S_{i,1}$.
        \item If $i \notin U$, measure register $\regS_{i,0}$ in the standard basis and $\regS_{i,1}$ in the Hadamard basis to obtain bits $v^S_{i,0},v^S_{i,1}$.
    \end{itemize}
    Output $\sigma_S \coloneqq (U,\{\theta_i^S\}_{i \in U},\{v^S_{i,0},v^S_{i,1}\}_{i \in [\ell]})$.
    \item $\OT_1^\C(\sigma_R)$: For each $i \in [\ell]$, compute $\cm_i \coloneqq \Com(\ck,(\theta^R_i,v^R_{i,0},v^R_{i,1});r_i)$, where $r_i \gets \{0,1\}^\secp$ are the random coins used for commitment. Compute $T = H_\secp(\hk,(\cm_1,\dots,\cm_\ell))$. Output $\ots_1^\C \coloneqq (\{\cm_i\}_{i \in [\ell]},T,\{\theta^R_i,v^R_{i,0},v^R_{i,1},r_i\}_{i \in T})$ and $\omega \coloneqq \{\theta^R_i,v^R_{i,\theta^R_i}\}_{i \in \overline{T}}$.
    \item $\OT_1^\NC(b,\omega)$: For $i \in \overline{T}$, compute $d_i \coloneqq b \oplus \theta^R_i$, and output $\ots_1^\NC \coloneqq \{d_i\}_{i \in \overline{T}}$.
    \item $\OT_2^\C(\sigma_S,\ots_1^\C)$: Output $\top$ if:
    \begin{itemize}
        \item $T = H_\secp(\hk,(\cm_1,\dots,\cm_\ell))$,
        \item for all $i \in T$, $\cm_i = \Com(\ck,(\theta^R_i,v^R_{i,0},v^R_{i,1});r_i)$ and,
        \item for all $i \in T \intersect U$ such that $\theta^S_i = \theta^R_i$, $(v^S_{i,0},v^S_{i,1})  = (v^R_{i,0},v^R_{i,1})$,
    \end{itemize}
    and otherwise output $\bot$.
    \item $\OT_2^\NC(\sigma_S,\ots_1,m_0,m_1)$: Sample $s \gets \{0,1\}^{k_2}$, let $V_0$ be the concatenation of bits $\{v^S_{i,d_i}\}_{i \in \overline{T} \setminus U}$, let $V_1$ be the concatenation of bits $\{v^S_{i,d_i \oplus 1}\}_{i \in \overline{T} \setminus U}$, compute $\widetilde{m}_0 \coloneqq m_0 \oplus F_\secp(s,V_0)$, $\widetilde{m}_1 \coloneqq m_1 \oplus F_\secp(s,V_1)$, and output $\ots_2 \coloneqq (s,U,\widetilde{m}_0,\widetilde{m}_1)$.
    \item $\OT_3(\ots_2,b,\omega)$: Output $m_b \coloneqq \widetilde{m}_b \oplus F_\secp(s,V)$, where $V$ is the concatenation of bits $\{v^R_{i,\theta^R_i}\}_{i \in \overline{T} \setminus U}$.
\end{itemize}
}

\subsection{Security}

\begin{importedtheorem}[\cite{ABKK}]
If $\Com$ is replaced by an extractable commitment in the QROM (\cite[Section 5]{ABKK}) and $H_\secp$ is modeled as quantum random oracle, the protocol in \cref{fig:two-round} satisfies standard simulation-based security and equivocal receiver's security in the QROM.
\end{importedtheorem}

\begin{proof}
A three-round version of this protocol was shown to be simulation-secure in the QROM in \cite[Section 7]{ABKK}. In the three-round version, the first round consists of random BB84 states sampled by the sender according to some distribution over basis choices. This is equivalent to imagining that the sender and receiver share EPR pairs and having the sender measure their halves using the bases sampled according to their distribution. This is exactly what is accomplished by the measurement $M_S$ in \cref{fig:two-round}, and thus the simulation-security of the two-round protocol in the shared EPR pair model has an identical proof. Equivocal sender's security follows by defining $\Sim_\EQ$ in the same manner as in the proof of \cref{thm:two-round} below.
\end{proof}

\begin{theorem}\label{thm:two-round}
The protocol in \cref{fig:two-round} satisfies standard simulation-based (\cref{def:two-round-sim-security}) security and equivocal receiver's security (\cref{def:two-round-equiv-security}).
\end{theorem}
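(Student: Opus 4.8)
The plan is to establish the two required properties --- standard simulation-based security (\cref{def:two-round-sim-security}) and equivocal receiver's security (\cref{def:two-round-equiv-security}) --- separately, reusing the structure of the \cite{ABKK} analysis of the (three-round) QROM variant but making two substitutions. First, the QROM extractable commitment is replaced by the non-interactive extractable commitment of \cref{subsec:commitments}, which supports straight-line extraction via $\ExtGen,\Ext$. Second, every appeal to the random oracle / measure-and-reprogram for the Fiat--Shamir hash $H_\secp$ is replaced by an appeal to correlation intractability, following the argument in the proof of \cref{lemma:ROT-sender-security}. The crucial point --- and the reason we avoid sub-exponential assumptions here --- is that the \cite{ABKK} correlations are \emph{bitwise}: the relation fed to $H_\secp$ can be made efficiently verifiable using only the classical data obtained by measuring the \emph{sender's} halves of the EPR pairs (legitimate, since $M_S$ is input-independent and acts only on $\regS$), together with the values extracted from the receiver's commitments. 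Hence the reduction need not guess a PRG seed as in the stringwise one-shot protocol of \cref{fig:NIOT}, and correlation intractability for efficient functions --- thus plain LWE via \cref{impthm:CI2} --- suffices.

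For equivocal receiver's security, I would define $\Sim_\EQ(1^\secp)$ to prepare the $2\ell$ EPR pairs, measure $\regR_{i,0}$ in the standard basis and $\regR_{i,1}$ in the Hadamard basis for each $i$ (recording outcomes $a_{i,0},a_{i,1}$), run $M_R$'s basis sampling and $\OT_1^\C$ to produce $\{\cm_i\}_{i\in[\ell]}$, $T$, and the $T$-openings --- except that for $i \in \overline T$ it sets $\cm_i \gets \Com(\ck,0)$ --- sample $d_i \gets \{0,1\}$ uniformly for $i \in \overline T$, and output $\regS$, $\ots_1 := (\ots_1^\C,\{d_i\}_{i\in\overline T})$, $\omega_0 := \{(d_i,a_{i,d_i})\}_{i\in\overline T}$, and $\omega_1 := \{(1 \oplus d_i,a_{i,1\oplus d_i})\}_{i\in\overline T}$. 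For any fixed $b$, $\omega_b$ has exactly the form of an honest $\omega$ produced by $\OT_1(\sigma_R,b)$ with $\theta^R_i = b \oplus d_i$, and since the distinguisher never receives $\regR$ or the $\overline T$-openings, the only observable that differs from the honest distribution is the content of the unopened commitments $\{\cm_i\}_{i\in\overline T}$; a hybrid over these $|\overline T|$ commitments using hiding (\cref{def:hiding}) closes the gap.

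For standard simulation-based security, the malicious-sender case is essentially syntactic: the simulator runs the honest receiver with dummy bit $0$ to produce $\ots_1$, receives $\ots_2 = (s,U,\widetilde m_0,\widetilde m_1)$ (forwarding $\bot$ to $\cF_\OT$ if $\ots_2$ is malformed), reconstructs $V = \{v^R_{i,\theta^R_i}\}_{i\in\overline T\setminus U}$ from its own $\sigma_R$, and sends $m_\beta := \widetilde m_\beta \oplus F_\secp(s,V)$ for $\beta\in\{0,1\}$ to $\cF_\OT$; since the sender's view is computationally independent of the choice bit by commitment hiding, a short hybrid gives indistinguishability. The malicious-receiver case is the technical heart: the simulator samples $(\ck,\ek)\gets\ExtGen(1^\secp)$, runs the adversary to get $\ots_1$, extracts $(\theta^R_i,v^R_{i,0},v^R_{i,1})\gets\Ext(\ek,\cm_i)$ for all $i$, samples the honest sender's $M_S$-data and runs $\OT_2^\C$ (forwarding $\bot$ if it fails), and otherwise sets $b := \maj_{i\in\overline T}(d_i \oplus \theta^R_i)$, queries $\cF_\OT$ on $b$ to obtain $m_b$, defines $\widetilde m_b$ honestly and $\widetilde m_{1-b}$ uniformly at random, and sends $\ots_2$. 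I would prove this sound via a hybrid sequence analogous to $\cH_0$--$\cH_9$ in the proof of \cref{lemma:ROT-sender-security}: switch $\ck$ to extractable mode; then, using correlation intractability with a bitwise $\alpha$-approximate efficiently-verifiable product relation --- on the $t$ groups of $c$ indices, with per-group predicate ``all checked indices of the group are consistent with the sampled $M_S$-data and enough unchecked indices are inconsistent'' --- argue that conditioned on $\OT_2^\C$ not aborting, all but an $O(1)$-small fraction of $i\in\overline T$ are honest, so $b$ is well-defined and $V = V_b$; and finally invoke quantum leftover hashing (\cref{impthm:privacy-amplification}), using that the cut-and-choose against a random $U$ forces the receiver's state to retain high min-entropy about $\{v^S_{i,1\oplus\theta^R_i}\}_{i\in\overline T\setminus U}$, to conclude that $F_\secp(s,V_{1-b})$ is statistically close to uniform given the adversary's register, so that replacing $\widetilde m_{1-b}$ by a uniform string is undetectable.

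The main obstacle I anticipate is the correlation-intractability step: the relation must be (i) genuinely classical and efficiently verifiable, which requires showing that measuring the sender's halves up front collapses the receiver's registers enough that ``index $i$ is honest vs.\ cheating'' becomes a classical predicate computable from the extracted commitment contents and the sampled $M_S$-data, and (ii) a product relation with sparsity $\rho < \alpha$ matching the fixed parameters $\alpha = 1/120$, $c = 480$, $t = 180^3\secp$, so that \cref{impthm:CI} applies --- all while keeping the reduction's loss polynomial, which is precisely where the bitwise (as opposed to stringwise) structure of the \cite{ABKK} correlations is essential, and what lets this theorem rest on plain rather than sub-exponential LWE.
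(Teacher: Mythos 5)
Your high-level route is right --- malicious-receiver security via commitment extraction, a bitwise efficiently-verifiable product relation for correlation intractability, and quantum leftover hashing against a random cut-and-choose set $U$; and the observation that bitwise correlations avoid the guessing loss that forces sub-exponential assumptions in the one-shot protocol is exactly the point. But both of your sender-side simulators are concretely broken, and in a way that the paper's proof has to work around.

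For the malicious-sender simulator, ``run the honest receiver with dummy bit $0$'' and then unmask \emph{both} $\widetilde m_0,\widetilde m_1$ with the single string $V$ cannot work: the honest receiver measures $\regR_{i,0}$ and $\regR_{i,1}$ both in basis $\theta^R_i$, so it only learns $V_0 = \{v^R_{i,\theta^R_i}\}$ and genuinely does not know $V_1$ (the other-basis outcomes on $\regS$ are uncorrelated with what it measured), so $m_1 := \widetilde m_1 \oplus F_\secp(s,V)$ is garbage. The paper's simulator therefore does not prepare EPR pairs at all: for $i\in\overline T$ it directly prepares $\regS_{i,0}$ as $\ket{v^R_{i,0}}$ and $\regS_{i,1}$ as $H\ket{v^R_{i,1}}$, so it knows both coordinates and can recover $V_0$ and $V_1$ from $\ots_2$. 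This forces it to sample $T$ first and program $\hk \gets \Samp(1^\secp,(\cm_1,\dots,\cm_\ell),T)$, and to switch $\{\cm_i\}_{i\in\overline T}$ to $\Com(\ck,0)$ under hiding --- so the case is not ``essentially syntactic,'' it needs programmability of the CI hash in addition to commitment hiding. Your $\Sim_\EQ$ has two further problems. Measuring $\regR_{i,0}$ in standard and $\regR_{i,1}$ in Hadamard for \emph{every} $i$ is inconsistent with the $T$-openings: the distinguisher holds $\regS$ unmeasured plus $(\theta^R_i,v^R_{i,0},v^R_{i,1})$ for $i\in T$, and can check by measuring $\regS_{i,0},\regS_{i,1}$ in basis $\theta^R_i$; this succeeds with probability $1$ in the real world but fails with constant probability per coordinate where $\theta^R_i$ disagrees with the basis you used. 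And you compute $T = H_\secp(\hk,(\cm_1,\dots,\cm_\ell))$ with the honest commitments and \emph{then} overwrite $\cm_i$ for $i\in\overline T$ with $\Com(\ck,0)$, so $T$ no longer equals the hash of the actual $\ots_1^\C$. Both problems are solved the way the paper solves them: define $\Sim_\EQ$ as a prefix of the malicious-sender simulator, which prepares $\regS$ states directly (in basis $\theta^R_i$ for $i\in T$, mixed std/Hadamard for $i\in\overline T$), samples $T$ first, and programs $\hk$, so no $\regR$ measurement ever happens and $T$ is consistent by construction.
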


By \cref{thm:black-box-friendly}, we immediately have the following corollaries.

\begin{corollary}
There exists a two-round MPC protocol in the shared EPR pair model that is unconditionally-secure in the quantum random oracle model. There exists a black-box two-round MPC protocol in the shared EPR pair model assuming non-interactive extratable commitments and correlation intractability for efficiently searchable relations.
\end{corollary}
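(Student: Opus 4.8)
The plan is to follow the structure of the QROM security proof for the three-round OT protocol of \cite{ABKK} (their Section~7), reinterpreted as a two-round protocol in the shared EPR pair model, while replacing the two uses of the random oracle by plain-model primitives: the extractable commitment is instantiated with a non-interactive extractable commitment whose extractor is straight-line (\cref{def:extractability}), and the Fiat--Shamir hash used to derive the cut-and-choose set $T$ is instantiated with a programmable correlation-intractable hash (\cref{def:programmability}, \cref{impthm:CI}, \cref{impthm:CI2}). As in \cref{thm:one-shot-security}, all simulators are straight-line (no rewinding), which is exactly what lets the resulting protocol be plugged into \cref{thm:black-box-friendly} against quantum adversaries. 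For \textbf{malicious sender} security, the simulator plays the receiver: it samples a random set $T$ up front, measures $\regR_{i,0},\regR_{i,1}$ in basis $\theta_i^R$ and commits honestly to $(\theta_i^R,v^R_{i,0},v^R_{i,1})$ for $i\in T$, commits to garbage for $i\in\overline{T}$ (those commitments are never opened, so hiding applies), and programs $\hk$ via $\Samp$ so that $H_\secp(\hk,(\cm_1,\dots,\cm_\ell))=T$. It sends uniformly random bits $\{d_i\}_{i\in\overline{T}}$, which is identically distributed to an honest first message since each $\theta_i^R$ for $i\in\overline{T}$ is information-theoretically hidden. Crucially it leaves registers $\{\regR_{i,0},\regR_{i,1}\}_{i\in\overline{T}\setminus U}$ unmeasured; after receiving $\ots_2=(s,U,\widetilde m_0,\widetilde m_1)$ it measures $\regR_{i,d_i}$ in basis $d_i$ and $\regR_{i,d_i\oplus 1}$ in basis $d_i\oplus 1$ for each such $i$, forms the strings $V^{(0)},V^{(1)}$ that an honest receiver with choice bit $0$ resp.\ $1$ would have recovered, and sends $(m_0,m_1)=(\widetilde m_0\oplus F_\secp(s,V^{(0)}),\,\widetilde m_1\oplus F_\secp(s,V^{(1)}))$ to $\cF_\OT$ (or $\abort$ if the adversary aborts), outputting the adversary's residual state. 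Indistinguishability is a short hybrid sequence: program $\hk$ (identical, \cref{def:programmability}), switch the $\overline{T}$-commitments to garbage (computational, \cref{def:hiding}), and observe that delaying the $\overline{T}$-measurements and extracting both $V^{(0)},V^{(1)}$ reproduces the real output distribution, using correctness of the EPR correlations and the fact that measuring one half of an EPR pair leaves the other half maximally mixed.

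For \textbf{malicious receiver} security, the simulator runs $\ExtGen$ to get a trapdoor $\ek$, emulates the sender's check $\OT_2^\C$ honestly using genuine measurements on its own halves, and upon receiving $\ots_1$ extracts $(\theta_i^R,v^R_{i,0},v^R_{i,1})\gets\Ext(\ek,\cm_i)$ for all $i$. From the extracted $\theta_i^R$ and the $\{d_i\}_{i\in\overline{T}}$ it reconstructs the receiver's effective choice bit $b^*$ exactly as in the honest analysis, sends $b^*$ to $\cF_\OT$, and uses the returned $m_{b^*}$ together with a uniformly random $\widetilde m_{1-b^*}$. The heart of the argument is showing that, conditioned on $\OT_2^\C$ outputting $\top$, the receiver has only negligible information about the bits $\{v^S_{i,d_i\oplus 1}\}_{i\in\overline{T}\setminus U}$, so by the leftover hash lemma with quantum side information (\cref{impthm:privacy-amplification}), $F_\secp(s,V_{1-b^*})$ is statistically close to uniform and $\widetilde m_{1-b^*}$ can be replaced by a random string. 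This is where correlation intractability enters: fixing the sender's (single-qubit) measurement outcomes, one defines a relation $R$ on inputs $(\cm_1,\dots,\cm_\ell)$ and outputs $T$ that is ``bad'' if all committed bases/values on indices in $T$ are consistent with the sender's check yet too many committed values on $\overline{T}$ are inconsistent; grouping the $\ell=c\cdot t$ indices into $t$ blocks of size $c$ makes $R$ an efficiently-verifiable $\alpha$-approximate product relation with constant sparsity $\rho<\alpha$, so \cref{impthm:CI} and \cref{impthm:CI2} rule out the receiver finding such a $T$. Unlike the string-OT case, here the reduction only needs to correctly \emph{guess a random subset} of the sender's bitwise measurement results (the outcomes on indices in $T\cap U$), which succeeds with inverse-polynomial probability, so no complexity leveraging and no sub-exponential assumption is needed.

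For \textbf{equivocal receiver's security} (\cref{def:two-round-equiv-security}), the simulator $\Sim_\EQ$ prepares the $2\ell$ EPR pairs, samples a random $T$ and uniformly random $\{d_i\}_{i\in\overline{T}}$, commits and measures honestly for $i\in T$, and for $i\in\overline{T}$ commits to garbage and measures $\regR_{i,d_i}$ in basis $d_i$ and $\regR_{i,d_i\oplus 1}$ in basis $d_i\oplus 1$, recording both outcomes. It programs $\hk$ so that $H_\secp(\hk,(\cm_1,\dots,\cm_\ell))=T$, sets $\ots_1$ accordingly, and outputs $\omega_b\coloneqq\{(\theta_i^R=b\oplus d_i),\,v^R_{i,\theta_i^R}\}_{i\in\overline{T}}$ for $b\in\{0,1\}$; this is well-defined precisely because the two openings reveal measurements of \emph{different} registers ($\regR_{i,d_i}$ versus $\regR_{i,d_i\oplus 1}$), each in its own basis, which is physically realizable. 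Indistinguishability of $(\regS,\ots_1,\omega_b)$ from an honest execution with input $b$ follows from hiding (the $\overline{T}$-commitments are never opened), programmability of $H_\secp$, and again the fact that measuring one half of an EPR pair in any basis leaves the other half untouched. Equivocal sender's security referenced in the QROM variant is obtained by reusing this $\Sim_\EQ$.

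I expect the main obstacle to be the malicious-receiver reduction to correlation intractability: carefully defining the bad relation $R$, verifying that after fixing a random subset of the sender's bitwise measurement outcomes it is an efficiently-verifiable $\alpha$-approximate product relation with the stated sparsity, checking that the guess succeeds with only polynomial loss (so that plain CI from \cref{impthm:CI2} suffices rather than its sub-exponential variant), and chaining this with the quantum leftover hash lemma to conclude statistical hiding of the unopened message. The remaining steps are close adaptations of \cite{ABKK} and of the straight-line hybrid bookkeeping already used in the proof of \cref{thm:one-shot-security}.
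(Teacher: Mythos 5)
Your proposal matches the paper's strategy: prove \cref{thm:two-round} (the two-round OT of \cref{fig:two-round} is simulation-secure with equivocal receiver security, via programmability of the CI hash, hiding and straight-line extraction of the commitment, the quantum leftover hash lemma, and a CI reduction on the cut-and-choose), then apply \cref{thm:black-box-friendly}, with the QROM half of the corollary coming from the imported \cite{ABKK} theorem.

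There is, however, a misconception in how you describe the malicious-receiver CI reduction, and carrying it out as written would lead you astray. You say the reduction "only needs to correctly guess a random subset of the sender's bitwise measurement results (the outcomes on indices in $T\cap U$), which succeeds with inverse-polynomial probability." But the reduction \emph{plays the honest sender} and already knows its own bases and outcomes $\{\theta_i^S,v^S_{i,0},v^S_{i,1}\}_{i\in[\ell]}$ — there is nothing to guess, and if you did try to guess $\Theta(\ell)$ bits you would pay $2^{-\Theta(\ell)}$ and be forced back to sub-exponential CI, exactly what you are trying to avoid. The paper's reduction has \emph{no} guessing loss at all. The relation $R[\ek,\{\theta_i^S,v^S_{i,0},v^S_{i,1}\}_{i\in[\ell]}]$ is declared from the sender's actual measurement outcomes, and the ``bitwise'' advantage manifests purely as a \emph{restriction to the consistent indices} (those $i$ with $\theta_i^S=\theta_i^R$), since only there is the sender's outcome informative about the adversary's extracted commitment. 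Each index is independently consistent with probability $1/2$ regardless of the adversary's view, so a Hoeffding bound yields a constant fraction of consistent indices with overwhelming probability; this is only a constant-factor adjustment to the thresholds and sparsity, not a success-probability penalty. That — and not an inverse-polynomial guess — is why plain CI suffices here, in contrast to the string-OT reduction where one must guess a full $\secp$-bit PRG seed up front. Finally, when you argue $F_\secp(s,V_{1-b})$ is statistically close to uniform, make \cref{impthm:small-superposition} explicit alongside \cref{impthm:privacy-amplification}: after the projection onto $\Pi[1/6,\cdot]$ the unopened registers sit in a superposition of at most $2^{\ell/30}$ basis states, and the small-superposition lemma is what converts this into the conditional min-entropy bound the leftover hash lemma consumes.
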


\begin{proof}(Of \cref{thm:two-round})
First, we establish standard simulation-based security via the following two lemmas.

\begin{lemma}
The protocol in \cref{fig:two-round} satisfies simulation-based security against a malicious receiver.
\end{lemma}

\begin{proof}

Let $\{\Adv_\secp\}_{\secp \in \bbN}$ be a QPT adversary corrupting the receiver, which takes as input register $\regA$ of $\{\ket{\psi_\secp}^{\regA,\regD}\}_{\secp \in \bbN}$. Instead of explicitly considering a register $\regX$ holding the honest sender's input, we write their input as classical strings $(m_0,m_1) \in \{0,1\}^\secp$. We define a simulator $\{\Sim_\secp\}_{\secp \in \bbN}$ as follows.\\

\noindent\underline{$\Sim_\secp(\regA)$}
\begin{itemize}
    \item Prepare $2\ell$ EPR pairs on registers $(\regR,\regS)$, sample $(\ck,\ek) \gets \ExtGen(1^\secp)$, and sample $\hk \gets \{0,1\}^{k_1}$.
    \item Run $\sigma_S \gets M_S(\regS)$ as the honest sender.
    \item Run $(\ots_1,\regA') \gets \Adv_\secp(\regA,\regR,\ck,\hk)$.
    \item Run $\OT_2^\C(\sigma_S,\ots_1^\C)$ and if the result is $\bot$ then send $\abort$ to $\cF_\OT$ and output $\regA'$.
    \item Parse $\ots_1$ as $(\{\cm_i\}_{i \in [\ell]},T,\{\theta_i^R,v_{i,0}^R,v_{i,1}^R,r_i\}_{i \in T},\{d_i\}_{i \in \overline{T}})$. For each $i \in \overline{T} \setminus U$, compute $(\theta_i^R,v_{i,0}^R,v_{i,1}^R) \gets \Ext(\ek,\cm_i)$. Then, set $b \coloneqq \maj\{\theta_i^R \oplus d_i\}_{i \in \overline{T} \setminus U}$, query $\cF_\OT$ with $b$ to obtain $m_b$, and set $m_{1-b} \coloneqq 0^\secp$.
    \item Run $\ots_2 \gets \OT_2^\NC(\sigma_S,\ots_1,m_0,m_1)$, run $\regA'' \gets \Adv_\secp(\ots_2,\regA')$, and output $\regA''$.
\end{itemize}

\noindent Now, we define a sequence of hybrids.

\begin{itemize}
    \item $\cH_0$: This is the real distribution $\Pi_{\cF_\OT}[\Adv_\secp,\ket{\psi_\secp}]$.
    \item $\cH_1$: Same as $\cH_0$ except that $\ck$ is sampled as $(\ck,\ek) \gets \ExtGen(1^\secp)$.
    \item $\cH_2$: Same as $\cH_1$ except that we delay the measurement of registers $\{\regS_{i,b}\}_{i \in [\ell],b \in \{0,1\}}$ until the results of measurement are required for the protocol. That is, $\{\regS_{i,0},\regS_{i,1}\}_{i \in T \cap U}$ are measured during $\OT_2^\C$ and $\{\regS_{i,0},\regS_{i,1}\}_{i \in \overline{T}\setminus U}$ are measured during $\OT_2^\NC$ (and the rest of the registers are left unmeasured).
    \item $\cH_3$: Same as $\cH_2$ except that we insert a measurement on the registers $\{\regS_{i,0},\regS_{i,1}\}_{i \in \overline{T}\setminus U}$ that is performed during $\OT_2^\NC$ right before $\{\regS_{i,0}\}_{i \in \overline{T} \setminus U}$ are measured in the standard basis and $\{\regS_{i,0}\}_{i \in \overline{T} \setminus U}$ are measured in the Hadamard basis to obtain $\{v_{i,0}^S,v_{i,1}^S\}_{i \in \overline{T} \setminus U}$. Before specifying this measurement, we introduce some notation.
    \begin{itemize}
        \item For any $i$, let $\regS_i \coloneqq (\regS_{i,0},\regS_{i,1})$, and for $\{\theta_i,v_{i,0},v_{i,1}\}_{i \in \overline{T} \setminus U}$ and $e \in \{0,1\}^{|\overline{T} \setminus U|}$, define
        \begin{align*}&\Pi[e,\{\theta_i,v_{i,0},v_{i,1}\}_{i \in \overline{T} \setminus U}]^{\{\regS_i\}_{i \in \overline{T} \setminus U}}\\ &\coloneqq \bigotimes_{i: e_i = 0}H^{\theta_i}\dyad{v_{i,0},v_{i,1}}{v_{i,0},v_{i,1}}H^{\theta_i} \otimes \bigotimes_{i : e_i =1}\bbI - H^{\theta_i}\dyad{v_{i,0},v_{i,1}}{v_{i,0},v_{i,1}}H^{\theta_i}.\end{align*}
        \item For $\{\theta_i,v_{i,0},v_{i,1}\}_{i \in \overline{T} \setminus U}$ and a constant $\gamma \in [0,1]$, define 
        \[\Pi[\gamma,\{\theta_i,v_{i,0},v_{i,1}\}_{i \in \overline{T} \setminus U}]^{\{\regS_i\}_{i \in \overline{T} \setminus U}} \coloneqq \sum_{e: \{0,1\}^{|\overline{T} \setminus U|} : \hw(e) < \gamma |\overline{T} \setminus U|} \Pi[e,\{\theta_i,v_{i,0},v_{i,1}\}_{i \in \overline{T} \setminus U}]^{\{\regS_i\}_{i \in \overline{T} \setminus U}}.\]
    \end{itemize}
    
    Now, before the $\{\regS_i\}_{i \in \overline{T}\setminus U}$ are measured according to the protocol, compute $\{(\theta_i^R,v_{i,0}^R,v_{i,1}^R) \gets \Ext(\ek,\cm_i)\}_{i \in \overline{T} \setminus U}$ and attempt to project $\{\regS_i\}_{i \in \overline{T}\setminus U}$ onto
    
    \[\Pi[1/6,\{\theta_i^R,v_{i,0}^R,v_{i,1}^R\}_{i \in \overline{T} \setminus U}],\] and abort and output $\bot$ if this projection fails.
    \item $\cH_4$: Same as $\cH_3$ except that $m_{1-b}$ is set to $0^\secp$, where $b \coloneqq \maj\{\theta_i^R \oplus d_i\}_{i \in \overline{T} \setminus U}$.
    \item $\cH_5$: Same as $\cH_4$ except that the measurement introduced in $\cH_3$ is removed, and the measurement of registers $\{\regS_{i,b}\}_{i \in [\ell],b \in \{0,1\}}$ are performed during $M_S$ as in the honest protocol. For any honest sender inputs $m_0,m_1 \in \{0,1\}$, this is the simulated distribution $\widetilde{\Pi}_{\cF_\OT}[\Sim_\secp,\ket{\psi_\secp}]$.
\end{itemize}

The following sequence of claims completes the proof of the lemma.

\begin{claim}
$\cH_0 \approx_c \cH_1$
\end{claim}

\begin{proof}
This follows directly from the extractability of the commitment scheme (\cref{def:extractability}).
\end{proof}

\begin{claim}
$\cH_1 \equiv \cH_2$
\end{claim}

\begin{proof}
Delaying the honest party's measurements has no effect on the distribution seen by the adversary.
\end{proof}

\begin{claim}
$\cH_2 \approx_s \cH_3$
\end{claim}

\begin{proof}
By Gentle Measurement, it suffices to show that the projection introduced in $\cH_2$ succeeds with probability $1-\negl(\secp)$. So, towards contradiction, assume that the projection fails with non-negligible probability. We will eventually use this assumption to break the correlation intractability of $H$. First, consider the following experiment.\\

\noindent\underline{$\Exp_1$}

\begin{itemize}
    \item Prepare $2\ell$ EPR pairs on registers $(\regR,\regS)$, sample $(\ck,\ek) \gets \ExtGen(1^\secp)$, and sample $\hk \gets \{0,1\}^{k_1}$.
    \item Sample $U$ as a uniformly random subset of $[\ell]$ and sample $\theta_i^S \gets \{0,1\}$ for each $i \in [\ell]$.
    \item Run $(\ots_1,\regA') \gets \Adv_\secp(\regA,\regR,\ck,\hk)$, and let $\{\cm_i\}_{i \in [\ell]}$ be the commitments that are part of $\ots_1$.
    \item For each $i \in [\ell]$, compute $(\theta_i^R,v_{i,0}^R,v_{i,1}^R) \gets \Ext(\ek,\cm_i)$. Say that index $i$ is ``consistent'' if $i \in U$ and $\theta_i^S = \theta_i^R$.
    \item For each $i \in [\ell]$, measure $(\regS_{i,0},\regS_{i,1})$ in basis $\theta_i^R$ to obtain bits $(v_{i,0}^S,v_{i,1}^S)$. Say that index $i$ is ``correct'' if $(v_{i,0}^S,v_{i,1}^S) = (v_{i,0}^R,v_{i,1}^R)$.
    \item Output 1 if (i) all consistent $i \in T$ are correct, and (ii) at least $1/6$ fraction of $i \in \overline{T} \cap U$ are incorrect. 
\end{itemize}

We claim that $\Pr[\Exp_1 \to 1] = \nonnegl(\secp)$. This nearly follows by assumption that the measurement introduced in $\cH_3$ rejects with non-negligible probability, except for the following differences. One difference from $\cH_3$ is that to determine condition (i) in $\Exp_1$, we are using extracted $\{\theta_i^R\}_{i \in T \cap U}$ bases rather than the bases opened by the adversary. However, this introduces a negligible difference due to the extractability of the commitment scheme. Another difference is that to determine condition (ii), we are measuring registers $i \in \overline{T} \cap U$ rather registers $i \in \overline{T} \setminus U$. However, since each $i$ is chosen to be in $U$ with probability 1/2 indepedent of the adversary's view, this produces the same distribution. Next, consider the following procedure.\\

\noindent \underline{$\Exp_2$}

\begin{itemize}
    \item Prepare $2\ell$ EPR pairs on registers $(\regR,\regS)$, sample $(\ck,\ek) \gets \ExtGen(1^\secp)$, and sample $\hk \gets \{0,1\}^{k_1}$.
    \item Sample $U$ as a uniformly random subset of $[\ell]$ and sample $\theta_i^S \gets \{0,1\}$ for each $i \in [\ell]$.
    \item For $i \in [\ell]$, measure $(\regS_{i,0},\regS_{i,1})$ in the $\theta_i^S$ basis to obtain bits $(v_{i,0}^S,v_{i,1}^S)$.
    \item Run $(\ots_1,\regA') \gets \Adv_\secp(\regA,\regR,\ck,\hk)$, and let $\{\cm_i\}_{i \in [\ell]}$ be the commitments that are part of $\ots_1$.
    \item For each $i \in [\ell]$, compute $(\theta_i^R,v_{i,0}^R,v_{i,1}^R) \gets \Ext(\ek,\cm_i)$.
    \item Output 1 if (i) all consistent $i \in T$ are correct, (ii) at least $1/6$ fraction of consistent $i \in \overline{T}$ are incorrect, and (iii) at least 1/5 fraction of $i \in \overline{T}$ are consistent.
\end{itemize}

We claim that $\Pr[\Exp_2 \to 1] = \nonnegl(\secp)$. To establish (ii), note that conditioned on at least $1/6$ fraction of $i \in \overline{T} \cap U$ being incorrect, we have that the expected fraction of \emph{consistent} $i \in \overline{T}$ being incorrect is at least $1/6$, and (iii) follows from a straightforward application of Hoeffding's inequality. Also note that we are measuring the $\regS$ registers in the $\theta_i^S$ bases rather than the $\theta_i^R$ bases, but this make no difference since our claims are only about consistent indices $i$ (where $\theta_i^S = \theta_i^R$). Now, we will show that this experiment can be used to break the correlation intractability of $H$, but first we introduce some notation.

\begin{itemize}
    \item For each $(\ek,\{\theta_i^S,v_{i,0}^S,v_{i,1}^S\}_{i \in [\ell]})$, define the relation $R[\ek,\{\theta_i^S,v_{i,0}^S,v_{i,1}^S\}_{i \in [\ell]}]$ as follows. Recalling that $\ell = c \cdot t$, we will associate each $i \in [\ell]$ with a pair $(\iota,\kappa)$ for $\iota \in [t], \kappa \in [c]$. Also, for each set of strings $\{\cm_i\}_{i \in [\ell]}$, we fix $(\theta_i^R,v_{i,0}^R,v_{i,1}^R) \coloneqq  \Ext(\ek,\cm_i)$ for each $i \in [\ell]$. Then the domain will consist of strings $\{\cm_i\}_{i \in [\ell]}$ such that (i) $|i : (\theta_i^R,v_{i,0}^R,v_{i,1}^R) = (\theta_i^S,v_{i,0}^S,v_{i,1}^S)| \leq (1-1/60)\ell$ and (ii) for each $\iota \in [t], |\kappa : (\theta_{(\iota,\kappa)}^S,v_{(\iota,\kappa),0}^S,v_{(\iota,\kappa),1}^S) = (\theta_{(\iota,\kappa)}^R,v_{(\iota,\kappa),0}^R,v_{(\iota,\kappa),1}^R)| \geq (1/2)c$.
    \item For each $\{\cm_i\}_{i \in [\ell]}$ in the domain of $R[\ek,\{\theta_i^S,v_{i,0}^S,v_{i,1}^S\}_{i \in [\ell]}]$, define the sets $\{S_{\iota,\{\cm_i\}_{i \in [\ell]}}\}_{\iota \in [t]}$ as follows. If $(1/2)c \leq |\kappa : (\theta_{(\iota,\kappa)}^S,v_{(\iota,\kappa),0}^S,v_{(\iota,\kappa),1}^S) = (\theta_{(\iota,\kappa)}^R,v_{(\iota,\kappa),0}^R,v_{(\iota,\kappa),1}^R)| \leq (1-1/120)c$, let $S_{\iota,\{\cm_i\}_{i \in [\ell]}}$ consist of subsets $C \subset [c]$ such that for all $\kappa \in C, (\theta_{(\iota,\kappa)}^S,v_{(\iota,\kappa),0}^S,v_{(\iota,\kappa),1}^S) = (\theta_{(\iota,\kappa)}^R,v_{(\iota,\kappa),0}^R,v_{(\iota,\kappa),1}^R)$. Otherwise, let $S_{\iota,\{\cm\}_{i \in [\ell]}} = \emptyset$. 
    \item Define the set $R[\ek,\{\theta_i^S,v_{i,0}^S,v_{i,1}^S\}_{i \in [\ell]}]_{\{\cm_i\}_{i \in [\ell]}}$ to consist of all $y = (C_1,\dots,C_t)$ such that $C_\iota \in S_{\iota,\{\cm_i\}_{i \in [\ell]}}$ for all $\iota$ such that $S_{\iota,\{\cm_i\}_{i \in [\ell]}} \neq \emptyset$. Noting that there are always at least $\alpha = 1/120 $ fraction of $\iota \in [t]$ such that $S_{\iota,\{\cm_i\}_{i \in [\ell]}} \neq \emptyset$, we see that $R[\ek,\{\theta_i^S,v_{i,0}^S,v_{i,1}^S\}_{i \in [\ell]}]$ is an $\alpha$-approximate efficiently verifiable product relation with sparsity $\rho = \binom{(1-\alpha)c}{(1/2)c}/2^c < \alpha$.
\end{itemize}

Now, whenever $\Exp_2 = 1$, it must be the case that $\{\cm_i\}_{i \in [\ell]}$ is in the domain of $R[\ek,\{\theta_i^S,v_{i,0}^S,v_{i,1}^S\}_{i \in [\ell]}]$, and $T \in R[\ek,\{\theta_i^S,v_{i,0}^S,v_{i,1}^S\}_{i \in [\ell]}]_{\{\cm_i\}_{i \in [\ell]}}$. Thus, we can break correlation intractability as follows. Begin running $\Exp_2$ until right before $\Adv_\secp$ is initialized, except that we don't sample $\hk$. Instead, declare the relation $R[\ek,\{\theta_i^S,v_{i,0}^S,v_{i,1}^S\}_{i \in [\ell]}]$, and receive $\hk$ from the correlation intractability challenger. Then, continue running $\Exp_2$ until $\{\cm_i\}_{i \in [\ell]}$ is obtained, and output this to the challenger. The above analysis shows that this breaks correlation intractability for $R[\ek,\{\theta_i^S,v_{i,0}^S,v_{i,1}^S\}_{i \in [\ell]}]$.

\end{proof}

\begin{claim}
$\cH_3 \approx_s \cH_4$
\end{claim}

\begin{proof}

First, we note that by a standard Hoeffding inequality, it holds that $\Pr[|\overline{T} \setminus U| \geq \ell/5] = 1-\negl(\secp)$, so we will assume that this is the case. Now, in $\cH_3$ we are guaranteed that the state on registers $\{\regS_{i,0},\regS_{i,1}\}_{i \in \overline{T} \setminus U}$ is in the image of \[\Pi[1/6,\{\theta_i^R,v_{i,0}^R,v_{i,1}^R\}_{i \in \overline{T} \setminus U}]\] right before the $\{\regS_{i,0}\}_{i \in \overline{T} \setminus U}$ are measured in the standard basis and $\{\regS_{i,1}\}_{i \in \overline{T} \setminus U}$ are measured in the Hadamard basis to obtain $\{v_{i,0}^S,v_{i,1}^S\}_{i \in \overline{T} \setminus U}$. Thus, setting $b \coloneqq \maj\{\theta_i^R \oplus d_i\}_{i \in \overline{T} \setminus U}$, we see that registers $\{\regS_{i,d_i \oplus b \oplus 1}\}_{i \in \overline{T} \setminus U}$ are in a superposition of at most $2^{(1/6)(\ell/5)} = 2^{\ell/30}$ states in the $\{\theta_i^R\}_{i \in \overline{T} \setminus U}$ basis. Moreover, at least $1/2-1/6 = 1/3$ fraction of these registers will be measured in the conjugate basis to obtain the bits $\{v_{i,d_i \oplus b \oplus 1}^S\}_{i \in \overline{T} \setminus U}$ that comprise $V_{1-b}$. Thus, by \cref{impthm:small-superposition} and \cref{impthm:privacy-amplification}, the string $F_\secp(s,V_{1-b})$ has statistical distance at most $2^{-\frac{1}{2}(\ell/15-\ell/30 - \secp)} < 2^{-\secp}$ conditioned on the adversary's view. Thus, replacing $m_{1-b}$ with $0^\secp$ results in a statistically close distribution.
\end{proof}

\begin{claim}
$\cH_4 \approx_s \cH_5$
\end{claim}

\begin{proof}
Same argument as $\cH_2 \approx_s \cH_3$.
\end{proof}

\end{proof}

\begin{lemma}
The protocol in \cref{fig:two-round} satisfies simulation-based security and equivocal receiver's security against a malicious sender.
\end{lemma}

\begin{proof}
Let $\{\Adv_\secp\}_{\secp \in \bbN}$ be a QPT adversary corrupting the sender, which takes as input register $\regA$ of $\{\ket{\psi_\secp}^{\regA,\regD}\}_{\secp \in \bbN}$.  Instead of explicitly considering a register $\regX$ holding the honest receiver's input, we write their input as a bit $b \in \{0,1\}$. We define a simulator $\{\Sim_\secp\}_{\secp \in \bbN}$ as follows.\\

\noindent\underline{$\Sim_\secp(\regA)$}
\begin{itemize}
    \item Sample $\ck \gets \{0,1\}^h$.
    \item Sample $T \subset [\ell]$ as a uniformly random sequence of $t$ subsets of $[c]$ of size $c/2$. 
    \item For $i \in [\ell]$, sample $\theta_i^R,v_{i,0}^R,v_{i,1}^R \gets \{0,1\}^3$.
    \item For $i \in T$, sample $r_i \gets \{0,1\}^\secp$ and set $\cm_i \coloneqq \Com(\ck,(\theta_i^R,v_{i,0}^R,v_{i,1}^R);r_i)$ and for $i \in \overline{T}$, set $\cm_i \gets \Com(\ck,0)$.
    \item For $i \in T$, initialize register $\regS_{i,0}$ to $H^{\theta_i^R}\ket{v_{i,0}^R}$ and $\regS_{i,1}$ to $H^{\theta_i^R}\ket{v_{i,1}^R}$, and for $i \in \overline{T}$, initialize $\regS_{i,0}$ to $\ket{v_{i,0}^R}$ and $\regS_{i,1}$ to $H\ket{v_{i,1}^R}$.
    \item Sample $\hk \gets \Samp(1^\secp,(\cm_1,\dots,\cm_\ell),T)$.
    \item Set $\ots_1^\C \coloneqq (\{\cm_i\}_{i \in [\ell]},T,\{\theta_i^R,v_{i,0}^R,v_{i,1}^R\}_{i \in T})$ and $\ots_1^\NC \coloneqq \{\theta_i^R\}_{i \in \overline{T}}$.
    \item Run $(\ots_2,\regA') \gets \Adv_\secp(\regA,\{\regS_{i,0},\regS_{i,1}\}_{i \in [\ell]},\ck,\hk,\ots_1^\C,\ots_1^\NC)$.
    \item Let $V_0$ be the concatenation of bits $\{v_{i,\theta_i^R}^R\}_{i \in \overline{T} \setminus U}$ and $V_1$ be the concatenation of bits $\{v_{i,\theta_i^R\oplus 1}^R \}_{i \in \overline{T} \setminus U}$, and compute $m_0 \coloneqq \widetilde{m}_0 \oplus F(s,V_0)$ and $m_1 \coloneqq \widetilde{m}_1 \oplus F(s,V_1)$. Send $(m_0,m_1)$ to the ideal functionality $\cF_\OT$ and output $\regA'$.
\end{itemize}

\noindent Now, we define a sequence of hybrids, and argue indistinguishability between each adjacent pair.

\begin{itemize}
    \item $\cH_0$: This is the real distribution $\Pi_{\cF_\OT}[\Adv_\secp,\ket{\psi_\secp}]$.
    \item $\cH_1$: Same as $\cH_0$ except that $T \subset [\ell]$ is sampled as described in the simulator, and the hash key $\hk$ is sampled at the end of the Sender's computation by $\Samp(1^\secp,(\cm_1,\dots,\cm_\ell),T)$. This is the same distribution as $\cH_0$, which follows from programmability of the correlation-intractable hash function (\cref{def:programmability}).
    \item $\cH_2$: Same as $\cH_1$ except that $\{\cm_i \gets \Com(\ck,0)\}_{i \in \overline{T}}$ are sampled as commitments to 0. Computational indistinguishability from $\cH_1$ follows directly from hiding of the comitment scheme (\cref{def:hiding}).
    \item $\cH_3$: Same as $\cH_2$ except that for $i \in \overline{T}$, register $\regR_{i,0}$ is measured in the standard basis to obtain $v_{i,0}^R$ and register $\regR_{i,1}$ is measured in the Hadamard basis to obtain $v_{i,1}^R$. This is the same distribution as $\cH_2$ since we are only changing the way we sample the bits $\{v_{i,\theta_i^R \oplus 1}\}_{i \in \overline{T}}$, which are outside the adversary's view.
    \item $\cH_4$: Same as $\cH_3$ except that $d_i \coloneqq \theta_i^R$ for $i \in \overline{T}$, and $V$ is set to the concatentation of bits $\{v_{i,\theta_i^R\oplus b}^R\}_{i \in \overline{T} \setminus U}$. This is the same distribution as $\cH_3$ since in $\cH_3$, the bits  $\{\theta_i^R\}_{i \in \overline{T}}$ are sampled uniformly at random and are only used to define the $\{d_i\}_{i \in \overline{T}}$. Also note that for any honest receiver input $b \in \{0,1\}$, this is the same as the simulated distribution $\widetilde{\Pi}_{\cF_\OT}[\Sim_\secp,\ket{\psi_\secp}]$, which completes the proof.
\end{itemize}

\end{proof}

Finally, we define $\Sim_\EQ$ to run the first part of the simulator above, obtaining $\regS \coloneqq (\{\regS_{i,0},\regS_{i,1}\}_{i \in [\ell]},\ck,\hk)$ and $\ots \coloneqq (\ots_1^\C,\ots_1^\NC)$, and setting $\omega_0 \coloneqq \{\theta_i^R,v_{i,\theta_i^R}^R\}_{i \in \overline{T}}, \omega_1 \coloneqq \{\theta_i^R,v_{i,\theta_i^R \oplus 1}^R\}_{i \in \overline{T}}$. The arguments above imply that this definition of $\Sim_\EQ$ establishes \emph{equivocal receiver's security} of the protocol.

\end{proof}

\section*{Acknowledgments}
D. Khurana was supported in part by by NSF CAREER CNS-2238718, DARPA SIEVE
and a gift from Visa Research. This material is based upon work supported by
the Defense Advanced Research Projects Agency through Award HR00112020024. A. Srinivasan is supported in part by a SERB startup grant and Google India Research Award.

\fi

\ifsubmission
\bibliographystyle{plain}
\else
\bibliographystyle{alpha}
\fi

\bibliography{abbrev3,crypto,main,snarg-refs}

\ifsubmission
\newpage
\begin{center}
\textbf{\Huge Supplementary Material}
\end{center}
\vspace{2em}
\else
\fi

\ifsubmission
\appendix

\else\fi
\end{document}